\documentclass[a4paper,11pt]{amsart}
\usepackage{amsmath,amssymb,amsthm}
\usepackage[a4paper,margin=2cm]{geometry}
\usepackage{mathtools,stmaryrd}
\usepackage{url}
\usepackage{color}
\usepackage{enumitem}
\usepackage[all]{xy}
\usepackage{todonotes}

\usepackage{tikz}
\usetikzlibrary{arrows.meta, positioning, shapes.geometric}

\usepackage[colorlinks,linkcolor=blue, citecolor=blue]{hyperref}

\usepackage[style=alphabetic,url=false,doi=true,giveninits=true,isbn=false,maxbibnames=10]{biblatex}
\AtEveryBibitem{\clearlist{language}}
\AtEveryBibitem{\iffieldequalstr{entrykey}{Ng-Introduction}{\hypersetup{urlcolor=red}}{}}

\theoremstyle{plain}
\newtheorem{theorem}{Theorem}[section]

\newtheorem{corollary}[theorem]{Corollary}
\newtheorem{lemma}[theorem]{Lemma}

\newtheorem{proposition}[theorem]{Proposition}

\theoremstyle{definition}

\theoremstyle{remark}
\newtheorem{remark}[theorem]{Remark}

\newcommand{\Res}{\mathop{\mathrm{Res}}}

\title{Augmentation curve for torus knots}

\author{Alexander Hock}

\address{University of Geneva}

\email{Alexander.Hock@unige.ch}

\author{Sergey Shadrin}

\address{University of Amsterdam}

\email{S.Shadrin@uva.nl}

\begin{document}

\begin{abstract} We prove a version of topological recursion for the correlation differentials of colored HOMFLY-PT polynomials of torus knots on the corresponding augmentation curves, following the proposal of Gu, Jockers, Klemm, and Soroush. To this end, we explicitly construct the spectral curve from the so-called auxiliary curve of Brini, Eynard, and Mariño and prove that it coincides with the augmentation curve for every torus knot with coprime $(r,s)$. Finally, our version of topological recursion proves the
higher-genus closed-string prediction of the Aganagic--Vafa
conjecture for all torus knot augmentation curves.
\end{abstract}

\maketitle

\tableofcontents

\section{Introduction}

The colored HOMFLY-PT polynomials of torus $(r,s)$ knots are essentially combinatorial objects that can be computed with the Rosso-Jones~\cite{RossoJones} formula very explicitly. The Rosso-Jones formulas can be assembled into the \emph{Ooguri-Vafa partition function}~\cite{OoguriVafa}. Its cut-and-join representation was studied in~\cite{MirMorSle} leading to a free-fermion representation developed in~\cite{DPSS-HOMFLY}. In the free-fermion representation it is natural to consider the so-called \emph{extended Ooguri-Vafa partition function}~\cite{DPSS-HOMFLY}, which is a hypergeometric KP tau function. 

The data of both ordinary and extended Ooguri-Vafa partition functions can be efficiently stored in the so-called \emph{correlation differentials}, $\{\omega^{(g)}_{n}\}$ and $\{\omega^{\mathsf{ext},(g)}_{n}\}$, respectively; here $g\geq 0$ and $n\geq 1$. These systems of differentials are initially defined (at least, in the stable range $2g-2+n>0$ --- the cases $(g,n)=(0,1)$ and $(0,2)$ are a bit special) as expansions in formal power series. This raises a question whether these differentials, and hence the colored HOMFLY-PT polynomials of torus knots, can be controlled by \emph{topological recursion}~\cite{EynOra} on some underlying spectral curve. 

There are two answers to this question, one fully established and one conjectural, that involve the so-called \emph{auxiliary curve}~\cite{BEM11} and \emph{augmentation curve}~\cite{GJKS}, respectively. The goal of this paper is to turn the second answer proposed in~\cite{GJKS} as a conjecture into a mathematical theorem. To proceed with the explanation of the results of this paper, let us introduce the key objects.

\subsection{Colored HOMFLY-PT polynomial} The colored
HOMFLY-PT polynomial of the $(r,s)$ torus knot is given by the Rosso--Jones formula~\cite{RossoJones} in the so-called spectral  framing (cf.~\cite[Section 3]{DPSS-HOMFLY}) as
\begin{equation}\label{eq:Rosso-Jones}
	H_\lambda^{r,s}(Q^{-1/2},e^{\hbar/2})
	=Q^{-r|\lambda|/2}
	\sum_{\mu\vdash s|\lambda|}c_\lambda^\mu
	e^{\frac rs\hbar\sum_{(i,j)\in\mu}(i-j)}
	s_\mu\left(\frac{\hat q_1}{\hbar},\frac{\hat q_2}{\hbar},\ldots\right),
\end{equation}
where $(i,j)\in \mu$ denotes a box of the Young diagram associated with the partition $\mu$ with column index $i$ and row index $j$, defining the content of a box by $i-j$,
\begin{align}
\frac{\hat q_k}{\hbar}
	=\frac{Q^{-k/2}-Q^{k/2}}{e^{k\hbar/2}-e^{-k\hbar/2}},\ k\geq 1,    
\end{align}
and $c_\lambda^\mu$ are the Adams coefficients defined in terms of the Schur functions $s_\lambda$ as 
\begin{align}
	s_\lambda(p_s,p_{2s},\ldots)
	=\sum_{\mu\vdash s|\lambda|}c_\lambda^\mu s_\mu(p).
\end{align}
The Ooguri-Vafa partition function is defined as:
\begin{align}\label{eq:OV}
	Z(p)
	=\sum_\lambda Q^{r|\lambda|/2}
	H_\lambda^{r,s}(Q^{-1/2},e^{\hbar/2})s_\lambda(p).
\end{align}
The differentials $\omega^{(g)}_n$, $g\geq 0$, $n\geq 1$, that retain the information about the HOMFLY-PT polynomials are defined as formal power series in the variables $\tilde{X}_1,\dots,\tilde{X}_n$:
\begin{equation}\label{eq:omega-g-n}
	\omega^{(g)}_n
	=\sum_{m_1,\ldots,m_n\geq1}
	\left.[\hbar^{2g-2+n}]
	\frac{\partial^n\log Z(p)}
	{\partial p_{m_1}\cdots\partial p_{m_n}}\right|_{p=0}
	\prod_{i=1}^n m_i\tilde X_i^{m_i}
	\frac{d\tilde X_i}{\tilde X_i} + \delta_{(g,n)=(0,2)}\frac{d\tilde X_1 d\tilde X_2}{(\tilde{X}_1-\tilde{X}_2)^2}.
\end{equation}
Here and below by $[\hbar^a]$ we mean the operator of extraction of the coefficient of $\hbar^a$. 

\subsection{Fock space formalism for the correlation differentials} 

We use the charge $0$ bosonic Fock space with Schur basis $v_\lambda$ and
\begin{align}
	[J_k,J_l]=k\delta_{k+l,0},\qquad
	J_k|0\rangle=0,\qquad \langle0|J_{-k}=0\quad(k>0).
\end{align}
Define the following two operators, both Laurent series in $\hbar$ (more precisely, the second operator is expanded with respect to the energy grading with coefficients being Laurent series in $\hbar$):
\begin{align}\label{eq:D}
	\mathcal {D}\colon v_\lambda \mapsto 
	&\exp\Big(\frac rs\hbar\sum_{(i,j)\in\lambda}(i-j)\Big)v_\lambda, \qquad 
	\mathcal{J} \coloneqq \exp\Big(\sum_{k\geq1}\frac{\hat q_kJ_{-k}}{k\hbar}\Big).
\end{align}
The Ooguri-Vafa partition function can equivalently be written as
\begin{align}
    Z(p)=\big\langle0\big|
	\exp\bigg(\sum_{m\geq 1}\frac{J_{sm}p_m}{m}\bigg)\mathcal{DJ}
	\big|0\big\rangle,
\end{align}
giving rise to the definition of the disconnected $n$-point function $W^\bullet_n$ as the following vacuum expectation value expression:
\begin{align}\label{eq:correlator W}
	W^\bullet_n(\tilde X_1,\ldots,\tilde X_n)
	\coloneqq \sum_{m_1,\ldots,m_n\geq1}
	\Big(\prod_{i=1}^n\tilde X_i^{m_i}\Big)
	\big\langle0\big|
	\Big(\prod_{i=1}^nJ_{s m_i}\Big)\mathcal{DJ}
	\big|0\big\rangle.
\end{align}
The connected $n$-point function is obtained by the inverse moment-cumulant formula:
\begin{equation}\label{eq:connected-from-disconnected}
	W_n(\tilde X_1,\dots,\tilde X_n)
	=\sum_{l=1}^n\frac{(-1)^{l-1}}{l}
	\sum_{\substack{I_1\sqcup\cdots\sqcup I_l=\llbracket n\rrbracket\\
			I_j\ne\emptyset}}
	\prod_{j=1}^l W^\bullet_{|I_j|}(\tilde X_{I_j}).
\end{equation}
Here the inner sum goes over ordered collections of nonempty disjoint sets.
Equivalently, one can sum over unordered set partitions with coefficient
$(-1)^{l-1}(l-1)!$. Note that $W_n$ are formal power series in $\tilde X_i$, $i=1,\dots, n$, $n\geq 1$. For $g\geq 0$, $n\geq 1$, we have
\begin{equation}\label{eq:omega-Fock}
	\omega^{(g)}_n
	\coloneqq[\hbar^{2g-2+n}]W_n
	\prod_{i=1}^n\frac{d\tilde X_i}{\tilde X_i} + \delta_{(g,n)=(0,2)}\frac{d\tilde X_1 d\tilde X_2}{(\tilde{X}_1-\tilde{X}_2)^2}.
\end{equation}

\subsection{Extended correlation differentials} The extended Ooguri-Vafa partition function is in terms of Fock space vacuum expectation value given by
\begin{align}
    Z^{\mathsf{ext}}(p)=\big\langle0\big|
	\exp\bigg(\sum_{m\geq 1}\frac{J_{m}p_m}{m}\bigg)\mathcal{DJ}
	\big|0\big\rangle
\end{align}
without $s$ in $J_{m}$ of the exponential. The extended correlation differentials are defined as 
\begin{equation}
	\omega^{\mathsf{ext},(g)}_n
	\coloneqq[\hbar^{2g-2+n}]W^{\mathsf{ext}}_n
	\prod_{i=1}^n\frac{dX_i}{X_i} + \delta_{(g,n)=(0,2)}\frac{dX_1 dX_2}{({X}_1-{X}_2)^2},
\end{equation}
where $W^{\mathsf{ext}}_n(X_1,\dots,X_n)$ are formal power series obtained by the inverse moment-cumulant formula from
\begin{align}\label{eq:extendedcorrelator W}
	W^{\mathsf{ext},\bullet}_n(X_1,\dots,X_n)
	\coloneqq \sum_{m_1,\ldots,m_n\geq1}
	\Big(\prod_{i=1}^n X_i^{m_i}\Big)
	\big\langle0\big|
	\Big(\prod_{i=1}^nJ_{m_i}\Big)\mathcal{DJ}
	\big|0\big\rangle
\end{align}
which itself is derived from the extended partition function. So equivalently, the extended correlation differentials can be written as
\begin{equation}
	\omega^{\mathsf{ext},(g)}_n
	=\sum_{m_1,\ldots,m_n\geq1}
	\left.[\hbar^{2g-2+n}]
	\frac{\partial^n\log Z^{\mathsf{ext}}(p)}
	{\partial p_{m_1}\cdots\partial p_{m_n}}\right|_{p=0}
	\prod_{i=1}^n m_i X_i^{m_i}
	\frac{d X_i}{ X_i} + \delta_{(g,n)=(0,2)}\frac{d X_1 d X_2}{(X_1-X_2)^2}.
\end{equation}

The extended correlation differentials satisfy the topological recursion~\cite{EynOra} on $\mathbb{C}\mathrm{P}^1$ with the standard Bergman kernel $B(z_1,z_2) = \frac{dz_1 dz_2}{(z_1-z_2)^2}$ in the global affine coordinate $z$, with the functions $X(z) = \exp(x(z))$ and  $y(z)$ given by
\begin{align}
\label{eq:X}
X(z) & = z\exp\left(-\frac rs \left( \log \left(1-Q^{\frac 12} z \right) -\log \left(1-Q^{-\frac 12} z \right) \right)\right) ; 
\\  \label{eq:y}
y(z) &= \log \left(1-Q^{\frac 12} z \right) -\log \left(1-Q^{-\frac 12} z \right).
\end{align}
This means the following properties (see e.g.~\cite{BEO,BS}):
	\begin{enumerate}
		\item The differential $\omega^{\mathsf{ext},(0)}_1(X)$ is the expansion in $X$ used as a local coordinate at $z=0$ of $y(z)dx(z)$.  In the same sense, $\omega^{\mathsf{ext},(0)}_2(X_1,X_2)$ is the expansion of $B(z_1,z_2)$, and for  $2g-2+n>0$ the differentials $\omega^{\mathsf{ext},(g)}_n$ are expansions of globally defined meromorphic $n$-differentials on $(\mathbb{C}\mathrm{P}^1)^n$. In all these cases we denote the corresponding globally defined differentials still by $\omega^{\mathsf{ext},(g)}_n$, abusing notation. 
		\item For $2g-2+n>0$ the only poles of $\omega^{\mathsf{ext},(g)}_n$ (considered as global meromorphic $n$-differentials) are the points $p_1$ and $p_2$ satisfying $dx|_{p_i} = 0$, $i=1,2$.
		\item The system of differentials $\{\omega^{\mathsf{ext},(g)}_{n}\}$ satisfies the so-called \emph{linear and quadratic loop equations} at $p_1$ and $p_2$, which determine their principal parts there.
		\item For $2g-2+n>0$, the differentials $\omega^{\mathsf{ext},(g)}_{n}$ satisfy the projection property, that is, they can be reconstructed from their principal parts at $p_1,p_2$ by
		\begin{align}
			\omega^{\mathsf{ext},(g)}_{n}(z_1,\dots,z_n) = \sum_{a_1,\dots,a_n=1,2} \Big( \prod_{i=1}^n \Res_{z'_i\to p_{a_i}}	\int_{p_{a_i}}^{z'_i} B(\bullet,z_i) \Big) \omega^{\mathsf{ext},(g)}_{n}(z'_1,\dots,z'_n).	
		\end{align}
	\end{enumerate}

\begin{remark}
	$Q^{1/2}$ can be either treated as a given constant in general position with respect to all computations, or we can work over the field of coefficients $\mathbb{C}(Q^{1/2})$. The latter choice is the most convenient one for this paper. 
\end{remark}

\begin{remark} For completeness, let us recall the loop equations. Let $p$ be a branch point of $X$ and $t$ the local coordinate at $p$ such that $X(t) = X(p) + t^2$. Then the linear loop equation reads
\begin{align} \label{eq:linear-loop-ext}
    \omega^{\mathsf{ext},(g)}_{n+1}(t,z_1,\dots,z_n) + \omega^{\mathsf{ext},(g)}_{n+1}(-t,z_1,\dots,z_n)
\end{align}
is holomorphic with at least a simple zero at $t=0$, $g,n\geq 0$. The quadratic loop equation reads 
\begin{align} \label{eq:quadratic-loop-ext}
    \omega^{\mathsf{ext},(g-1)}_{n+2} (t,-t,z_1,\dots,z_n) + \sum_{\substack{g_1+g_2 = g\\ I_1\sqcup I_2 = \{1,\dots,n\} }}\omega^{\mathsf{ext},(g_1)}_{|I_1|+1}(t,\{z_i\}_{i\in I_1}) \omega^{\mathsf{ext},(g_2)}_{|I_2|+1}(-t,\{z_i\}_{i\in I_2})
\end{align}
is holomorphic with at least a double zero at $t=0$, $g,n\geq 0$. Here we assume the convention that  $\omega^{\mathsf{ext},(-1)}_{n+2}=0$, $n\geq 0$.
\end{remark}

This spectral curve data is exactly the auxiliary curve mentioned above. It was proposed in~\cite{BEM11} as a tool to control the differentials $\omega^{(g)}_n$, in the context of matrix models. In application to the extended differentials $\omega^{\mathsf{ext},(g)}_{n}$ it was first considered in~\cite{DPSS-HOMFLY}, where the points (1), (2), and (4) as well as the linear loop equations were proved. Subsequently, the quadratic loop equations were proved in~\cite{DKPSS-HOMFLY}, and then all these statements got one more completely different proof in~\cite{BDKS-TR}. One more recent and yet another proof of all these statements (1-4) is available in~\cite{ABDKS-Symplectic}, where it is based on the theory of logarithmic topological recursion initiated in~\cite{Hock-qDiLog} and developed in~\cite{ABDKS-LogTR}.

\subsection{From extended to the ordinary differentials} Let $\xi$ be a primitive $s$-th root of unity. For $(g,n)\not=(0,2)$ we have
\begin{align} \label{eq: averaging}
	\omega^{(g)}_{n} (\tilde X_1,\dots,\tilde X_n) = \sum_{\alpha_1,\dots,\alpha_n=1}^s 	
	\omega^{\mathsf{ext},(g)}_{n} ( \xi^{\alpha_1}X_1,\dots,\xi^{\alpha_n}X_n),
\end{align}
as formal power series in $\tilde X = X^s$. In the case $(g,n)=(0,2)$ some adjustment of the term with the double pole on the diagonal is required. 

This connection allowed in~\cite{BEM11} to make a proposal on how to control the ordinary differentials 	$\omega^{(g)}_{n}$ in terms of the same auxiliary spectral curve data, and in~\cite{BEO} this proposal was proved. However, according to~\cite{GJKS}, it is not the final form of the desired answer in this case. 

In~\cite{JKS} and~\cite{GJKS} the ordinary differentials were studied further, and it was suggested that Equation~\eqref{eq: averaging} can be used to revisit the topological recursion statement for the ordinary differentials on the true physical curve defined using the $\omega^{(0)}_{1}$ rather than $\omega^{\mathsf{ext},(0)}_{1}$. The degree bounds for this curve were conjectured in~\cite{JKS}, and then it was argued in~\cite{GJKS} that this true physical curve must be identified as the augmentation polynomial of the corresponding torus knot~\cite{Ng-definition,Ng-Introduction}. There are deep physical reasons within the Chern-Simons theory to believe that it is true (both the statement on the origin of the true physical spectral curve data and the topological recursion statement), and in any case this proposal was tested on a number of examples in~\cite{GJKS}. 

\subsection{What this paper establishes}
We will established mathematical proofs of conjectures raised in \cite{AV12,JKS,GJKS} regarding the torus knots.

In Section~\ref{sec:TrueCurve} we recover the true physical curve very explicitly, proving the degree bounds conjectured in~\cite{JKS}. We further show that the corresponding quantum curve can be effectively derived from the known quantum curve of the auxiliary curve. In Section~\ref{sec:Augmentation} we follow the exposition of the construction of the augmentation polynomial/curve for the case of torus knots extracted from~\cite{Ng-Introduction} (very conveniently this case was featured there as exercises for the reader), and prove that it coincides with the physical curve we obtained in Section~\ref{sec:TrueCurve}.  This confirms a conjecture in~\cite{GJKS}. 

Finally, in Section~\ref{sec: topological recursion} we use the results on topological recursion for the extended differentials on the auxiliary curve to derive the topological recursion statement on the true physical curve $=$ augmentation curve, confirming the conjectural proposal of~\cite{GJKS}. 
This section is the least original of this paper, since it adapts to the situation that we have here  (the new understanding of the true physical curve and its interaction with the auxiliary curve; including prior results on topological recursion for the extended differentials on the auxiliary curve) some arguments on averaging of the loop equations already presented in earlier works, notably in~\cite{BEO}. We also show that the differentials $\omega_n^{(g)}$ on the
augmentation curve are invariant under interchanging $r$ and $s$,
up to a rescaling of $\tilde X$, consistently with their
interpretation as knot invariants. Finally, we prove a normalized
dilaton equation for these differentials and show that the
corresponding free energies $\omega_0^{(g)}$ coincide with the
auxiliary free energies $\omega_0^{\mathsf{ext},(g)}$.
This yields the resolved conifold free energy formula for $g\geq2$ for all
torus knots on their corresponding augmentation curves, in agreement with the closed-string
prediction of Aganagic and Vafa \cite{AV12}.

\begin{figure}[htbp]
\centering
\begin{tikzpicture}[
    >=Stealth,
    node distance=2cm,
    every node/.style={font=\normalsize},
    curve/.style={
        draw,
        ellipse,
        align=center,
        minimum width=3.2cm,
        minimum height=1.2cm
    },
    arrow/.style={->, thick}
]

\node (Z) at (0,4.5) {$Z$};
\node (Zext) at (8,4.5) {$Z^{\mathrm{ext}}$};

\node (omega) at (0,2.4) {$\omega_n^{(g)}$};
\node (omegaext) at (8,2.4)
    {$\omega_n^{\mathrm{ext},(g)}$};

\node[curve] (aug) at (0,0)
    {Augmentation\\curve};
\node[curve] (aux) at (8,0)
    {Auxiliary curve\\ \cite{BEM11}};

\draw[arrow] (Zext) --
    node[above] {Specialization}
    (Z);

\draw[arrow] (Z) -- (omega);
\draw[arrow] (Zext) -- (omegaext);

\draw[arrow] (omegaext) --
    node[above] {$s$-root averaging}
    (omega);

\draw[<->, thick] (aux) --
    node[right, align=left] {\cite{DPSS-HOMFLY}}
    (omegaext);

\draw[<->, thick] (omega) --
    node[left, align=right] {
        Thm~\ref{thm:TRonAugmentation}\\
         \cite[Conj.]{GJKS}
    }
    (aug);

\draw[arrow] (aux) --
    node[above, align=center] {
        Thm~\ref{thm:FF}\\
         \cite[Conj.]{JKS}
    }
    (aug);

\end{tikzpicture}
\caption{Relations between the (extended) Ooguri--Vafa partition
function, (extended) correlation differentials, and the auxiliary
and augmentation curves for torus knots.}
\label{fig:OV-augmentation-diagram}
\end{figure}

\subsection{Acknowledgements} This project was prompted by questions posed to us by Vincent Bouchard and Piotr Su{\l}kowski at the program ``Enumerative Geometry, Integrability, and Topological Recursion'' organized at the Matrix Center in Creswick, Australia. We thank them for very useful discussions and we thank the organizers of the program, Alexander Alexandrov,
Norman Do, and Paul Norbury for inviting us to this very stimulating research environment. A.H. thanks Marcos Mariño for discussions.

A.H. was funded by the Swiss National Science Foundation (SNSF) through the
Ambizione project “TRuality: Topological Recursion, Duality and Applications”
under the grant agreement PZ00P2\_\allowbreak223297 and partially by the DFG -- project ID 551478549.
S.S. was supported by the Dutch Research Council, grant no. OCENW.M.21.233.

AI agents were used substantially in producing and polishing the proofs in this paper. They were fed with suggestions of  algebraic tricks and ideas of proofs and asked to list consequences, and were very instrumental in formal computations of this type. However, the final versions of all proofs and the composition and text of the paper are still produced manually.

\section{True curve for torus knots} \label{sec:TrueCurve}

\subsection{Derivation of the polynomial equation} Let $r,s$ be positive integers. Recall the data of the auxiliary spectral curve given in Equations~\eqref{eq:X} and~\eqref{eq:y} and recall that we work over $\mathbb{C}(Q^{\frac 12})$. 
$X$ can be used as a local coordinate at $z=0$, the change $X=X(z)$ is invertible, we can write $z=z(X)$. Consider the series $y(z)$ in the coordinate $X$, that is, substitute $y(z(X))$. In the resulting series select only the terms whose powers of $X$ are divisible by $s$, and write $\tilde X\coloneqq X^s$, that is, consider the series $\tilde y(\tilde X)$ defined by 
\begin{align} \label{eq: tilde y}
	\tilde y(\tilde X)|_{\tilde X = X^s} = \frac 1s \sum_{i=0}^{s-1} y(z(\xi^i X)),
\end{align}
where $\xi$ is a fixed primitive $s$-th root of unity. Let $\tilde Y = \exp(\tilde y)$. 

\begin{theorem}\label{thm:corrected-algebraicity}
Assume $\gcd(r,s)=1$. There is a nonzero algebraic relation
\begin{equation}\label{eq: True Spectral Curve}
 \sum_{i,j}c_{i,j}\tilde X^i\tilde Y^j=0,
\end{equation}
{where the coefficients $c_{i,j}$ belong to $\mathbb C(Q^{1/2})$ and are not all zero.}
In particular, the ranges of the exponents are
\begin{align}
 0\leq i\leq\frac1{r+s}\binom{r+s}{r},\qquad
 0\leq j\leq\binom{r+s}{r}.
\end{align}
The relation can be chosen to have these exact degrees
in $\tilde X$ and $\tilde Y$ (note that for coprime $r$ and $s$, $\frac1{r+s}\binom{r+s}{r}$ is an integer), with $c_{\frac1{r+s}\binom{r+s}{r},\binom{r+s}{r}} = 1$.
\end{theorem}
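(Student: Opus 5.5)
The plan is to make the averaging in \eqref{eq: tilde y} algebraic by passing to the full set of preimages. Put $a=Q^{-1/2}$ and $b=Q^{1/2}$. Then \eqref{eq:X} gives $X^s=z^s\big(\tfrac{1-az}{1-bz}\big)^r$, so the points $z(\xi^iX)$, $i=0,\dots,s-1$, are exactly the $s$ roots near $z=0$ of
\begin{equation*}
E_{\tilde X}(z)\coloneqq z^s(1-az)^r-\tilde X(1-bz)^r=0 .
\end{equation*}
This polynomial has degree $r+s$ in $z$. Its remaining $r$ roots lie near $z=1/a$.

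With $Y(z)=e^{y(z)}=\frac{1-bz}{1-az}$, the equation becomes $z^s=\tilde X\,Y(z)^r$. By definition, $\tilde Y^s=\prod_{i\in S_0}Y(z_i)$, where $S_0$ is the set of the $s$ small roots. Raising to the $r$-th power gives $\tilde Y^{rs}=\tilde X^{-s}\prod_{i\in S_0}z_i^s$. Comparing leading terms as $\tilde X\to 0$ fixes the branch, so
\begin{equation*}
\tilde Y^r=\epsilon\,\tilde X^{-1}e_s(S_0)
\end{equation*}
for an explicit constant $\epsilon$ (a sign or root of unity). Since $\gcd(r,s)=1$, choose integers $u,v$ with $ur+vs=1$. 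Then
\begin{equation*}
\tilde Y=F(S_0)\coloneqq \big(\epsilon\tilde X^{-1}e_s(S_0)\big)^u\Big(\prod_{i\in S_0}Y(z_i)\Big)^v .
\end{equation*}
The key point is that $\tilde Y$ is a single-valued rational expression in the roots indexed by $S_0$.

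For the algebraic relation, define $P(\tilde X,\tilde Y)=\prod_{|S|=s}\big(\tilde Y-F(S)\big)$, the product running over all $\binom{r+s}{s}$ subsets $S$ of the roots of $E_{\tilde X}$. This product is symmetric in the roots, so its coefficients lie in $\mathbb C(Q^{1/2})(\tilde X)$. It is monic of degree $\binom{r+s}{r}$ in $\tilde Y$, and it vanishes at $\tilde Y(\tilde X)$. Clearing denominators in $\tilde X$ gives a nonzero relation of the form \eqref{eq: True Spectral Curve}. To show $P$ is irreducible, so that the $\tilde Y$-degree is exact, I would show that the monodromy of the roots of $E_{\tilde X}$ acts transitively on $s$-subsets. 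At $\tilde X=\infty$, all $r+s$ roots behave like $\tilde X^{1/(r+s)}$ times $(r+s)$-th roots of unity, which gives an $(r+s)$-cycle. Near $\tilde X=0$ one has an $s$-cycle and an $r$-cycle. Together with the simple branch points of $E_{\tilde X}$ (its discriminant in $z$), which contribute transpositions, this should generate the full symmetric group $S_{r+s}$. One also has to check that the $F(S)$ are pairwise distinct.

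The main obstacle is the exact $\tilde X$-degree $\frac{1}{r+s}\binom{r+s}{r}$ and the normalization $c_{\text{top}}=1$. My plan is to read these off from the Newton polygon of $P$, through the Puiseux behaviour of the branches $F(S)$ at $\tilde X=0$ and $\tilde X=\infty$, and by counting poles of the coefficients at finite $\tilde X$.

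At $\tilde X=\infty$, the $(r+s)$-cycle acts on the $s$-subsets. Because $\gcd(r,s)=1$, no nonempty proper subset is invariant under any nontrivial rotation, so the action is free. Hence there are exactly $\frac{1}{r+s}\binom{r+s}{r}$ orbits, which is also why this number is an integer. Each orbit is a Puiseux cycle of length $r+s$. On each such cycle I expect $\tilde Y\sim c\,\tilde X^{k/(r+s)}$ with an exponent computable from $e_s(S)\sim \tilde X^{s/(r+s)}$ and $Y(z_i)\to b/a$. Summing these contributions should give the top $\tilde X$-degree. I also need to verify that the coefficients of $P$ acquire no poles at finite nonzero $\tilde X$, where the only candidates are the branch points of $E_{\tilde X}$, and that the lowest $\tilde X$-power is $0$. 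Finally, I would compute the coefficient of $\tilde X^{\frac{1}{r+s}\binom{r+s}{r}}\tilde Y^{\binom{r+s}{r}}$ from the leading constants of the branches at $\infty$, show it is nonzero, and divide by it to obtain $c_{\frac1{r+s}\binom{r+s}{r},\binom{r+s}{r}}=1$.
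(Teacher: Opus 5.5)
There is a genuine gap: the $\tilde X$-degree bound, which you flag as the main obstacle, is not proved, and the route you sketch for it rests on a wrong picture of the roots at $\tilde X=\infty$. The construction itself is the paper's. Your $F(S)$ is exactly the paper's $c_S$ (with $u=p$, $v=-q$, $\epsilon=(-1)^{s-1}$), and $\prod_{|S|=s}(\tilde Y-F(S))$ is the paper's $F$ up to the factor $\tilde X^{N}$, where $N=\frac1{r+s}\binom{r+s}{s}$.

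In $z^s(1-az)^r=\tilde X(1-bz)^r$ the $\tilde X$-term has degree $r$ in $z$, not $0$. So as $\tilde X\to\infty$ only $s$ roots escape to infinity, like $\tilde X^{1/s}$, and for them $Y(z_i)\to b/a$. The other $r$ roots tend to $1/b$, with $Y(z_i)\asymp\tilde X^{-1/r}\to0$. The local monodromy at $\infty$ is therefore an $r$-cycle times an $s$-cycle, just as at $0$. There is no $(r+s)$-cycle and there are no Puiseux cycles of length $r+s$.

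Your predicted asymptotics also contradict the identities $F(S)^r=\epsilon\tilde X^{-1}e_s(S)$ and $F(S)^s=\prod_{i\in S}Y(z_i)$, which your definition of $F(S)$ satisfies. With $e_s(S)\sim\tilde X^{s/(r+s)}$ the first forces $F(S)\to0$, while the second gives $F(S)^s\to(b/a)^s\neq0$. The same wrong input enters your irreducibility sketch. In the paper's separate irreducibility proposition, the $(r+s)$-cycle arises as the monodromy at $0$ composed with a transposition. Irreducibility is not needed for this theorem anyway: monicity in $\tilde Y$ already gives the exact $\tilde Y$-degree.

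In fact no $\tilde X$-degree comes from $\infty$ at all, and the whole degree comes from $\tilde X=0$. The paper's argument runs as follows.

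\begin{itemize}
\item \textbf{At $\tilde X=\infty$.} Since $F(S)^s=\prod_{i\in S}Y(z_i)$ stays bounded there, every $F(S)$, and hence every coefficient of $\prod_S(\tilde Y-F(S))$, is bounded at $\tilde X=\infty$.
\item \textbf{At finite $\tilde X\neq0$.} The roots are finite, because the leading coefficient $(-a)^r$ of $E_{\tilde X}$ is constant, and they avoid $0,1/a,1/b$. So the coefficients are regular there. Branch points play no role: symmetric functions of bounded roots are holomorphic.
\item \textbf{At $\tilde X=0$.} Here $s$ roots are $\asymp\tilde X^{1/s}$ with $Y\to1$, and $r$ roots tend to $1/a$ with $Y\asymp\tilde X^{-1/r}$. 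Hence $F(S)\asymp\tilde X^{-m_S/(rs)}$, where $m_S$ is the number of roots of the second kind in $S$.
\end{itemize}

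All these orders are nonpositive, so each coefficient has pole order at most that of $\prod_S F(S)$. That order is $\sum_S m_S/(rs)=r\binom{r+s-1}{s-1}/(rs)=N$. Thus $\tilde X^N\prod_S(\tilde Y-F(S))$ is a polynomial of $\tilde X$-degree at most $N$. The corner coefficient equals $1$ automatically, because the coefficient of $\tilde Y^{\binom{r+s}{s}}$ is exactly $\tilde X^N$, so no leading constants at $\infty$ need to be computed.
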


\begin{remark} This theorem proves a conjecture stated in~\cite{JKS}, see Eq.~(4.41) in \emph{op.~cit.}
\end{remark}

\begin{proof}
	First, we construct a polynomial and then we prove that it indeed gives a relation for $\tilde X$ and $\tilde Y$. From the definition of $X(z)$ we have
	\begin{align}\label{eq:tildeX}
	\tilde X=X(z)^s
	=z^s\left(\frac{1-Q^{-1/2}z}{1-Q^{1/2}z}\right)^r.
	\end{align}
	Thus, for a generic value of $\tilde X$, there are $r+s$ distinct roots $z_1,\ldots,z_{r+s}$ of $P(z)=0$ where
	\begin{equation}\label{eq:z-root-polynomial}
		P(z)\coloneqq z^s(1-Q^{-1/2}z)^r-\tilde X(1-Q^{1/2}z)^r.
	\end{equation}
	Here $z_1,\ldots,z_{r+s}$ denote all the algebraic inverse branches
	as functions of $\tilde X$.
	
	Choose integers $p,q$ such that $rp-sq=1$. For each subset
	$S\subset\{1,\ldots,r+s\}$ with $|S|=s$, define the algebraic function
	\begin{equation}\label{eq:cS-definition}
		c_S=
		\left((-1)^{s-1}\frac{\prod_{i\in S}z_i}{\tilde X}\right)^p
		\left(\prod_{i\in S}
		\frac{1-Q^{-1/2}z_i}{1-Q^{1/2}z_i}\right)^{q}.
	\end{equation}
	Though $p$ and $q$ enter the definition of $c_S$, it is easy to see that $c_S$ doesn't depend on these choices. Indeed, multiplying the
	identities \eqref{eq:z-root-polynomial} for $i\in S$ gives
	\begin{align}
	\left((-1)^{s-1}\frac{\prod_{i\in S}z_i}{\tilde X}\right)^{-s}
	=\left(\prod_{i\in S}
	\frac{1-Q^{-1/2}z_i}{1-Q^{1/2}z_i}\right)^r,
	\end{align}
	where the sign doesn't destroy the identity because $s(s-1)$ is even. Together with $rp-sq=1$, this implies
	\begin{align}\label{eq:cS-powers}
	c_S^r
		=(-1)^{s-1}\frac{\prod_{i\in S}z_i}{\tilde X},
		\qquad 
	c_S^s
		=\prod_{i\in S}\frac{1-Q^{1/2}z_i}{1-Q^{-1/2}z_i}.
	\end{align}
	For coprime $(r,s)$ these equations imply that $c_S$ is independent of the
	choice of $p,q$.
	
	Now we define the equation. We have:
	\begin{align}\label{eq:F}
		F(\tilde X,\tilde Y) & =
		\tilde X^{\frac1{r+s}\binom{r+s}{s}}
		\prod_{\substack{S\subset\{1,\ldots,r+s\}\\ |S|=s}}
		(\tilde Y-c_S)
		\\ 
		\notag
		& = \sum_{k=0}^{\binom{r+s}{s}} \tilde Y^{\binom{r+s}{s}-k }\tilde X^{\frac1{r+s}\binom{r+s}{s}} (-1)^k e_k(\{c_S\}_{|S|=s}). 
	\end{align}
	Note that $e_k(\{c_S\}_{|S|=s})$ is a symmetric expression in $z_1,\dots,z_{r+s}$ hence a rational function in $\tilde X$ (and $Q^{\frac 12}$). Let's show that it is a Laurent polynomial with the order of pole at $0$ at most $\frac1{r+s}\binom{r+s}{s}$ and bounded at infinity (hence, $\tilde X^{\frac1{r+s}\binom{r+s}{s}} (-1)^k e_k(\{c_S\}_{|S|=s})$ is a polynomial of degree at most $\frac1{r+s}\binom{r+s}{s}$).

Note that for $\tilde X\ne0$ none of the roots of \eqref{eq:z-root-polynomial} is equal to $0,Q^{\pm1/2}$.
The leading coefficient of $P$ is the nonzero constant $(-Q^{-1/2})^r$, so its roots remain finite over every finite value of $\tilde X$.
Hence the expression for $c_S$ shows that $e_k(\{c_S\}_{|S|=s})$ can have poles only at $\tilde X=0,\infty$, and is therefore a Laurent polynomial.
	Then, let us estimate the order of its poles at $\tilde X=0,\infty$. 
	
	 At $\tilde X=0$ the roots $z_1,\dots,z_{s+r}$ split into $s$ roots near $0$ with $z_i\asymp\tilde X^{1/s}$ 
	 and $r$ roots near $Q^{1/2}$ with $z_i-Q^{1/2}\asymp\tilde X^{1/r}$. Hence, the order of each $c_S$ is non-positive (since the order of $c_S^r$ is at most $0$). Hence, the order of pole of $e_k(\{c_S\}_{|S|=s}) $ is at most the order of pole of $e_{\binom{r+s}{s}}(\{c_S\}_{|S|=s})$ that is computed by
	 \begin{align}
	 	e_{\binom{r+s}{s}}(\{c_S\}_{|S|=s}) \asymp\tilde{X}^{\binom{r+s}{s} \big( \frac{s^2p}{s(s+r)} + \frac{srq}{r(s+r)}-p \big)} 
	 	= \tilde{X}^{-\frac{1}{s+r}\binom{r+s}{s}}.
	 \end{align}
	At $\tilde X=\infty$, $r$ roots tend to $Q^{-1/2}$
	and $s$ tend to infinity. For the former $r$ roots,
	\begin{align}
	\frac{1-Q^{1/2}z_i}{1-Q^{-1/2}z_i}
	\asymp\tilde X^{-1/r};
	\end{align}
	and for the latter $s$ roots this fraction tends to $Q$. 
Thus, $c_S^s$ and hence $c_S$ are bounded at $\tilde X=\infty$. Thus, $e_k(\{c_S\}_{|S|=s})$ is bounded at $\tilde X=\infty$ as well.

Thus, the only possible pole of $e_k(\{c_S\}_{|S|=s})$ is at $\tilde X = 0$, and the order of this pole is $\leq \frac{1}{s+r}\binom{r+s}{s}$. Hence, $\tilde X^{\frac1{r+s}\binom{r+s}{s}} (-1)^k e_k(\{c_S\}_{|S|=s})$ is a polynomial of degree  at most ${\frac1{r+s}\binom{r+s}{s}} $.
	
	It remains to show that $F$ vanishes on the given $\tilde{X}$ and $\tilde{Y}$. 
Let $S_0\subset \{1,\dots,r+s\}$ be the subset that indexes the roots $z_i$ such that $z_i\to 0$ for $\tilde{X}\to 0$. In other words, these are exactly the choices $z(\xi^iX)$, $i=0,\dots,s-1$, of Equation~\eqref{eq: tilde y}. Let us show that then we have $\tilde{Y} = c_{S_0}$. Indeed, from ${X}(z) = z \exp(-\frac rs y(z))$ we have:
	\begin{align}
		\xi^iX
		=z(\xi^iX)\,e^{-\frac rs y(z(\xi^iX))}, \qquad z(\xi^iX)=\xi^iX\,e^{\frac rs y(z(\xi^iX))}.
	\end{align}
	Hence, by Equation~\eqref{eq: tilde y},
	\begin{align}
	& (-1)^{s-1}\frac{\prod_{i\in S_0}z_i}{\tilde X}
	= \frac {(-1)^{s-1}}{\tilde X} \Big(\prod_{i=0}^{s-1}\xi^i\Big) X^s e^{\frac rs \sum_{i=0}^{s-1} y(z(\xi^iX))} = \tilde Y^r,
	\\
	& \prod_{i\in S_0}
	\frac{1-Q^{-1/2}z_i}
	{1-Q^{1/2}z_i}
	=\exp\Big(-\sum_{i=0}^{s-1}y(z(\xi^iX))\Big)
	=\tilde Y^{-s}.
	\end{align}
	Hence $c_{S_0}=\tilde Y^{rp-sq}=\tilde Y$, thus one of the factors in \eqref{eq:F} does vanish on given $\tilde{X},\tilde{Y}$. 
Finally, notice that the coefficient of $\tilde{Y}^{\binom{r+s}s}$ is precisely $\tilde{X}^{\frac 1{r+s}\binom{r+s}s}$, so we got exactly the form of the relation as stated.
\end{proof}

\subsection{Properties of \texorpdfstring{$F$}{F}} Note that Eq.~\eqref{eq:F} is very explicit. Various coefficients and specializations can be extracted without much effort. For instance,
\begin{lemma} \label{lem: constant term} The constant term of $F(\tilde{X},\tilde{Y})$ as a polynomial in $\tilde{Y}$ is equal to $({Q^{1/2}})^{\frac{s}{r+s}\binom{r+s}{s}}$.
\end{lemma}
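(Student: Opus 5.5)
The plan is to compute the constant term directly from the product formula \eqref{eq:F}. Write $N=\binom{r+s}{s}$. Setting $\tilde Y=0$ gives
\[
F(\tilde X,0)=(-1)^N\,\tilde X^{\frac{N}{r+s}}\prod_{|S|=s}c_S .
\]
So everything reduces to evaluating $\prod_S c_S$ as a function of $\tilde X$. I will not attack this product head-on. Instead I will compute $\prod_S c_S^r$ and $\prod_S c_S^s$ from \eqref{eq:cS-powers}, and then recover $\prod_S c_S=(\prod_S c_S^r)^p(\prod_S c_S^s)^{-q}$ using $rp-sq=1$.

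Both products reduce to symmetric functions of all roots $z_1,\dots,z_{r+s}$ of $P$. Each index $i$ lies in exactly $M=\binom{r+s-1}{s-1}=\frac{s}{r+s}N$ of the $s$-subsets $S$. Hence
\[
\prod_S c_S^r=\Big((-1)^{s-1}\tilde X^{-1}\Big)^{N}\Big(\prod_i z_i\Big)^{M},
\qquad
\prod_S c_S^s=\Big(\prod_i\frac{1-Q^{1/2}z_i}{1-Q^{-1/2}z_i}\Big)^{M}.
\]
These are evaluated by Vieta and by $\prod_i(a-z_i)=P(a)/\mathrm{lc}(P)$, where $\mathrm{lc}(P)=(-Q^{-1/2})^r$.

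\begin{itemize}
\item The constant term of $P$ is $-\tilde X$, so $\prod_i z_i=(-1)^{s+1}Q^{r/2}\tilde X$.
\item At $a=Q^{-1/2}$ the $\tilde X$-term of $P$ drops out, giving $P(Q^{-1/2})=Q^{-s/2}(1-Q^{-1})^r$.
\item At $a=Q^{1/2}$ the first term of $P$ drops out, giving $P(Q^{1/2})=-\tilde X(1-Q)^r$.
\item Combining these with the prefactors $Q^{\pm(r+s)/2}$ yields $\prod_i\frac{1-Q^{1/2}z_i}{1-Q^{-1/2}z_i}=(-1)^{r+1}Q^{s/2}\tilde X^{-1}$.
\end{itemize}

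Substituting into $(\prod c_S^r)^p(\prod c_S^s)^{-q}$ and tracking exponents:
\begin{itemize}
\item The power of $\tilde X$ is $p(M-N)+qM=-\frac{N}{r+s}(rp-sq)=-\frac{N}{r+s}$. This cancels the prefactor $\tilde X^{N/(r+s)}$, consistent with the pole-order computation in the proof of Theorem~\ref{thm:corrected-algebraicity}.
\item The power of $Q^{1/2}$ is $M(rp-sq)=M=\frac{s}{r+s}N$, which is exactly the claimed exponent.
\end{itemize}

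The main obstacle is the overall sign. It collects $(-1)^N$, $(-1)^{(s-1)(N+M)p}$ and $(-1)^{(r+1)Mq}$, and must be shown to equal $+1$ independently of the choice of $p,q$. The approach is to use parity facts forced by $\gcd(r,s)=1$ together with $rp-sq=1$: $r$ and $s$ are not both even, and $p,q$ cannot both be even. One should also check the parities of $N$ and $M$ in the relevant cases, for example via $M=\frac{s}{r+s}N$ and $N-M=\frac{r}{r+s}N$. If this case analysis gets messy, a fallback is to check a small example such as $r=1$ or $s=1$, where $F$ is explicit, to fix the sign convention, and then argue that the sign is a continuous, hence locally constant, function of the data. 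Finally, the independence of $c_S$ from $(p,q)$ guarantees that fractional roots cause no ambiguity at any step: all manipulations happen on the level of honest algebraic functions before taking the constant $\tilde X$-degree.
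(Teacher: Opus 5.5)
Your route is the paper's: evaluate $P(z)=(-Q^{-1/2})^r\prod_i(z-z_i)$ at $z=0,Q^{\pm1/2}$ to get the total products \eqref{eq:total-products}, then substitute them into $\prod_{|S|=s}c_S$ using that each index lies in $M=\frac{s}{r+s}N$ subsets. Your total products, your $\tilde X$-exponent $-\frac{N}{r+s}$ and your $Q^{1/2}$-exponent $M$ are all correct.

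The one thing you leave open is the sign. It closes in a few lines; the paper does not spell this out either. Put $K=\frac{N}{r+s}$, so that $N=(r+s)K$, $M=sK$ and $N+M\equiv rK \pmod 2$. Then
\[
N+(s-1)(N+M)p+(r+1)Mq\equiv K\bigl[(r+s)+(s-1)rp+(r+1)sq\bigr]\pmod 2 .
\]
Substitute $rp=1+sq$ in the second term and $sq=rp-1$ in the third. Since $s(s-1)$ and $r(r+1)$ are even, this gives $(s-1)rp\equiv s-1$ and $(r+1)sq\equiv r+1$. The bracket is therefore $\equiv 2(r+s)\equiv 0$, so the sign is $+1$ for every admissible choice of $p,q$. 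The parity facts you list (that $r,s$ are not both even, and $p,q$ are not both even) are not what is needed; the substitution $rp=1+sq$ is.

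Drop your fallback. The sign is already independent of $\tilde X$ and $Q$ and depends only on the integers $r,s,p,q$. There is no continuous family joining different pairs $(r,s)$, so checking $r=1$ or $s=1$ would determine the sign only in those cases.
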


\begin{proof} The factorization of Equation~\eqref{eq:z-root-polynomial} is 
\begin{align}
    P(z)=(-Q^{-1/2})^{r}\prod_{i=1}^{r+s}(z-z_i),
\end{align}
which
implies at different evaluation points that
\begin{align} \label{eq:total-products}
	\prod_{i=1}^{r+s}z_i&=(-1)^{s-1}Q^{r/2}\tilde{X};
	& \prod_{i=1}^{r+s}\frac{1-Q^{-1/2}z_i}{1-Q^{1/2}z_i}
	&=(-1)^{r-1}Q^{-s/2}\tilde{X}.
\end{align}
Hence,
	\begin{align}
		(-1)^{\binom{r+s}{s}}
		\tilde X^{\frac1{r+s}\binom{r+s}{s}}
		e_{\binom{r+s}{s}}(\{c_S\}_{|S|=s})
		=({Q^{1/2}})^{\frac{s}{r+s}\binom{r+s}{s}}.
	\end{align}
\end{proof}

As the first structural property, we prove the following:

\begin{proposition} \label{prop:irreducible} The polynomial $F$ is irreducible in $\mathbb{C}(Q^{1/2})[\tilde{X},\tilde{Y}]$. 
\end{proposition}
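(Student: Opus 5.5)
Irreducibility will follow from two facts: $F$ is, up to the polynomial factor $\tilde X^{N}$ with $N=\frac1{r+s}\binom{r+s}{s}$, the characteristic polynomial of $c_{S_0}$ over $K=\mathbb{C}(Q^{1/2})(\tilde X)$, and the $\binom{r+s}{s}$ algebraic functions $c_S$ are pairwise distinct and form a single orbit of the monodromy (Galois) group. Concretely, write $F=\tilde X^{N}\prod_S(\tilde Y-c_S)$. Since the leading coefficient in $\tilde Y$ is $\tilde X^N$, any factorization in $\mathbb{C}(Q^{1/2})[\tilde X,\tilde Y]$ either splits off a factor depending only on $\tilde X$ or splits $F$ in $K[\tilde Y]$. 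The first option is excluded by Lemma~\ref{lem: constant term}: a common factor of all coefficients would divide the constant term $(Q^{1/2})^{sN}$, which is a unit. So it suffices to show that $\prod_S(\tilde Y-c_S)$ is irreducible over $K$, i.e.\ that the Galois group of the splitting field of $P$ over $K$ acts transitively on the set $\{c_S\}$, and that these values are distinct.

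For transitivity I will first show that the Galois group $G$ of $P(z)$ over $K$ is the full symmetric group $S_{r+s}$. Then, since $c_S$ is a function of the subset $S$ built symmetrically from $\{z_i\}_{i\in S}$ (via \eqref{eq:cS-powers}, or $c_S$ itself with fixed $p,q$), $\sigma(c_S)=c_{\sigma(S)}$, and $S_{r+s}$ is transitive on $s$-subsets. To get $G=S_{r+s}$, use monodromy of the covering $z\mapsto \tilde X=z^s\big(\frac{1-Q^{-1/2}z}{1-Q^{1/2}z}\big)^r$ of $\mathbb{CP}^1$ of degree $r+s$. Over $\tilde X=0$ it has cycle type $(s,r)$ (ramification at $z=0$ and $z=Q^{1/2}$), over $\tilde X=\infty$ also cycle type $(s,r)$, and by Riemann--Hurwitz ($-2=(r+s)(-2)+2(r+s-2)+R$) the remaining ramification has total index $2$. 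For generic $Q$ the derivative $d\log\tilde X = \big(\frac sz+\frac{rQ^{-1/2}}{Q^{-1/2}z-1}\cdot(-1)\ldots\big)dz$ has exactly two further simple zeros (the branch points $p_1,p_2$ of $X$), giving two transpositions. Hence $G$ is generated by elements of types $(s,r)$, $(s,r)$, transposition, transposition with product $1$, and is transitive since the source is connected. A transitive group containing a transposition and an $s$-cycle and an $r$-cycle with $\gcd(r,s)=1$: I will argue primitivity (a block system would have to be preserved by the $(s,r)$ element; with coprime $r,s$ a block size $d>1$ dividing $r+s$ cannot be compatible with both cycles, since the $s$-cycle orbit would be a union of blocks forcing $d\mid s$ and likewise $d\mid r$), and then a primitive group containing a transposition is symmetric.

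Distinctness of the $c_S$: it suffices to exhibit one point where they differ, e.g.\ near $\tilde X=0$ using the Puiseux expansions from the proof of Theorem~\ref{thm:corrected-algebraicity}. If $S$ contains $a$ roots near $0$ and $s-a$ near $Q^{1/2}$, then $c_S^r\asymp \tilde X^{a/s-1}$, so different $a$ give different orders; for fixed $a$, the leading coefficients involve products of distinct roots of unity $\xi^{i}$ and $\eta^{j}$ ($\eta$ an $r$-th root) over the chosen indices, and I will check these leading terms separate subsets (alternatively, the transitive Galois action makes all $c_S$ conjugate, so the minimal polynomial of $c_{S_0}$ divides the product and equality of degrees reduces to showing $[K(c_{S_0}):K]=\binom{r+s}{s}$, i.e.\ that the stabilizer of $c_{S_0}$ is exactly the stabilizer of $S_0$).

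The main obstacle is this last point: proving that $c_S=c_{S'}$ forces $S=S'$, since $c_S$ involves only products over $S$ and coincidences of products are conceivable. I would first try the local expansion at $\tilde X=0$ (and if needed also at $\tilde X=\infty$, where the roles of $r$ and $s$ swap), comparing orders and leading coefficients; the primitivity step is the secondary point needing care.
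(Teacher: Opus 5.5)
Your reduction matches the paper's. By Gauss's lemma, a factorization either isolates a factor in $\tilde X$ alone, which Lemma~\ref{lem: constant term} excludes, or factors $\prod_S(\tilde Y-c_S)$ over $K$. Your plan to show the monodromy group is $S_{r+s}$ also works. The paper gets this more directly: it conjugates a transposition joining the two cycles of $\sigma_0$ by powers of $\sigma_0$, which yields all transpositions $(i,j)$ across the cycles. Your primitivity route needs one more line: a $\sigma_0$-orbit of blocks meeting both cycles would have length dividing both $r$ and $s$.

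\textbf{The gap.} The genuine gap is the step you flag, distinctness of the $c_S$, and it cannot be skipped. By transitivity, $\prod_S(\tilde Y-c_S)$ is a power of the minimal polynomial of $c_{S_0}$, so it is irreducible exactly when the $c_S$ are distinct. Your proposed tool fails at leading order. The leading term of $c_S$ at $\tilde X=0$ is determined by those of $c_S^r$ and $c_S^s$. These depend only on $a=|S\cap\{\text{roots near }0\}|$, on the exponent sum of the $\xi^i$ modulo $s$, and on the analogous exponent sum modulo $r$ for the roots near $Q^{1/2}$. That gives at most $rs$ classes for $\binom{s}{a}\binom{r}{s-a}$ subsets. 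For example, with $(r,s)=(3,4)$ and $a=2$, the choices $\{0,3\}$ and $\{1,2\}$ of roots near $0$, with the same two roots near $Q^{1/2}$, have identical leading terms. Higher Puiseux orders do not give a workable general argument, and your stabilizer reformulation only restates the problem.

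\textbf{The missing idea.} Use the group $S_{r+s}$ you already have. Suppose $c_S=c_{S'}$ with $S\neq S'$. Then \eqref{eq:cS-powers} gives $\prod_{i\in S}z_i=\prod_{i\in S'}z_i$, that is, $\prod_{i\in S\setminus S'}z_i=\prod_{i\in S'\setminus S}z_i$. Since $|S\setminus S'|\le\min(r,s)$, the symmetric difference has at most $2\min(r,s)<r+s$ elements unless $r=s=1$. So there is an index $k$ outside it. Apply the transposition $(k,l)$ with $l\in S\setminus S'$; it preserves identities between the algebraic functions $z_i$. Dividing the transformed identity by the original gives $z_k=z_l$, a contradiction. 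For $r=s=1$, one has $c_{\{i\}}=z_i/\tilde X$ directly.

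Alternatively, in your stabilizer language: for $r\neq s$ the subgroup $S_s\times S_r$ is maximal in $S_{r+s}$. Hence the stabilizer of $c_{S_0}$ is either $S_s\times S_r$ or all of $S_{r+s}$. The latter would make all products $\prod_{i\in S}z_i$ equal, and comparing two subsets that differ in one element forces two roots to coincide.
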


\begin{proof} 
To prove irreducibility over $\mathbb C(Q^{1/2})(\tilde X)$, we consider the monodromy of $\tilde{X}$ understood as rational function on the auxiliary curve, that is, the covering $\tilde{X}\colon \mathbb{C}\mathrm{P}^1 \to  \mathbb{C}\mathrm{P}^1$, $\tilde{X}(z)=X(z)^s$. It is sufficient to show that
\begin{itemize} 
\item The monodromy group is the full symmetric group $S_{r+s}$, hence it acts transitively on subsets of a generic fiber of cardinality $s$. 
\item 
On a generic fiber all $c_S$ are different, hence the transitive action on $s$-element subsets of the fiber gives a transitive action on $\{c_S\}_{|S|=s}$. 
\end{itemize} 

Since the covering is connected, the monodromy is transitive. It is generated by four permutations: two transpositions $\tau_1$ and $\tau_2$ and two permutations $\sigma_0$ and $\sigma_\infty$, each consisting of two disjoint cycles of length $r$ and $s$. The product of $\sigma_0$  with one of the transpositions, say $\tau_1$, must be a cycle of length $r+s$. Without loss of generality assume that  $\sigma_0 = \mathit{(1,\dots,s)(s+1,\dots,s+r)}$ and $\tau_1 = \mathit{(1,s+1)}$.  Since $r$ and $s$ are coprime, conjugating $\tau_1$ with powers of $\sigma_0$ we obtain any transposition $\mathit{(i,j)}$ for $i=1,\dots,s$, $j=s+1,\dots,s+r$. These transpositions generate $S_{r+s}$, hence, it is the monodromy group. 

Now, assume that $c_S=c_{S'}$ as algebraic functions, for $S\ne S'$.
Equation~\eqref{eq:cS-powers}, after cancelling the common factors,
gives
$\prod_{i\in S\setminus S'}z_i
=\prod_{i\in S'\setminus S}z_i$.
If we are not in the case $r=s=1$, then we can choose $l\in S\setminus S'$ and $k\not\in (S\setminus S')\cup (S'\setminus S)$ outside
this union. The  transposition $\mathit{(k,l)}$ from the monodromy group preserves the identity, hence $z_k=z_l$,
which contradicts distinctness of the generic roots.
In the only remaining case $r=s=1$
Equation~\eqref{eq:cS-powers} gives $c_{\{i\}}=z_i/\tilde X$, so the
two values are distinct as well.

Finally, there is still a possibility that there is a factor in $F$ that does not depend on $\tilde Y$. This is possible only for a positive integral exponent of $\tilde X$, if the maximal order of poles of $e_k(\{c_S\}_{|S|=s})$ is less than $\frac{1}{r+s}\binom{r+s}{s}$. However, since by Lemma~\ref{lem: constant term} $F(\tilde X,0)=({Q^{1/2}})^{\frac{s}{r+s}\binom{r+s}{s}}\ne0$, no positive power of $\tilde X$ divides all coefficients of $F$. 
\end{proof}

\subsection{Properties of the curve}  
Let $C$ be the smooth projective completion of the normalization of the affine curve defined by $F(\tilde X,\tilde Y)=0$.
The monodromy analysis in the proof of Proposition~\ref{prop:irreducible} can be expanded further to describe all ramification points of $\tilde{X}$ considered as a ramified covering $C\to \mathbb{C}\mathrm{P}^1$. 

Recall that $dx=d\log X$ has two zeros on the auxiliary curve, $p_1$ and $p_2$. Let $V_1 = ({X}(p_1))^s$ and $V_2 = ({X}(p_2))^s$, where, abusing notation a little bit, we consider here $(X(z))^s$ as a globally defined rational function on the auxiliary curve. 

As we discussed in the proof of Proposition~\ref{prop:irreducible}, the monodromy of the $\binom{r+s}{s}$-degree covering $\tilde{X}\colon C\to \mathbb{C}\mathrm{P}^1$ is induced by the action of the monodromy of $\tilde{X}=X^s$ considered as a $(r+s)$-degree covering $ \mathbb{C}\mathrm{P}^1\to  \mathbb{C}\mathrm{P}^1$,  on the functions $c_S$ indexed by the $s$-element subsets of points in the fiber. This immediately implies that all ramification points of  $\tilde{X}\colon C\to \mathbb{C}\mathrm{P}^1$ lie over the same four points $0,\infty,V_1,V_2$. Simple combinatorics of the action of a transposition (for the fibers over $V_1$ and $V_2$) or a product of disjoint cycles of length $r$ and $s$ (for the fibers over $0$ and $\infty$) on $s$-element subsets of an $(r+s)$-element set implies the following proposition:

\begin{proposition}\label{prop:ramificationprofile}
	The ramification profiles of $\tilde{X}\colon C\to \mathbb{C}\mathrm{P}^1$ over $V_1$ and $V_2$ are the same partitions of $\binom{r+s}{s}$ given by 
	\begin{align}
		(\, \underbrace{2,2,\ldots,2,2}_{\scriptscriptstyle\binom{r+s-2}{s-1}},\!\! \underbrace{1,1,\ldots,1,1}_{\scriptscriptstyle \binom{r+s-2}{s}+\binom{r+s-2}{s-2}} \!\!).
	\end{align}
	The ramification profiles of $\tilde{X}\colon C\to \mathbb{C}\mathrm{P}^1$ over $0$ and $\infty$ are given by the same partitions of $\binom{r+s}{s}$ with the parts $\ell_r\ell_s$, for $\ell_r| r$, $\ell_s| s$, with some multiplicities.
\end{proposition}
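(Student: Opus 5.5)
Since $C$ is irreducible by Proposition~\ref{prop:irreducible}, the points of $C$ over a point $v$ correspond to the orbits of the local monodromy element at $v$ acting on the $\binom{r+s}{s}$ sheets $\{c_S\}_{|S|=s}$. The ramification index of each point is the length of its orbit. By the proof of Proposition~\ref{prop:irreducible}, distinct subsets $S$ give distinct functions $c_S$, and the monodromy of $C$ is induced from that of $\tilde X\colon\mathbb{C}\mathrm{P}^1\to\mathbb{C}\mathrm{P}^1$. So the local monodromy of $C$ at $v$ is the induced action of the local permutation $\pi_v\in S_{r+s}$ on $s$-element subsets $S\subset\{1,\dots,r+s\}$. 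The plan is to identify $\pi_v$ for $v\in\{V_1,V_2,0,\infty\}$ and then count orbits.

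First, determine the $\pi_v$. We have $d\log\tilde X=s\,dx$, and its only zeros on the auxiliary curve are $p_1,p_2$. These are simple zeros, since a rational function of degree $r+s$ has $2(r+s)-2$ ramification, of which $0$ and $\infty$ already account for $2(r+s-2)$. Hence over $V_a$ the local monodromy is a single transposition $\tau_a$, assuming $V_1\neq V_2$ generically in $Q$. This is verified by checking that $X(p_1)^s\neq X(p_2)^s$ over $\mathbb{C}(Q^{1/2})$. If the two values coincided, the profile over the common point would be computed from the product of two disjoint transpositions, so this needs checking, e.g.\ via the explicit quadratic for $p_{1,2}$.

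Over $0$ and $\infty$, the asymptotic analysis in the proof of Theorem~\ref{thm:corrected-algebraicity} applies. It shows $s$ roots with $z\asymp\tilde X^{1/s}$ and $r$ roots with $z-Q^{1/2}\asymp\tilde X^{1/r}$, and analogously at $\infty$. So $\pi_0,\pi_\infty$ are products of disjoint cycles of lengths $s$ and $r$.

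Next comes the combinatorics. For a transposition $(a\,b)$, an $s$-subset $S$ is fixed iff it contains both or neither of $a,b$. This gives $\binom{r+s-2}{s-2}+\binom{r+s-2}{s}$ fixed points. The remaining subsets contain exactly one of $a,b$ and are swapped in pairs $S\leftrightarrow (a\,b)S$, giving $\binom{r+s-2}{s-1}$ two-cycles. This yields the stated profile over $V_1$ and $V_2$.

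For $\sigma=\gamma_s\gamma_r$ with $\gamma_s$ an $s$-cycle on $A$ and $\gamma_r$ an $r$-cycle on $B$, write $S=S_A\sqcup S_B$. The orbit length of $S$ is $\mathrm{lcm}(\ell_s,\ell_r)$, where $\ell_s\mid s$ is the $\gamma_s$-period of $S_A$ and $\ell_r\mid r$ is the $\gamma_r$-period of $S_B$. Since $\gcd(r,s)=1$, we get $\ell_s\mid s$, $\ell_r\mid r$ coprime, and $\mathrm{lcm}=\ell_r\ell_s$. Thus the parts have the form $\ell_r\ell_s$.

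Finally, the profiles over $0$ and $\infty$ coincide because both come from an element of the same cycle type $(s,r)$ acting on $s$-subsets. The orbit structure depends only on the conjugacy class, since conjugation by a relabelling bijects the $s$-subsets equivariantly.

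The main obstacle is rigorously ensuring the local monodromies are as claimed: that $V_1\neq V_2$, and that $V_a\notin\{0,\infty\}$. The latter holds because the $p_a$ are not among the points $0$, $Q^{\pm1/2}$, $\infty$, where $X^s$ takes the values $0$ or $\infty$. I would settle $V_1\ne V_2$ by computing $p_{1,2}$ as the roots of the quadratic from $dx=0$ and observing that $X(p_1)^s/X(p_2)^s$ is a nonconstant function of $Q^{1/2}$.
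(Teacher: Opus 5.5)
Your proposal is correct and follows the same route as the paper. The argument identifies the monodromy of $\tilde X\colon C\to\mathbb{C}\mathrm{P}^1$ with the induced action of the monodromy of $X^s$ on $s$-element subsets, which is a transposition over $V_1,V_2$ and a product of disjoint $s$- and $r$-cycles over $0,\infty$, and then counts orbits. You also make explicit several points the paper leaves implicit: simplicity of the zeros of $dx$ via Riemann--Hurwitz, $V_a\notin\{0,\infty\}$, the $\mathrm{lcm}(\ell_r,\ell_s)=\ell_r\ell_s$ orbit length, and $V_1\neq V_2$, which follows quickly from $p_1p_2=1$, $\tilde X(z)\tilde X(1/z)=Q^{-r}$ and the $Q\to0$ asymptotics $V_2\sim\frac{s^sr^r}{(r+s)^{r+s}}Q^{s/2}$.
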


We omit full discussion of the partitions over $0$ and $\infty$, in particular, we give no combinatorial description of the multiplicities of their parts, since it will be irrelevant for our main application. 

\subsection{Quantum curve}\label{sec:quantum-curve}
In this subsection, we derive the corresponding quantum curve statement for the Ooguri-Vafa partition function that quantizes the true physical curve. The derivation is based on the quantum curve statement for the extended Ooguri-Vafa partition function~\cite[Section 10]{DPSS-HOMFLY} obtained by the quantization of the auxiliary curve and an adaptation of the construction of~\cite[Secton~7]{EGFMO2024} to shift operators.

We define the wave function as 
\begin{align}\label{eq:quantum-wave-functions}
\psi(\tilde X)&=Z|_{p_m=\tilde X^m}
=\sum_{\ell\geq0}Q^{r\ell/2}
H_{(\ell)}^{r,s}(Q^{-1/2},e^{\hbar/2})\tilde X^\ell
\end{align}
note that just one row partitions $\lambda=(l)$ contribute to the wave function, by the specialization $p_k=\tilde{X}^k$ the other contributions vanish.
Equivalently, 
\begin{align}\label{eq:quantum-integrated-correlators}
\log\psi(\tilde X)
=\sum_{\substack{g\geq0\\n\geq1}}
\frac{\hbar^{2g-2+n}}{n!}
\int_0^{\tilde X}\!\cdots\!\int_0^{\tilde X}
\left(\omega_n^{(g)}-
\delta_{(g,n),(0,2)}
\frac{d\tilde X_1d\tilde X_2}{(\tilde X_1-\tilde X_2)^2}\right).
\end{align}

Let $\widehat{\tilde{Y}}$ be the formal shift operator $\exp(\hbar \tilde{X} \partial_{\tilde X})$. Let $\widehat{\tilde{X}}$ denote the operator of multiplication by $\tilde{X}$. 

\begin{theorem}\label{thm:quantum-true-curve}
	There is an effectively computable
	finite shift operator
	\begin{align}\label{eq:quantum-true}
		\widehat F=
		\Big(\widehat{\tilde{X}}\Big)^{\frac 1{(r+s)} \binom{r+s}{s}}
		\left(\Big(\widehat{\tilde{Y}}\Big)^{\binom{r+s}{s}}+\sum_{j=0}^{\binom{r+s}{s}-1}a_j\Big(\widehat{\tilde X},Q^{1/2},e^{\hbar/(2s)}\Big)\Big(\widehat{\tilde{Y}}\Big)^j\right)
	\end{align}
	such that $\widehat F\psi=0$. The coefficients $a_j$ are rational
	functions regular at $\hbar=0$
	over $\mathbb C(Q^{1/2})(\tilde X)$. The semi-classical limit of $\widehat{F}$ is 
	\begin{align}\label{eq:quantum-symbol}
		\left.\widehat F\right|_{\hbar=0,\ \widehat{\tilde{X}}\to \tilde{X},\  \widehat{\tilde{Y}}\to \tilde{Y}}
		=F(\tilde X,\tilde Y).
	\end{align}
\end{theorem}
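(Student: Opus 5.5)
The plan is to quantize the construction of Theorem~\ref{thm:corrected-algebraicity}, replacing the algebraic roots $z_1,\dots,z_{r+s}$ by a basis of solutions of the quantum curve of the auxiliary (extended) problem.

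\textbf{Step 1: the extended wave function and its shift equation.} I will start from the quantum curve for the extended Ooguri--Vafa wave function $\psi^{\mathsf{ext}}(X)=Z^{\mathsf{ext}}|_{p_m=X^m}$ established in~\cite[Section 10]{DPSS-HOMFLY}. It has the form of a finite-order $q$-difference equation in the shift $\widehat Y=\exp(\hbar X\partial_X)$, of order $r+s$ after clearing denominators. Its classical limit is $X^s(1-Q^{-1/2}z)^r=z^s(1-Q^{1/2}z)^r\cdot(\text{unit})$, i.e.\ the polynomial $P$ in~\eqref{eq:z-root-polynomial} rewritten in the variables $(X,e^{y})$.

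\textbf{Step 2: the averaging structure.} I will use the relation $\psi(\tilde X)|_{\tilde X=X^s}=\prod_{\alpha}$-type averaging coming from~\eqref{eq: averaging}. Concretely, the specialization $p_m=\tilde X^m$ of $Z$ equals the projection of $\psi^{\mathsf{ext}}$ onto powers of $X$ divisible by $s$:
\[
\psi(X^s)=\frac1s\sum_{i=0}^{s-1}\psi^{\mathsf{ext}}(\xi^iX).
\]
This can be checked directly from the one-row Rosso--Jones terms, since $J_{sm}$ versus $J_m$ only selects degrees.

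In the variable $\tilde X$, the shift $\widehat{\tilde Y}=\exp(\hbar\tilde X\partial_{\tilde X})$ corresponds to $\exp(\frac{\hbar}{s}X\partial_X)$. This is the source of the parameter $e^{\hbar/(2s)}$ in the coefficients $a_j$.

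\textbf{Step 3: building the operator of order $\binom{r+s}{s}$.} Following~\cite[Section 7]{EGFMO2024}, I will pass from the $(r+s)$-dimensional solution space of the extended equation to its $s$-th exterior power. The relevant objects are the quantum analogues of the products $\prod_{i\in S}z_i$: these are $s\times s$ Casoratian-type determinants (the $q$-analogue of Wronskians) built from $s$ solutions. They span a $\binom{r+s}{s}$-dimensional space.

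Because $\psi(X^s)$ is a combination of the $s$ rotated solutions $\psi^{\mathsf{ext}}(\xi^iX)$, a suitable $s$-fold determinant identity should express $\psi$ as a quantum analogue of $c_{S_0}$. Invariance under $X\mapsto\xi X$ together with $\gcd(r,s)=1$ allows this to be written in $\tilde X$; this step is the quantum version of the $p,q$ Bezout trick.

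The exterior power of a linear $q$-difference system is again a linear system of rank $\binom{r+s}{s}$. A cyclic vector argument then yields a scalar operator annihilating $\psi$, of the form~\eqref{eq:quantum-true}, with coefficients rational in $\tilde X$ and effectively computable by linear algebra.

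\textbf{Step 4: classical limit and normalization.} I will normalize by the leading coefficient and multiply by $\tilde X^{\frac1{r+s}\binom{r+s}{s}}$. At $\hbar=0$ the characteristic polynomial of the exterior-power system has roots $\prod_{i\in S}$ of the classical roots, suitably transformed into $c_S$. Its symbol is therefore $\prod_S(\tilde Y-c_S)$, which gives $F$ by~\eqref{eq:F}.

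Regularity of the $a_j$ at $\hbar=0$ will follow because the construction involves only polynomial operations and division by a Casoratian whose classical limit is a Vandermonde in the $c_S$. This Vandermonde is nonzero because the $c_S$ are distinct (proof of Proposition~\ref{prop:irreducible}).

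\textbf{Main obstacle.} The hardest step is Step 3: identifying $\psi$ itself, not merely some determinant, as a solution of the exterior-power equation. Equivalently, one must show that the span of $\psi(\tilde X)$ and its $\widehat{\tilde Y}$-shifts has dimension at most $\binom{r+s}{s}$.

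My first attempt will be dimension counting. The shifts $\psi(e^{k\hbar}\tilde X)$ lie in the span of the functions $\psi^{\mathsf{ext}}(\xi^ie^{k\hbar/s}X)$. Using the extended equation, each of these reduces modulo an $(r+s)$-dimensional module over the rational functions. Tracking the $\mathbb Z/s$-equivariance should then bound the dimension of the span by $\binom{r+s}{s}$.

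If this fails, the fallback is to verify $\widehat F\psi=0$ directly on the coefficients of~\eqref{eq:quantum-wave-functions}. These are explicit $q$-hypergeometric sums, so the identity could be checked by a $q$-Zeilberger-type creative telescoping argument. This route would give effectiveness but not the explicit product form.
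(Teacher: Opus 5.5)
Your overall architecture is the one the paper uses: pass to the $s$-th exterior power of the extended first-order system, eliminate as in~\cite{EGFMO2024}, read off the classical eigenvalues $c_S$, and descend via $\xi$-equivariance. But the step connecting $\psi$ to that system is wrong, and it is exactly the step you flag as the main obstacle.

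\textbf{The identity in Step~2 is false.} The claim $\psi(X^s)=\frac1s\sum_{i=0}^{s-1}\psi^{\mathsf{ext}}(\xi^iX)$ does not hold. Equation~\eqref{eq: averaging} concerns derivatives of $\log Z$ at $p=0$. It holds because $Z$ is $Z^{\mathsf{ext}}$ restricted to the times $p'_{sm}=sp_m$, with $p'_k=0$ for $s\nmid k$. The wave function, however, is $Z^{\mathsf{ext}}$ evaluated at $p'_k=\sum_{i=0}^{s-1}(\xi^iX)^k$, that is, at $s$ Miwa points simultaneously. This is not the average of the one-point evaluations.

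A concrete check with $s=2$:
\begin{itemize}
\item The coefficient of $\tilde X$ in $\psi$ is $\langle0|J_2\mathcal{DJ}|0\rangle=\langle(2)|\mathcal{DJ}|0\rangle-\langle(1,1)|\mathcal{DJ}|0\rangle$, since $p_2=s_{(2)}-s_{(1,1)}$. So the Adams coefficients are not concentrated on the one-row partition.
\item The coefficient of $X^2$ in $\psi^{\mathsf{ext}}$ is only $\langle(2)|\mathcal{DJ}|0\rangle$.
\item The discrepancy $e^{-\frac rs\hbar}s_{(1,1)}(\hat q_1/\hbar,\hat q_2/\hbar,\dots)$ is nonzero; its leading part is $(Q^{-1/2}-Q^{1/2})^2/(2\hbar^2)$.
\end{itemize}

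Semiclassically the mismatch is also visible. We have $\log\psi\sim\hbar^{-1}\int\omega^{(0)}_1$, and $\omega^{(0)}_1$ is the sum over the $s$ branches. So $\psi$ behaves like the product $\prod_i\psi^{\mathsf{ext}}(\xi^iX)$, and $\widehat{\tilde Y}$ acts on it with eigenvalues $c_S=\exp\big(\frac1s\sum_{a\in S}y(z_a)\big)$. On your average the eigenvalues are single-root quantities $e^{y(z_a)/s}$. Hence Steps~2 and~4 are incompatible: a linear combination of rotated solutions is not an $s\times s$ determinant of them. Your dimension count would produce an operator for the wrong function, with a symbol unrelated to $F$. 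The creative-telescoping fallback controls neither the order nor the classical limit, so it does not rescue the statement.

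\textbf{The missing ingredient.} This is the heart of the paper's proof: a determinant formula obtained from $s_\mu(x_1,\dots,x_s)=\det(x_i^{\mu_j+s-j})/\det(x_i^{s-j})$ at $x_i=\xi^iX$. In it, the content weights of $\mathcal D$ factor row by row into $\hbar$-shifts:
\[
\psi(\tilde X)=\frac{\det\big((\xi^iX)^j\,\psi^{\mathsf{ext}}(e^{\frac rs\hbar(j-s+1)}\xi^iX)\big)_{i,j=0}^{s-1}}{\det\big((\xi^iX)^j\big)_{i,j=0}^{s-1}}.
\]
Here is how it places $\psi$ in the exterior-power system.
\begin{itemize}
\item Write the extended equation as a first-order system $\widehat{\tilde Y}v=C(X)v$. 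Note the shift there is $\widehat{\tilde Y}=\exp(\frac\hbar sX\partial_X)$, not $\exp(\hbar X\partial_X)$.
\item Take the twisted rotated vectors $v_i^k=\xi^{pik}\widehat{\tilde Y}^k\psi^{\mathsf{ext}}(\xi^iX)$ with $rp\equiv1\pmod s$. They all satisfy $\widehat{\tilde Y}v_i=\xi^{-pi}C(X)v_i$; this is the actual quantum B\'ezout trick.
\item Consequently $v_0\wedge\cdots\wedge v_{s-1}$ satisfies a rank-$\binom{r+s}{s}$ system with matrix $(-1)^{s-1}\wedge^sC(X)$.
\item Up to roots of unity, the determinant entries are the components $v_i^{r(j-s+1)}$, which have non-positive index.
\item Running the extended recurrence backwards expresses these through $v_i^0,\dots,v_i^{r+s-1}$, with coefficients independent of $i$. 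The powers of $X$ cancel, giving $\psi=b_0(X)(v_0\wedge\cdots\wedge v_{s-1})$ for an explicit row $b_0$.
\end{itemize}
Only then can you iterate $b_{j+1}=(-1)^{s-1}b_j(e^{\hbar/s}X)\wedge^sC(X)$ and solve for the $a_j$.

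\textbf{Two further points your sketch skips.}
\begin{itemize}
\item Uniqueness and regularity at $\hbar=0$ need more than distinct $c_S$. You also need $b_0$ to be nonzero on each eigen-wedge $e_S$, because the relevant determinant is $\prod_S b_0(e_S)$ times the Vandermonde of the $c_S$. In the paper, $b_0(e_S)$ is a Vandermonde in $z_a/X$, $a\in S$.
\item The descent from $X$ to $\tilde X$ uses the explicit equivariance $C(\xi X)=\xi^{-p}D^{-1}C(X)D$ and $b_0(\xi X)=(-1)^{s-1}b_0(X)\wedge^sD$, together with that uniqueness.
\end{itemize}
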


\begin{proof} We give an algorithm to construct $\widehat{F}$ rather than its explicit description. First, we recall the statement of~\cite[Section~10]{DPSS-HOMFLY}, in our notation. 
Let 
\begin{align}
	\psi^{\mathsf{ext}}(X)&=Z^{\mathsf{ext}}|_{p_k=X^k}
\end{align}
be the extended wave function. Then, with $\widehat{\tilde{Y}} = \exp(\frac\hbar s X\partial_X) $ and $\widehat{X}$ being multiplication by $X$, 
\begin{align}\label{eq:quantum-extended}
\left[\Big(\widehat{\tilde{Y}}\Big)^s-1-e^{\hbar/2}\widehat{X} \Big(\widehat{\tilde{Y}}\Big)^r
\left(Q^{-1/2}\Big(\widehat{\tilde{Y}}\Big)^s-Q^{1/2}\right)\right]
\psi^{\mathsf{ext}}(X)=0.
\end{align}

Using the determinant definition of the Schur functions, one can represent the wave function $\psi(\tilde X)$ in terms of $\psi^{\mathsf{ext}}(X)$, with $\tilde X = X^s$, as 
\begin{align}\label{eq:quantum-determinant}
\psi(\tilde X)=
\frac{\det\left((\xi^iX)^j
\psi^{\mathsf{ext}}\left(e^{\frac rs\hbar(j-s+1)}\xi^iX\right)
\right)_{i,j=0}^{s-1}}
{\det\left((\xi^iX)^j\right)_{i,j=0}^{s-1}}.
\end{align}

Following the idea of~\cite[Section~7]{EGFMO2024}, for a solution $f=f(X)$ of Equation~\eqref{eq:quantum-extended} we rewrite this equation as 
\begin{align}
 \Big(\widehat{\tilde{Y}}\Big)^{r+s}f
=
-Q^{1/2}e^{-\hbar/2}X^{-1}f
+Q\Big(\widehat{\tilde{Y}}\Big)^rf
+Q^{1/2}e^{-\hbar/2}X^{-1} \Big(\widehat{\tilde{Y}}\Big)^sf
\end{align}
and then encode this in the matrix form, acting on the vector $v\coloneqq (f, \widehat{\tilde{Y}}f,\cdots,\big(\widehat{\tilde{Y}}\big)^{r+s-1}f)^t$ as 
$\widehat{\tilde{Y}} v^i = \sum_{j=0}^{r+s-1}C^i_j(X) v^j$, where $0\leq i<r+s$. Here
\begin{align}
C^i_{j}(X)=
\begin{cases}
	\delta_{j,i+1},&0\leq i<r+s-1,\\[2mm]
	Q\delta_{j,r}
	+Q^{1/2}e^{-\hbar/2}X^{-1}
	(\delta_{j,s}-\delta_{j,0}),&i=r+s-1.
\end{cases}
\end{align}
Choose $p$ as in \eqref{eq:cS-definition}, so $rp\equiv1\pmod s$. The column vectors $v_i = (v_i^0,\dots,v_i^{r+s-1})^t$, $i=0,\dots,s-1$, with
$v_i^k = \xi^{pik}\big(\widehat{\tilde{Y}}\big)^k
	\psi^{\mathsf{ext}}(\xi^iX)$, $0\leq k<r+s$,
satisfy $\widehat{\tilde{Y}} v_i=\xi^{-pi} C(X)v_i$. To implement this in one equation, one pass from the $(r+s)$-dimensional space to the $s$-exterior power providing an $\binom{r+s}{s}$-dimensional linear difference system. 
Therefore,
\begin{align}\label{eq:quantum-exterior-system}
	\widehat{\tilde{Y}} \left( v_0\wedge\cdots\wedge v_{s-1} \right)=(-1)^{s-1}\big(\wedge^s C(X) \big) \left( v_0\wedge\cdots\wedge v_{s-1} \right)
\end{align}
(the sign here is implied by $\prod_{i=0}^{s-1}\xi^{pi}=(-1)^{s-1}$).

We now have to express $\psi(\tilde X)$ through the coordinates of the vector
$v_0\wedge\cdots\wedge v_{s-1}$ and then eliminate these coordinates.
To this end, extend the notation $v_i^k$ to all $k\in\mathbb Z$
by the same formula. Equation~\eqref{eq:quantum-extended} implies
\begin{align}\label{eq:quantum-component-recurrence}
v_i^{k+r+s}=Qv_i^{k+r}
+Q^{1/2}e^{-\hbar/2-k\hbar/s}X^{-1}
\left(v_i^{k+s}-v_i^k\right).
\end{align}
In particular, solving this equation for $v_i^k$ expresses every
component with $k<0$ as a rational linear combination of
$v_i^0,\ldots,v_i^{r+s-1}$. The coefficients of this linear
combination do not depend on $i$.

Substitute
$\psi^{\mathsf{ext}}(e^{\frac rs\hbar(j-s+1)}\xi^iX)
=\xi^{-pi r(j-s+1)}v_i^{r(j-s+1)}$
in Equation~\eqref{eq:quantum-determinant}. Since $pr\equiv1\pmod s$,
the powers of $X$ in the denominator cancel and we obtain
\begin{align}\label{eq:quantum-wave-components}
\psi(\tilde X)=
\frac{(-1)^{s-1}}{\det\left(\xi^{ij}\right)_{i,j=0}^{s-1}}
\det\left(v_i^{r(j-s+1)}\right)_{i,j=0}^{s-1}.
\end{align}
Using Equation~\eqref{eq:quantum-component-recurrence}, the determinant
on the right becomes a linear combination of the determinants
\begin{align}
\det \Big( v_i^{k_j} \Big)_{\substack{0\leq i<s\\1\leq j\leq s}},
\qquad 0\leq k_1<\cdots<k_s<r+s,
\end{align}
which are indeed the coordinates of
$v_0\wedge\cdots\wedge v_{s-1}$. Denote the resulting row of
coefficients, including the constant prefactor in
\eqref{eq:quantum-wave-components}, by $b_0(X)$. Thus
\begin{align}\label{eq:quantum-wave-row}
\psi(\tilde X)=b_0(X)
\left(v_0\wedge\cdots\wedge v_{s-1}\right).
\end{align}
This gives an explicit finite procedure to compute $b_0(X)$, which is a row with $\binom{r+s}{s}$ entries. Then, define rows $b_{j+1}(X)\coloneqq (-1)^{s-1}b_j(e^{\hbar/s}X)\big(\wedge^sC(X)\big)$, $j\geq 0$. Applying $\widehat{\tilde Y}$ and using
Equation~\eqref{eq:quantum-exterior-system}, we obtain
\begin{align}\label{eq:quantum-row-iteration}
\Big(\widehat{\tilde Y}\Big)^j\psi(\tilde X)
&=b_j(X)\left(v_0\wedge\cdots\wedge v_{s-1}\right).
\end{align}

Now, in full analogy with~\cite[Section~7.1]{EGFMO2024}, we find the coefficients
$a_j(X)$ by solving the linear system
\begin{align}\label{eq:quantum-elimination}
b_{\binom{r+s}{s}}(X)+\sum_{j=0}^{\binom{r+s}{s}-1}a_j(X)b_j(X)=0.
\end{align}
Note that a priori $a_j$ can depend on $X$ itself, not exclusively on $\tilde X = X^s$. However, a solution of this
system gives the required equation $\widehat F\psi=0$. It still remains
to show that this system has a unique solution, that the coefficients $a_j$, $j=0,\dots, \binom{r+s}{s}-1$, are regular at $\hbar=0$ and depend only on $X^s$, and that the semi-classical limit of $\widehat{F}$ is $F$.

Let us first put $\hbar=0$. The eigenvalues $t_1,\ldots,t_{r+s}$
of $C(X)$ satisfy
\begin{align}\label{eq:quantum-classical-eigenvalues}
t^s-1-Xt^r\left(Q^{-1/2}t^s-Q^{1/2}\right)=0,
\end{align}
and the corresponding eigenvectors are
$(1,t_i,\ldots,t_i^{r+s-1})^t$. Putting $z_i=Xt_i^r$ gives
\begin{align*}
t_i^s=\frac{1-Q^{1/2}z_i}{1-Q^{-1/2}z_i}.
\end{align*}
Hence the $z_i$ are the roots of $P$ in
Equation~\eqref{eq:z-root-polynomial}. Since $r$ and $s$ are coprime, this correspondence
between the $t_i$'s and the $z_i$'s is bijective. By
Equations~\eqref{eq:cS-powers},
\begin{align}\label{eq:quantum-exterior-eigenvalues}
(-1)^{s-1}\prod_{i\in S}t_i=c_S,\qquad |S|=s.
\end{align}
These are the eigenvalues of $(-1)^{s-1}\wedge^sC(X)$, and they
are generically distinct by Proposition~\ref{prop:irreducible}.

The value of $b_0(X)$ on the exterior product of eigenvectors
indexed by $S\subset\{1,\dots,r+s\}$, $|S|=s$, is a nonzero constant times
\begin{align}
\det\Big(t_i^{r(j-s+1)}\Big)_{\substack{i\in S\\0\leq j<s}}.
\end{align}
This determinant is proportional (with a nonvanishing factor) to the Vandermonde determinant in the distinct values
$t_i^r=z_i/X$. At $\hbar=0$, the recursion that defines $b_{j}$ in terms of $b_{j-1}$, $j\geq 1$,
implies that the value of $b_j$ on this eigenvector
is $c_S^j$ times the nonzero value of $b_0$ on the same eigenvector. The Vandermonde
determinant in the $c_S$ now shows that the first
$\binom{r+s}{s}$ rows $b_j$ are linearly independent.
Hence, the system~\eqref{eq:quantum-elimination} has a unique
solution, which is rational in $X,Q^{1/2},e^{\hbar/(2s)}$ and regular at
$\hbar=0$. Moreover, in the classical limit $\hbar\to 0$ it gives
\begin{align}
\tilde Y^{\binom{r+s}{s}}
+\sum_{j=0}^{\binom{r+s}{s}-1}
\left.a_j\right|_{\hbar=0}\tilde Y^j
=\prod_{|S|=s}(\tilde Y-c_S).
\end{align}
Multiplication by
$\tilde X^{\frac1{r+s}\binom{r+s}{s}}$ gives precisely $F$, hence the semi-classical limit of $\widehat{F}$ is indeed $F$. This is the analogue of the classical-limit
argument in~\cite[Theorem~7.2]{EGFMO2024}.

Finally, let us prove that $a_j$ depend on $X$ only through
$\tilde X=X^s$. To this end, notice that 
\begin{align}
C(\xi X)=\xi^{-p}D^{-1}C(X)D,
\end{align}
where $D=\operatorname{diag}(1,\xi^p,\ldots,\xi^{p(r+s-1)})$.
Taking the $s$-th exterior power cancels the scalar
$\xi^{-ps}=1$. Let us show that the recurrence
\eqref{eq:quantum-component-recurrence} and the determinant in
\eqref{eq:quantum-wave-components} similarly give
\begin{align}
b_0(\xi X)=(-1)^{s-1}b_0(X)\big(\wedge^sD\big).
\end{align}
Indeed, replacing $X$ by $\xi X$ in the recurrence is compatible
with multiplying its component of index $k$ by $\xi^{-pk}$.
For the indices $r(j-s+1)$ the product of these factors is
$\xi^{prs(s-1)/2}=(-1)^{s-1}$.

The recursion defining $b_j$'s then gives the same
transformation rule for each of them, $j\geq 0$.  Uniqueness of the solution
of \eqref{eq:quantum-elimination} implies
$a_j(\xi X)=a_j(X)$. Thus $a_j$ depend on $X$ through $\tilde X=X^s$, as required.
\end{proof}

\begin{remark} It would be interesting to compare this statement on quantum curve with the one recently obtained in~\cite{CEL2026}. A priori there is no reason to expect that the operator in \emph{op.~cit.} matches exactly the one we construct here. 
\end{remark}

\section{Augmentation ideal} \label{sec:Augmentation}

Fix $r,s\geq 1$ and assume they are coprime. 
Here we present a construction of the augmentation variety, following~\cite{Ng-Introduction}, which is already adapted here specifically for the case of torus knots given by the closure of the braid $B^s$ on $r$ strings, where $B=\sigma_1\cdots\sigma_{r-1}$ (some of these specializations are left as exercises for the reader in~\emph{op.~cit.}).

Take the algebra $\mathcal A=\mathbb C(Q^{1/2})[a_{ij},\tilde X^{\pm1},\tilde Y^{\pm1}]$, where $1\le i\ne j\le r$.
Thus $\tilde X\tilde Y\ne0$ throughout the augmentation construction. Associate to it
\begin{align}\label{PhiBLdefinetion}
	\Phi_B^L \coloneqq 
	\begin{pmatrix}
		-a_{21}&1&0&\cdots&0\\
		-a_{31}&0&1&\cdots&0\\
		\vdots&\vdots&\ddots&\ddots&\vdots\\
		-a_{r1}&0&\cdots&0&1\\
		1&0&\cdots&0&0
	\end{pmatrix},
	\qquad
	\Phi_B^R \coloneqq
	\begin{pmatrix}
		-a_{12}&-a_{13}&\cdots&-a_{1r}&1\\
		1&0&\cdots&0&0\\
		0&1&\cdots&0&0\\
		\vdots&\ddots&\ddots&\vdots&\vdots\\
		0&\cdots&0&1&0
	\end{pmatrix}.
\end{align}
We assume that for $r=1$, \(\Phi_B^L=\Phi_B^R=(1)\).
Let 
\begin{align}
	\Phi^L_{B^s}
	=
	\phi_B^{s-1}(\Phi_B^L)\cdots \phi_B(\Phi_B^L)\Phi_B^L;
	\qquad 
	\Phi^R_{B^s} 
	=
	\Phi_B^R \phi_B(\Phi_B^R)\cdots\phi_B^{s-1}(\Phi_B^R),
\end{align}
where the action of $\phi_B$ on $a_{ij}$ is given by
\begin{align} \label{eq:action}
	\phi_{B}(a_{ij})
=
\begin{cases}
	a_{i+1,j+1},&i,j<r,\\
	-a_{i+1,1},&j=r,\\
	-a_{1,j+1},&i=r,
\end{cases}
\end{align}
and 
\begin{align}
	\phi^m_B(a_{ij})
	=
	(-1)^{
		\left\lfloor\frac{i+m-1}{r}\right\rfloor+
		\left\lfloor\frac{j+m-1}{r}\right\rfloor}
	a_{\overline{i+m},\,\overline{j+m}},
\end{align}
where \(\overline{k}\) is the representative of \(k\pmod r\) in \(\{1,\ldots,r\}\).	Here, $\phi_B$ is called the \emph{braid action} \(\phi_B:\mathcal A\to\mathcal A\) which defines an algebra automorphism over \(R=\mathbb C(Q^{1/2})[\widetilde X^{\pm1},\widetilde Y^{\pm1}]\). In particular, for all \(a,b\in\mathcal A\) and \(c\in R\), it satisfies
\[
\begin{aligned}
\phi_B(a+b)&=\phi_B(a)+\phi_B(b),\\
\phi_B(ca)&=c\,\phi_B(a),\\
\phi_B(ab)&=\phi_B(a)\phi_B(b),\\
\phi_B(1)&=1.
\end{aligned}
\]
Moreover, \(\phi_B\) is bijective, with inverse \(\phi_B^{-1}=\phi_B^{r-1}\), since \(\phi_B^r=\mathrm{id}\). The automorphism $\phi_B$ extends entrywise to matrices.

Let 
$
\Lambda \coloneqq
	\operatorname{diag}\!\left(
	Q^{\,r-\frac s2}\tilde X\,\tilde Y^s,\,
	1,\ldots,1
	\right)
$ be an $r\times r$ matrix, and define
\begin{equation}\label{eq:aug-A-matrices}
	A_{ij}=\begin{cases}
		\tilde{Y} a_{ij},&i<j,\\
		-Qa_{ij},&i>j,\\
		\tilde{Y}-Q,&i=j,
	\end{cases}
	\qquad
	\widehat A_{ij}=\begin{cases}
		\tilde{Y} Q a_{ij},&i<j,\\
		-Q a_{ij},&i>j,\\
		Q\tilde{Y} -Q,&i=j.
	\end{cases}
\end{equation}

Consider the ideal in $\mathcal{A}$ generated by the entries of 
\begin{equation}\label{eq:aug-two-matrix-equations1}
	\widehat A-\Lambda\Phi_{B^s}^L A,
	\qquad \text{and} \qquad 
	A\Lambda-\widehat A\Phi_{B^s}^R.
\end{equation}
We refer to the corresponding matrix equations as the \emph{augmentation equations}
\begin{equation}\label{eq:aug-two-matrix-equations}
	\widehat A-\Lambda\Phi_{B^s}^L A=0,
	\qquad \text{and} \qquad 
	A\Lambda-\widehat A\Phi_{B^s}^R=0.
\end{equation}
This ideal, denoted by $I$, defines the augmentation variety with $\tilde X\tilde Y\ne0$; points are taken over an algebraic closure of $\mathbb C(Q^{1/2})$.
The augmentation curve is the Zariski closure of its projection in the torus $(\tilde X,\tilde Y)$.
We use the closure of this curve in the affine plane to define its polynomial equation.
\begin{proposition} \label{prop:augmentation-projection}
The closure of the augmentation curve in the affine plane is an irreducible curve given by an irreducible polynomial $\tilde F(\tilde X,\tilde Y)=0$.
\end{proposition}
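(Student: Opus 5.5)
The plan is to prove irreducibility by exhibiting the augmentation variety, up to lower-dimensional pieces, as the image of an irreducible variety. The natural candidate is the curve $C$ of Proposition~\ref{prop:irreducible}. First I will rewrite the augmentation equations~\eqref{eq:aug-two-matrix-equations} in a more tractable form. Since $\widehat A$ differs from $A$ only by multiplying the strictly upper triangular part and the diagonal by $Q$, I will write $A=\tilde Y U-QL$, with $U$ unipotent upper triangular in the $a_{ij}$, $i<j$, and $L$ unipotent lower triangular in the $a_{ij}$, $i>j$. Then $\widehat A=Q(\tilde YU-L)$, and the two equations become conjugation-type identities:
\begin{align*}
\Lambda\Phi^L_{B^s}=\widehat A A^{-1},\qquad \Phi^R_{B^s}=\widehat A^{-1}A\Lambda,
\end{align*}
valid on the open set where $A$ is invertible. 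From these I will extract a characteristic-polynomial condition. In particular $\det\widehat A=\det(\Lambda\Phi^L_{B^s})\det A$, and comparing $\Phi^L_{B^s}$ with $\Phi^R_{B^s}$ through the known symmetry of Ng's braid matrices should give spectral conditions relating $\tilde X$, $\tilde Y$ to the eigenvalues of a companion-type matrix.

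Second, I will parametrize solutions. For $r$ strands the matrix $\Phi^L_{B^s}$ is an explicit product of $s$ companion-like matrices in the cyclically shifted variables. I expect the equations to determine all $a_{ij}$ rationally in terms of a point $z$ in a fiber of $\tilde X=X(z)^s$ together with an $s$-element subset $S$ of the fiber, with $\tilde Y=c_S$. This should mirror the structure of $C(X)$ and $\wedge^sC(X)$ in the proof of Theorem~\ref{thm:quantum-true-curve}. Concretely, I will look for solutions in which $\Phi^L_{B^s}$ is diagonalized by Vandermonde-type matrices in the roots $t_i$ of~\eqref{eq:quantum-classical-eigenvalues}. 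This yields a dominant rational map from (a cover of) $C$ onto a component of the augmentation variety whose projection to the $(\tilde X,\tilde Y)$-torus has image dense in $\{F=0\}$. Since $C$ is irreducible, that projected component is irreducible and one-dimensional.

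The main obstacle is the converse direction: every point of the augmentation variety with $\tilde X\tilde Y\neq0$ and $\det A\neq0$ must project into $\{F=0\}$, and the locus $\det A=0$ (as well as the classical spurious components such as $\tilde Y=1$ or $\tilde Y=Q$ familiar from augmentation polynomials) must contribute nothing new to the closure. For the generic part I would first try the conjugation identities: $\Lambda\Phi^L_{B^s}$ and $\Phi^R_{B^s}\Lambda^{-1}$ are conjugate by $A$, up to the triangular twist. Taking the $(1,1)$ entries and determinants should force $\tilde Y$ to be a product $(-1)^{s-1}\prod_{i\in S}t_i$, hence a root of $F(\tilde X,\cdot)$. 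For the degenerate loci I would inspect the equations directly, using triangularity of $A$ at $\tilde Y=Q$ and $\tilde Y=1$. The goal there is to show that the only solutions force a contradiction with $\tilde X\tilde Y\ne0$, or lie over finitely many points of the torus, so they do not change the Zariski closure.

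Once the closure is shown to equal an irreducible plane curve, its defining ideal in $\mathbb C(Q^{1/2})[\tilde X,\tilde Y]$ is principal and prime. I take $\tilde F$ to be the generator, normalized so that it is not divisible by $\tilde X$ or $\tilde Y$. The polynomial $\tilde F$ is then irreducible, which is the claim.
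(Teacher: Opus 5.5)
Your two-sided architecture matches the paper's. It shows (I) that every augmentation point with $\tilde X\tilde Y\neq0$ projects into $\{F=0\}$, and (II) that a generic point of $\{F=0\}$ lifts to an augmentation; irreducibility then comes from Proposition~\ref{prop:irreducible}.

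The gap is that (I), which you correctly flag as the main obstacle, has no working mechanism. The conjugacy of $\Lambda\Phi^L_{B^s}$ and $\Lambda(\Phi^R_{B^s})^{-1}$ by $A$, the $(1,1)$ entries, and the determinant identity are scalar relations that still involve the $r(r-1)$ unknowns $a_{ij}$. They cannot eliminate those unknowns by themselves. The determinant is useful: with $Z=Q^{1/2}\tilde Y^{-1}DU^{-1}$ it gives exactly the product identities of Lemma~\ref{lem:properties-Z-eigenvalues} over the eigenvalues of $Z$. But these products say nothing about $F$ unless you know that the eigenvalues of $Z$ are roots of $P(z)$ for the given $\tilde X$, with multiplicity, i.e.\ that $\det(z\,\mathrm{Id}-Z)$ divides $P(z)$ (Lemma~\ref{lem:Pol-Z}). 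Nothing in your plan produces this, and your step (II) is likewise only an expected ansatz.

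The missing idea is a structural reduction of the whole system to a single companion matrix. The symmetry you allude to is $\phi_B^s(A)=\Phi^L_{B^s}A\Phi^R_{B^s}$. Combined with both equations it gives $A\Lambda=\Lambda\phi_B^s(A)$, equivalently $[U+D,\Lambda J^s]=0$. A diagonal conjugation $E$ with $E^{-1}\Lambda J^sE=\kappa J^s$ and $\kappa^r=\det\Lambda$ turns this into $[U'+D',J]=0$, using $\gcd(r,s)=1$. Consequences:
\begin{itemize}
\item $U'+D'$ is a polynomial in $J$, so only $r-1$ parameters $b_j$ survive.
\item $E^{-1}\Lambda\Phi^L_{B^s}E=\kappa T^s$ for one companion matrix $T$ with $T^rD'=U'$.
\item The equations collapse to $\kappa\tilde YT^{r+s}-\kappa QT^s-Q\tilde YT^r+Q\,\mathrm{Id}=0$.
\end{itemize}
Then $Z'=Q^{1/2}\tilde Y^{-1}T^{-r}$ and $\kappa T^s=Q(\mathrm{Id}-Q^{-1/2}Z')(\mathrm{Id}-Q^{1/2}Z')^{-1}$. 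Raising the latter to the $r$-th power gives $P(Z')=0$, and cyclicity of $Z'$ gives the divisibility. Hence $c_{S_0}=\tilde Y$ for the $s$ complementary roots.

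The same reduction makes (II) explicit. The eigenvalues of $T$ are built from the roots indexed by $S_0^c$ via $t_i^r=Q^{1/2}/(\tilde Yz_i)$ and $t_i^s=(Q/\kappa)(1-Q^{-1/2}z_i)/(1-Q^{1/2}z_i)$. Their product is $(-1)^{r-1}$, as a companion matrix of this shape requires.

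Your worries about $\det A=0$ and spurious components are unnecessary in the torus. The first equation gives $(\Lambda\Phi^L_{B^s}-\mathrm{Id})A=\widehat A-A=(Q-1)\tilde YU$, so $A$ is automatically invertible, and (I) rules out any further components.
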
 
Algebraically, consider the ideal $I\cap \mathbb C(Q^{1/2})[\tilde X,\tilde Y]$.
The radical of $I\cap\mathbb C(Q^{1/2})[\tilde X,\tilde Y]$ is the principal ideal $(\tilde F)$. 

\begin{remark}To compute this contraction in a polynomial ring, adjoin one variable $h$ and the relation $h\tilde X\tilde Y-1$ to the entries in the augmentation equations~\eqref{eq:aug-two-matrix-equations}.
Compute a Gr{\"o}bner basis for the lexicographic order
	\begin{align}
		h>a_{12}>a_{13}>\cdots>a_{r,r-1}>\tilde X>\tilde Y.
	\end{align}
	Eliminate $h$ and all $a_{ij}$ by retaining only the basis elements involving $\tilde X,\tilde Y$, and take the square-free part of their greatest common divisor. Then normalize the resulting polynomial \(\tilde F\) so that its leading coefficient in this lexicographic order is \(1\).
\end{remark}

Proposition~\ref{prop:augmentation-projection} is proved below, along the proof of the following theorem:

\begin{theorem} \label{thm:FF} 
The polynomial $\tilde F$ coincides with $F$ given by Eq.~\eqref{eq:F}.
\end{theorem}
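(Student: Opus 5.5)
The plan is to solve the augmentation equations explicitly and show that every solution with $\tilde X\tilde Y\neq0$ (generic in the appropriate sense) forces $\tilde Y=c_S$ for some $s$-subset $S$ of the roots of $P$. Conversely, each $c_S$ should be realized by a solution. Once this is done, the projection of the augmentation variety to the $(\tilde X,\tilde Y)$ torus has Zariski closure equal to the zero set of $\prod_{|S|=s}(\tilde Y-c_S)$. Proposition~\ref{prop:irreducible} then shows that this closure is irreducible and defined by $F$. This also proves Proposition~\ref{prop:augmentation-projection}: the radical of the elimination ideal is $(F)$, and the normalizations agree because the leading coefficient of $F$ is $1$ by Theorem~\ref{thm:corrected-algebraicity}.

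\textbf{Linearizing the first equation.} The first augmentation equation reads $\widehat A=\Lambda\Phi^L_{B^s}A$. Both $A$ and $\widehat A$ are affine-linear in the $a_{ij}$, with $A=\tilde Y U + (\tilde Y-Q)\mathbf 1 - Q L$ splitting into strictly upper, diagonal and strictly lower parts. I would use the standard Ng-type reduction for braids $B^s$ with $B=\sigma_1\cdots\sigma_{r-1}$: the matrix $\Phi^L_B$ is a companion-type matrix, so $\Phi^L_{B^s}$ is a twisted power of it. Since $\phi_B^r=\mathrm{id}$ and the twist is a signed cyclic shift, conjugation by a signed cyclic permutation matrix $\Pi$ should turn $\phi_B(M)$ into $\Pi M\Pi^{-1}$ on the relevant matrices. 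That would give $\Phi^L_{B^s}=\Pi^{s}(\Pi^{-1}\Phi^L_B)^s$ up to signs. The equations thereby become the statement that a single $r\times r$ companion-like matrix $M$, built from the first column $(a_{21},\dots,a_{r1})$, satisfies a relation of the form $\widehat A = \Lambda\,\Pi^{s}M^sA$.

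\textbf{Reduction to a characteristic polynomial.} The next step is to eliminate the off-diagonal $a_{ij}$ using the strict triangular structure, reducing to the data of $M$. Writing the characteristic polynomial of $M$ as $\prod(t-t_i)$, the two matrix equations should become the conditions $t_i^s=\frac{1-Q^{1/2}z_i}{1-Q^{-1/2}z_i}$ with $z_i = Xt_i^r$. These match the eigenvalue equation \eqref{eq:quantum-classical-eigenvalues} of the matrix $C(X)$ from the quantum-curve subsection, and they tie $\tilde Y$ to a product of $s$ of the eigenvalues via the $(1,1)$ entry involving $\Lambda$. This is an $s$-fold exterior power phenomenon: the compatibility of the left and right equations selects an $s$-dimensional invariant subspace. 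I expect the outcome to be exactly $\tilde Y=(-1)^{s-1}\prod_{i\in S}t_i=c_S$, as in \eqref{eq:quantum-exterior-eigenvalues}.

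\textbf{Checking all factors and the main obstacle.} I would check that every $S$ occurs, either by exhibiting the invariant subspace spanned by the eigenvectors indexed by $S$ or by using the monodromy transitivity from Proposition~\ref{prop:irreducible} once one factor is shown to occur. Occurrence of the single factor $S_0$ could be cross-checked against the known leading augmentation behaviour or the $r=1$ (unknot) case, where $F$ is $\tilde X\tilde Y-\tilde X-\dots$ and can be verified directly. The main obstacle is this reduction step: turning the noncommutative product $\Phi^L_{B^s}$ and the two-sided system into the clean eigenvalue condition with the correct signs and powers of $Q$. I would first carry it out for $r=2$ (the $(2,s)$ torus knots, where the answer is classically known) to fix the conjugation $\Pi$ and the normalizations, then generalize. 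Degenerate components, such as those where eigenvalues collide or $\tilde X\tilde Y=0$, are discarded by the Zariski closure together with the torus condition.
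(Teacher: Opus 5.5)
Your outer architecture matches the paper's: prove (I) that every augmentation point in the torus lies on $F=0$, prove (II) that generic points of $F=0$ lift, then use Proposition~\ref{prop:irreducible} and the normalization. However, the core reduction, which you flag as the main obstacle, rests on a false identity and on the wrong eigenvalue mechanism.

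\textbf{The reduction to a single matrix.} The formula $\Phi^L_{B^s}=\Pi^s(\Pi^{-1}\Phi^L_B)^s$ needs $\phi_B(\Phi^L_B)=\Pi\Phi^L_B\Pi^{-1}$, and this fails in $\mathcal A$. The matrix $\Phi^L_B$ contains only $a_{21},\dots,a_{r1}$, while $\phi_B(\Phi^L_B)$ contains $a_{32},\dots,a_{r2},a_{12}$. Conjugation by a signed permutation cannot change which variables occur. Conjugation describes $\phi_B$ only on $U+D$, via $\phi_B(U+D)=J(U+D)J^{-1}$.

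The missing idea is to combine \emph{both} equations. The universal identity $\phi_B^s(A)=\Phi^L_{B^s}A\Phi^R_{B^s}$ turns them into $A\Lambda=\Lambda\phi_B^s(A)$, i.e.\ $[U+D,\Lambda J^s]=0$. Next take a diagonal $E$ with $E^{-1}\Lambda J^sE=\kappa J^s$ and $\kappa^r=\det\Lambda$. Together with $\gcd(r,s)=1$ and $J^r=-\mathrm{Id}$, this forces $E^{-1}(U+D)E$ to commute with $J$. Only then are the rescaled $a'_{ij}$ braid-invariant and determined by $b_1,\dots,b_{r-1}$. Consequently $\Phi^L_{B^m}(a')=T^m$ for the companion matrix $T$, $T^rD'=U'$, and the first equation becomes $\kappa\tilde YT^{r+s}-\kappa QT^s-Q\tilde YT^r+Q=0$.

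\textbf{The eigenvalue mechanism.} $\tilde Y$ is not a product of $s$ eigenvalues of an $r\times r$ matrix. There are only $r$ of them, and they satisfy $t_i^r=Q^{1/2}/(\tilde Yz_i)$ and $\kappa t_i^s=Q(1-Q^{-1/2}z_i)/(1-Q^{1/2}z_i)$, so they depend on $\tilde Y$ as well. No $s$-dimensional invariant subspace or exterior power is involved; that belongs to the $(r+s)$-dimensional quantum-curve system, a different object.

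The roots $z_i$ seen by the augmentation are the $r$ eigenvalues of $Z$, where $Z'=E^{-1}ZE=Q^{1/2}\tilde Y^{-1}T^{-r}$. The set $S_0$ is their \emph{complement}, and the argument runs as follows:
\begin{enumerate}
\item Show that the characteristic polynomial of $Z$ divides $P$, using cyclicity of $Z'$; this also handles multiplicities.
\item Obtain the products over $S_0^c$ from $\det U'=\det D'=1$ and from the determinant of $\kappa T^s=Q(\mathrm{Id}-Q^{-1/2}Z')(\mathrm{Id}-Q^{1/2}Z')^{-1}$.
\item Divide these into \eqref{eq:total-products} to get $c_{S_0}=\tilde Y$.
\end{enumerate}

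\textbf{Holes in (I) and (II).} In (I) you cannot restrict to generic solutions and say that degenerate components are ``discarded by the Zariski closure''. Closure only adds points, so a component of the augmentation variety on which roots of $P$ collide could project onto a different curve and contribute an extra factor to $\tilde F$. The paper's argument works at every augmentation point with $\tilde X\tilde Y\neq0$, including those where roots coincide.

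In (II), cross-checks for $r=1$ or for leading behaviour do not prove liftability for general $(r,s)$. Also, for $r=1$ the polynomial $F$ has degree $s+1$ in $\tilde Y$, not the shape you quote. You need an explicit lift. Its nontrivial point is that the $r$ values $t_i$, $i\in S_0^c$, defined by the two power relations above must have product $(-1)^{r-1}$. Then $\prod_{i\in S_0^c}(t-t_i)$ has constant term $-1$ and is the characteristic polynomial of a matrix of the form $T$; this is exactly where $c_{S_0}=\tilde Y$ enters. Finally, the equivalence of the companion equation with both augmentation equations reconstructs the $a_{ij}$ through $E$.
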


\begin{remark} This theorem proves a conjecture stated in~\cite{GJKS}, see Eq.~(2.36) in \emph{op.~cit.}
\end{remark}

\begin{remark} The full combinatorial analysis of the augmentation curve for $Q=1$ is available in~\cite{Cornwell}. 
\end{remark}

\begin{remark} Some examples of the reconstruction procedure of augmentation polynomial from the disk-series data of torus knots, using the back then conjectural degree bounds of~\cite{JKS}, are available in~\cite{Mahowald2017}.
\end{remark}

\subsection{Proof of Theorem~\ref{thm:FF}} 
We have to show the following two properties:
\begin{itemize}
    \item[(I)] \hypertarget{augmentation-to-F}{} every augmentation point in the torus satisfies $F(\tilde X,\tilde Y)=0$
    \item[(II)]\hypertarget{F-to-augmentation}{} every generic point of $F=0$ can be lifted to an augmentation by reconstructing suitable $a_{ij}$.
\end{itemize}
	
For the first implication \hyperlink{augmentation-to-F}{(I)}, let's start with a tuple $(\tilde X,\tilde Y,a_{ij})$ satisfying the augmentation equations~\eqref{eq:aug-two-matrix-equations}, with $\tilde X\tilde Y\ne0$.
Let $U$ and $D$ be upper and lower triangular matrices, respectively, with units on the diagonal and with entries $a_{ij}$ above or below the diagonal. In particular, $A=\tilde{Y}U-QD$ and $\widehat{A}=\tilde{Y}QU-QD$. Let $Z\coloneqq \tilde{Y}^{-1} Q^{1/2} DU^{-1}$.
	
\begin{lemma} \label{lem:Pol-Z} The augmentation equations~\eqref{eq:aug-two-matrix-equations} imply that $P(Z)=0$. Moreover, the characteristic polynomial of $Z$ divides $P$. 
\end{lemma}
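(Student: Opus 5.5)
The plan is to rewrite both augmentation equations~\eqref{eq:aug-two-matrix-equations} as equations in the single matrix $Z=\tilde Y^{-1}Q^{1/2}DU^{-1}$, and then eliminate the $\Phi$-matrices. Writing $D=Q^{-1/2}\tilde Y ZU$ gives the factorizations
\begin{align*}
A=\tilde Y\,(1-Q^{1/2}Z)\,U,\qquad \widehat A=Q\tilde Y\,(1-Q^{-1/2}Z)\,U .
\end{align*}
Set $W\coloneqq 1-Q^{1/2}Z$ and $V\coloneqq 1-Q^{-1/2}Z$; these are commuting polynomials in $Z$. After cancelling $\tilde Y$ and $U$, the first equation becomes $QV=\Lambda\Phi^L_{B^s}W$. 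The second becomes $W\,(U\Lambda U^{-1})=QV\,(U\Phi^R_{B^s}U^{-1})$.

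As a sanity check, consider $r=1$. Then $\Phi=1$, $Z=Q^{1/2}\tilde Y^{-1}$ is a scalar, and $\Lambda=Q^{1-s/2}\tilde X\tilde Y^s=Q\tilde X Z^{-s}$. The first equation then reads exactly $Z^sV=\tilde XW$, which is $P(Z)=0$. So in general I aim to show that the two equations together force $\Lambda\Phi^L_{B^s}$ to act like $Q\tilde X Z^{-s}W^{r-1}V^{1-r}$ (in a suitable sense, e.g.\ after conjugation by $U$). Comparing with $QV=\Lambda\Phi^L_{B^s}W$ then gives $Z^sV^r=\tilde XW^r$, i.e.\ $P(Z)=0$.

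The key tool is the standard compatibility of the braid action with the $\Phi$-matrices, in the form used in~\cite{Ng-Introduction}. Let $\mathbf A$ be the matrix with $1$ (or $-2$, depending on conventions) on the diagonal and $a_{ij}$ off the diagonal. The identity is $\phi_B(\mathbf A)=\Phi^L_B\,\mathbf A\,\Phi^R_B$, and iterating it gives $\phi_B^s(\mathbf A)=\Phi^L_{B^s}\mathbf A\,\Phi^R_{B^s}$. I would first verify this identity directly from~\eqref{PhiBLdefinetion} and~\eqref{eq:action}; it is a finite check of the companion-type matrices against the cyclic shift with signs. Next I would multiply the two reformulated equations so that $\Phi^L_{B^s}(\cdot)\Phi^R_{B^s}$ appears. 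Using $U$ and $D$ (equivalently $Z$) to express $\mathbf A$, this should produce a conjugation-type relation between $Z$ and the braid-transformed $\phi_B^s(Z)$. The diagonal entry $Q^{r-s/2}\tilde X\tilde Y^s$ of $\Lambda$ together with $\det$ considerations should then pin down the scalar factor $\tilde X$ and the power $Z^{-s}$.

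Since $\phi_B^r=\mathrm{id}$ and $\gcd(r,s)=1$, I expect the $s$ braid steps to combine into $s$ copies of $Z$ and the $r$ strands into the $r$-th powers of $W$ and $V$. This combination is the step I expect to be the main obstacle: it requires an explicit understanding of how $\Phi^L_{B^s}$ interacts with the triangular factors $U$ and $D$, beyond just the $r=1$ pattern. I would first work it out completely for $r=2$, and then look for an inductive structure in $s$, using $\Phi^L_{B^{s+1}}=\phi_B^{s}(\Phi^L_B)\Phi^L_{B^s}$.

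For the second claim, $P(Z)=0$ implies that the minimal polynomial of $Z$ divides $P$. For $\tilde X\neq0$ generic, $P$ has $r+s$ distinct roots. It then suffices to show that $Z$ is nonderogatory, i.e.\ that the minimal polynomial equals the characteristic polynomial. For this I would exhibit a cyclic vector, using the companion form of $\Phi^L_B$ (first column $-a_{j1}$, shifted identity). The alternative is to show directly from $QV=\Lambda\Phi^L_{B^s}W$ that each eigenspace of $Z$ is one-dimensional, because $\Lambda$ differs from the identity only in a rank-one piece.
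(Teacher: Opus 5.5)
Your factorizations $A=\tilde Y\,WU$ and $\widehat A=Q\tilde Y\,VU$ are correct, as are the reformulated equations and the $r=1$ check. But for $r\geq 2$ the proposal has a genuine gap at the one step that carries the content of the lemma, as you acknowledge yourself. Since $QV-W=(Q-1)\mathrm{Id}$, the first equation gives $(\Lambda\Phi^L_{B^s}-\mathrm{Id})W=(Q-1)\mathrm{Id}$. So $W$ is invertible and $\Lambda\Phi^L_{B^s}=QVW^{-1}$. Your target ``$\Lambda\Phi^L_{B^s}$ acts like $Q\tilde XZ^{-s}W^{r-1}V^{1-r}$'' is therefore only a restatement of $P(Z)=0$. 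What actually has to be proved is $(\Lambda\Phi^L_{B^s})^r=Q^r\tilde XZ^{-s}$, and you give no mechanism for it.

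Combining the two equations with $\phi_B^s(M)=\Phi^L_{B^s}M\Phi^R_{B^s}$ does yield the relation $Z\Lambda=\Lambda\,\phi_B^s(Z)$ that you anticipate. (In this paper's conventions that identity holds for $M=U$ and $M=D$ separately, hence for $U+D$. It fails for a matrix with $1$ or $-2$ on the diagonal, since $\Phi^L_B\Phi^R_B\neq\mathrm{Id}$.) However, this relation only compares $Z$ with a matrix in permuted variables. Determinant considerations then give scalar identities like those of Lemma~\ref{lem:properties-Z-eigenvalues}, not a matrix identity. An induction in $s$ has nothing to stand on either, because the statement at level $s$ concerns a different system of equations (a different knot).

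The missing idea is to exploit that covariance structurally. Because $\phi_B(U+D)=J(U+D)J^{-1}$, the covariance reads $[U+D,\Lambda J^s]=0$. A diagonal gauge $E$ with $E^{-1}\Lambda J^sE=\kappa J^s$, where $\kappa^r=\det\Lambda$, turns it into $[U'+D',J^s]=0$. Since $J^r=-\mathrm{Id}$ and $\gcd(r,s)=1$, this upgrades to $[U'+D',J]=0$. So $U'+D'$ is a polynomial in $J$, only $r-1$ parameters $b_j$ remain, and the gauged entries $a'_{ij}$ are $\phi_B$-invariant. Hence $E^{-1}\Lambda\Phi^L_{B^s}(a)E=\kappa T^s$ for the companion matrix $T=\Phi^L_B(a')$.

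One checks $T^rD'=U'$, i.e.\ $Z'=E^{-1}ZE=Q^{1/2}\tilde Y^{-1}T^{-r}$. So it is the full twist, not the $s$ braid steps, that produces $Z$. The first equation becomes $\kappa T^s=QV'W'^{-1}$, with $V',W'$ the conjugates of $V,W$. Its $r$-th power, via $T^{rs}=(T^r)^s$, amounts to $Z'^sV'^r=\tilde XW'^r$, i.e.\ $P(Z')=0$.

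The same structure gives the divisibility. $T^r$ and $T^s$ lie in the algebra generated by $Z'$, so by coprimality $T$ does too. Hence $Z'$ inherits the cyclic vector $e_r$ of $T$, and its minimal and characteristic polynomials coincide. This holds at every augmentation point, not only for generic $\tilde X$.

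Your suggested routes to cyclicity do not work as stated. In the original variables, $\Phi^L_B(a)$ does not in general commute with $Z$, so its companion form says nothing about $Z$ before passing to the braid-invariant gauge. The rank-one shape of $\Lambda$ does not by itself bound eigenspaces either: $\Lambda\Phi^L_{B^s}$ is conjugate to $\kappa T^s$, and a power of a companion matrix need not be cyclic. One genuinely needs both $T^r$ and $T^s$.
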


Let $z_1,\ldots,z_{r+s}$ be the roots of $P(z)$ in the algebraic closure,
counted with multiplicities. Lemma~\ref{lem:Pol-Z} implies that $r$ of
them, counted with their algebraic multiplicities, are the eigenvalues
of $Z$, the matrix determined by the given tuple $(\tilde{X},\tilde{Y},a_{ij})$. 
Let $S_0\subset\{1,\dots,r+s\}$ denote the subset that indexes the remaining $s$ roots, that is, the roots of $P(z)/\det(z\mathop{\mathrm{Id}} - Z)$. Though we don't have direct access to these $s$ remaining roots, we can derive their properties from the properties of $z_i$, $i\in S_0^c\coloneqq \{1,\dots,r+s\} \setminus S_0$:

\begin{lemma} \label{lem:properties-Z-eigenvalues} The augmentation equations~\eqref{eq:aug-two-matrix-equations} imply that the eigenvalues of $Z$ satisfy
	\begin{align}
		\prod_{i\in S_0^c} z_i & = Q^{r/2} \tilde{Y}^{-r}; &
		\prod_{i\in S_0^c} \frac{1-Q^{-1/2}z_i}{1-Q^{1/2}z_i} & = (-1)^{r-1}Q^{-s/2}\tilde{X}\tilde{Y}^s.
	\end{align}
\end{lemma}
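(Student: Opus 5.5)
Both identities should follow from determinant computations once we express the relevant matrices through $A$, $\widehat A$, $U$ and $D$. The products over $S_0^c$ run over the eigenvalues of $Z$ with algebraic multiplicity, so the left-hand sides equal $\det Z$ and $\det(\mathrm{Id}-Q^{-1/2}Z)/\det(\mathrm{Id}-Q^{1/2}Z)$, respectively.

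\emph{First identity.} By definition $Z=\tilde Y^{-1}Q^{1/2}DU^{-1}$ with $D,U$ unitriangular. Hence $\det Z=(Q^{1/2}\tilde Y^{-1})^r$, with no use of the augmentation equations at all.

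\emph{Second identity, matrix rewriting.} We factor out $U^{-1}$ on the right:
\begin{align*}
\mathrm{Id}-Q^{1/2}Z&=\tilde Y^{-1}(\tilde YU-QD)U^{-1}=\tilde Y^{-1}AU^{-1},\\
\mathrm{Id}-Q^{-1/2}Z&=\tilde Y^{-1}(\tilde YU-D)U^{-1}=Q^{-1}\tilde Y^{-1}\widehat AU^{-1}.
\end{align*}
So the desired product equals $Q^{-r}\det\widehat A/\det A$, provided $\det A\neq0$. To check that $\det A\neq 0$, we use Lemma~\ref{lem:Pol-Z}. If $\det A=0$, then $Q^{-1/2}$ would be an eigenvalue of $Z$ and hence a root of $P$. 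However,
\begin{align*}
P(Q^{-1/2})=Q^{-s/2}(1-Q^{-1})^r\neq0
\end{align*}
in $\mathbb C(Q^{1/2})$. By the same lemma the eigenvalue $Q^{1/2}$ is excluded as well, because $P(Q^{1/2})=Q^{s/2}(1-Q^{-1})^r\ne 0$. Hence the fractions are well defined.

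\emph{Second identity, using the equations.} The first augmentation equation gives $\widehat A=\Lambda\Phi^L_{B^s}A$, so
\begin{align*}
\det\widehat A/\det A=\det\Lambda\cdot\det\Phi^L_{B^s}, \qquad \det\Lambda=Q^{r-s/2}\tilde X\tilde Y^s.
\end{align*}
For $\det\Phi^L_{B^s}$, note that $\det\Phi_B^L=(-1)^{r-1}$, a constant, by cofactor expansion along the last row. Since $\phi_B$ is a ring automorphism applied entrywise, $\det\phi_B^m(\Phi_B^L)=\phi_B^m(\det\Phi_B^L)=(-1)^{r-1}$. Therefore $\det\Phi^L_{B^s}=(-1)^{s(r-1)}$.

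Here coprimality enters: if $r$ is even then $s$ is odd, so $(-1)^{s(r-1)}=(-1)^{r-1}$ in all cases. Combining the factors gives
\begin{align*}
Q^{-r}\cdot Q^{r-s/2}\tilde X\tilde Y^s\cdot(-1)^{r-1}=(-1)^{r-1}Q^{-s/2}\tilde X\tilde Y^s,
\end{align*}
which is the claimed identity.

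The only genuinely delicate points are the nonvanishing of $\det A$, settled via Lemma~\ref{lem:Pol-Z}, and the sign bookkeeping for $\det\Phi^L_{B^s}$. The second augmentation equation is not needed here.
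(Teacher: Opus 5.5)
Your proof is correct and is essentially the paper's argument. Both take determinants of the first augmentation equation: the paper uses the conjugated form \eqref{eq:T-s-Z-prime} with $\kappa^r=\det\Lambda$ and $\det T=(-1)^{r-1}$, while you work directly with $\det\widehat A/\det A=\det\Lambda\cdot\det\Phi^L_{B^s}$, which has the mild advantage of bypassing the diagonal change of basis $E$.

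One slip: $P(Q^{1/2})=-\tilde X(1-Q)^r$, not $Q^{s/2}(1-Q^{-1})^r$. It is still nonzero because $\tilde X\neq0$, and in any case only excluding the eigenvalue $Q^{-1/2}$, which you handle correctly, is needed for the denominators.
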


This is sufficient to derive that $c_{S_0} = \tilde{Y}$. Indeed, from Equations~\eqref{eq:total-products} and Lemma~\ref{lem:properties-Z-eigenvalues} we have 
\begin{align}
(-1)^{s-1}\frac{\prod_{i\in S_0}z_i}{\tilde{X}}& =\tilde{Y}^r,
& \prod_{i\in S_0}\frac{1-Q^{-1/2}z_i}{1-Q^{1/2}z_i}& =\tilde{Y}^{-s}.
\end{align}
Then Equation~\eqref{eq:cS-definition} (or Equation~\eqref{eq:cS-powers}) gives $c_{S_0}=\tilde{Y}^{rp-sq}=\tilde{Y}$. This means that we indeed necessarily have $F(\tilde{X},\tilde{Y}) = 0$ for any $(\tilde X, \tilde Y, a_{ij})$ satisfying the augmentation	equations~\eqref{eq:aug-two-matrix-equations} with $\tilde{X}\tilde{Y}\ne 0$.  This establishes \hyperlink{augmentation-to-F}{(I)}.

Next we have to show \hyperlink{F-to-augmentation}{(II)}, which is finally proved in Sec.~\ref{sec:final}. To reconstruct the $a_{ij}$ from a generic pair $(\tilde X,\tilde Y)$ satisfying $F(\tilde X,\tilde Y)=0$, we will give an explicit formula.
First, we rewrite the augmentation equations~\eqref{eq:aug-two-matrix-equations} and prove Lemmata~\ref{lem:Pol-Z} and~\ref{lem:properties-Z-eigenvalues}.

\subsubsection{Diagonal change of basis} \label{sec:diagonalbasis}

We introduce a diagonal change of basis in $U,D,Z$ that gives better control on their entries. First, let us list a number of immediate corollaries of the braid action~\eqref{eq:action} and the augmentation equations~\eqref{eq:aug-two-matrix-equations}. Direct multiplication gives
\begin{align}\label{eq:universal-triangular-identities}
	\phi_B(U)=\Phi_B^L U\Phi_B^R,\qquad
	\phi_B(D)=\Phi_B^L D\Phi_B^R.
\end{align}
Iterating these identities gives
\begin{align}\label{eq:universal-A-identity}
	\phi_B^s(A)=\Phi_{B^s}^L A\Phi_{B^s}^R.
\end{align}
In particular, the two augmentation equations imply
$A\Lambda=\Lambda\phi_B^s(A)$. Comparing the strictly
upper and strictly lower entries gives 
	\begin{align}\label{eq:covariance}
		U\Lambda = \Lambda \phi_B^s (U); \qquad  D\Lambda  = \Lambda \phi_B^s (D). 
	\end{align}
	
Introduce auxiliary matrices
	\begin{align}
	J& \coloneqq\begin{pmatrix}0&1&&0\\&0&\ddots&\\&&\ddots&1\\-1&0&\cdots&0\end{pmatrix},
	& N& \coloneqq \begin{pmatrix}0&1&&0\\&0&\ddots&\\&&\ddots&1\\0&0&\cdots&0\end{pmatrix},
	& N^{\mathsf{T}}& \coloneqq \begin{pmatrix}0&&&0\\1 &0& &\vdots \\&\ddots&\ddots&0\\0& & 1 &0\end{pmatrix}.
	\end{align}
	For $r=1$ these mean $J=(-1)$ and $N=N^{\mathsf{T}}=(0)$. In terms of matrix $J$, the formula for the braid action \eqref{eq:action} is equivalent to
	\begin{align}
	\phi_B(U+D)=J(U+D)J^{-1}.
	\end{align}
	Hence \eqref{eq:covariance} is equivalent to the requirement that $U+D$ commutes with $\Lambda J^s$, that is $[U+D,\Lambda J^s]=0$.
	
The diagonal change of basis is given by a matrix $E$ that we are going to construct now. 
Choose a nonzero $\kappa$ in the algebraic closure satisfying the equation
\begin{equation}\label{eq:kappa}
	\kappa^r=Q^{r-s/2}\tilde{X}\tilde{Y}^s=\det\Lambda.
\end{equation}
There is an invertible diagonal matrix $E$, with $E_{11}=1$, such that
\begin{equation}\label{eq:E}
	E^{-1}\Lambda J^sE=\kappa J^s
\end{equation}
replacing the matrix $\Lambda$ by the scalar $\kappa$ through diagonal change of basis.
For an explicit construction, let $\overline{i}$ denote the representative
modulo $r$ in $\{1,\ldots,r\}$ and impose successively
$E_{\overline{i+s},\overline{i+s}}
	=\kappa\Lambda_{ii}^{-1}E_{ii}$. 
Since $r$ and $s$ are coprime, this determines $E$.
After one complete cycle the multiplier is $\kappa^r/\det\Lambda=1$, so the recursion is consistent.
	
Let $U'\coloneqq E^{-1} UE$ and $D'\coloneqq E^{-1} DE$. Notice that now $U'+D'$ commutes with $J^s = \kappa^{-1}  E^{-1}\Lambda J^sE$. Since $r$ and $s$ are coprime and $J^r = -\mathop{\mathrm{Id}}$, $U'+D'$ commutes with $J$. 
Every matrix commuting with $J$ is a polynomial in $J$ of degree at most $r-1$, hence there exist scalars $b_1,\ldots,b_{r-1}$ such that
\begin{align}
	 U'+D'& =2\mathop{\mathrm{Id}}+\sum_{j=1}^{r-1}b_jJ^j, 
	 & U'=\mathop{\mathrm{Id}}+\sum_{j=1}^{r-1}b_jN^j,\qquad
	 D'=\mathop{\mathrm{Id}}-\sum_{j=1}^{r-1}b_{r-j}(N^{\mathsf T})^j.
\end{align}

The advantage of these new $U'$ and $D'$, and the corresponding 
\begin{align} \label{eq:aij-accent}
a_{ij}' \coloneqq E_{ii}^{-1} a_{ij} E_{jj} = \begin{cases}
	b_{j-i},&i<j,\\
	-b_{r-i+j},&i>j.
\end{cases}
\end{align}
is that substituting these entries in the right-hand side of Equation~\eqref{eq:action} returns the same entries; this means that $a_{ij}'$ are fixed by the braid action $\phi_B$. Moreover, we now have $r-1$ augmentation parameters $b_1,\ldots,b_{r-1}$ rather than the original $r(r-1)$ parameters $a_{ij}$.
The main goal now is to show that the augmentation equations~\eqref{eq:aug-two-matrix-equations} can be equivalently rewritten in terms of $b_1,\dots,b_{r-1}$ (and thus $a_{ij}$ can be reconstructed from them via Equation~\eqref{eq:aij-accent}). 

\subsubsection{Equivalent form of augmentation equations}  \label{sec:equivalenceaug}

To avoid any confusion, throughout this subsection we use the notation $\Phi_B^L(a)$ for the matrix defined in \eqref{PhiBLdefinetion}, whereas $\Phi_B^L(a')$ denotes the same matrix with each $a_{ij}$ replaced by $a_{ij}'$ as defined in \eqref{eq:aij-accent}. For brevity, let $T$ be $\Phi_B^L(a')$, explicitly,
\begin{align}\label{eq:T-companion}
T&=\begin{pmatrix}
b_{r-1}&1&0&\cdots&0\\
b_{r-2}&0&1&\cdots&0\\
\vdots&\vdots&\ddots&\ddots&\vdots\\
b_1&0&\cdots&0&1\\
1&0&\cdots&0&0
\end{pmatrix},
&\det(t\mathop{\mathrm{Id}}-T)
&=t^r-\sum_{j=1}^{r-1}b_jt^j-1.
\end{align}
For $r=1$ these formulas mean $T=(1)$ and
$\det(t\mathop{\mathrm{Id}}-T)=t-1$. Notice that
$\det T=(-1)^{r-1}$, so $T$ is invertible.

The statement at the end of the previous subsection that the braid action fixes $a'_{ij}$ means that
$\left.\phi_B(a_{ij})\right|_{a_{kl}\to a'_{kl}}=a'_{ij}$. Thus, if $\Phi_{B^m}^L(a')$ denotes the matrix
obtained by substituting $a'_{ij}$ for $a_{ij}$ in $\Phi_{B^m}^L$, then
\begin{align}\label{eq:primed-braid-power}
\Phi_{B^m}^L(a')=T^m, \qquad m\ge1.
\end{align}

Let us prove that we also have
\begin{align}\label{eq:T-upper-lower}
T^rD'=U',\qquad T^r=U'D'^{-1}.
\end{align}
To check this directly, let $e_1,\ldots,e_r$ be the standard basis vectors.
The shape of $T$ gives $Te_j=e_{j-1}$ for $j\ge2$, hence
$e_j=T^{r-j}e_r$. Moreover,
$T^r=\mathop{\mathrm{Id}}+\sum_{k=1}^{r-1}b_kT^k$.
Using the entries of $D'$, we obtain, for $1\le j\le r$,
\begin{align}
D'e_j
&=\left(T^{r-j}-\sum_{k=j}^{r-1}b_kT^{k-j}\right)e_r,
& T^jD'e_j
&=\left(\mathop{\mathrm{Id}}+\sum_{k=1}^{j-1}b_kT^k\right)e_r.
\end{align}
Multiplication of the second equality by $T^{r-j}$ gives
$T^rD'e_j=U'e_j$, which proves Equation~\eqref{eq:T-upper-lower}.

We now express the original braid matrix $\Phi_B^L(a)$ in terms of $T$.
Direct substitution in $\Phi_B^L(a)$, followed by induction in the formula
for $\Phi_{B^m}^L(a)$, gives
\begin{align}\label{eq:diagonal-braid-matrix}
\Phi_{B^m}^L(a')
=(J^mEJ^{-m})^{-1}\Phi_{B^m}^L(a) E.
\end{align}
Indeed, $J^mEJ^{-m}$ is diagonal, with $i$-th diagonal entry
$E_{\overline{i+m},\overline{i+m}}$; these are exactly the diagonal
factors at the shifted indices in Equation~\eqref{eq:action}.
The factors at consecutive indices cancel in the product defining
$\Phi_{B^m}^L(a)$.
Equation~\eqref{eq:E} implies
$E^{-1}\Lambda(J^sEJ^{-s})=\kappa\mathop{\mathrm{Id}}$.
Consequently,
\begin{align}\label{eq:Lambda-Phi-T}
E^{-1}\Lambda\Phi_{B^s}^L(a) E=\kappa T^s.
\end{align}

Conjugating the first augmentation equation \eqref{eq:aug-two-matrix-equations} by $E$, we therefore obtain
\begin{align}\label{eq:augmentation-primed}
Q\tilde Y U'-QD'=\kappa T^s(\tilde Y U'-QD').
\end{align}
Multiplication on the right by $D'^{-1}$ and
Equation~\eqref{eq:T-upper-lower} transform it into
\begin{align}\label{eq:augmentation-companion}
\ \kappa\tilde Y T^{r+s}-\kappa Q T^s
-Q\tilde Y T^r+Q\mathop{\mathrm{Id}}=0.
\end{align}
This is the desired equation for the remaining $r-1$ augmentation parameters $b_1,\ldots,b_{r-1}$:

\begin{lemma} 
Choose $\kappa$ and $E$ as in
Equations~\eqref{eq:kappa} and~\eqref{eq:E}, and express
$a_{ij}$ in terms of $b_1,\ldots,b_{r-1}$ by
Equation~\eqref{eq:aij-accent}. Then Equation~\eqref{eq:augmentation-companion} is equivalent
to both augmentation equations~\eqref{eq:aug-two-matrix-equations}.
\end{lemma}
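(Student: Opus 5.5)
The plan is to run the computation of the previous paragraphs in both directions and to check that nothing is lost. Fix $\kappa$ and $E$ as in \eqref{eq:kappa}, \eqref{eq:E}. Define $a_{ij}$ from $b_1,\dots,b_{r-1}$ by inverting \eqref{eq:aij-accent}, i.e.\ $a_{ij}=E_{ii}a'_{ij}E_{jj}^{-1}$. Then $U'=E^{-1}UE$ and $D'=E^{-1}DE$ have the Toeplitz form written above. The identities \eqref{eq:T-upper-lower}, \eqref{eq:diagonal-braid-matrix} and \eqref{eq:Lambda-Phi-T} are purely algebraic identities in $b_j$, $\kappa$ and $E$, and they hold for any $b_j$, not only for augmentations. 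I would first restate this explicitly, since \eqref{eq:T-upper-lower} was proved only from the shape of $T$, $U'$, $D'$.

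\emph{First augmentation equation.} Conjugating by $E$ turns $\widehat A-\Lambda\Phi^L_{B^s}A$ into
\begin{align}
Q\tilde YU'-QD'-\kappa T^s(\tilde YU'-QD').
\end{align}
$D'$ is unipotent lower triangular, hence invertible. Using $U'D'^{-1}=T^r$, right multiplication by $D'^{-1}$ is a bijection between this matrix and the left side of \eqref{eq:augmentation-companion}. So the first equation is equivalent to \eqref{eq:augmentation-companion}.

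\emph{Second augmentation equation.} This is the step I expect to be the main obstacle, because the construction above used only $\Phi^L$. I would first establish a primed analogue for the right matrices. Direct substitution of $a'$ into $\Phi^R_B$ should give a matrix $T'$ that is fixed by the braid action, and comparing shapes should give $T'=T^{-1}$. Indeed, \eqref{eq:universal-triangular-identities} with fixed $U',D'$ reads $U'=TU'T'$, and $U'$ commutes with $T$ because $U'=T^rD'$ and both are polynomials in $J$-type data. I would verify the commutation directly: $U'$ is a polynomial in $N$, and one needs $TU'=U'T$. If this is problematic, I would instead use \eqref{eq:universal-A-identity}, $\phi^s_B(A)=\Phi^L_{B^s}A\Phi^R_{B^s}$, in the primed frame. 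There $\phi_B$ acts trivially on $a'$, so after conjugation by the diagonal matrices we get $A'=T^sA'\,E^{-1}\Phi^R_{B^s}(a)(J^sEJ^{-s})$, with appropriate diagonal factors.

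Combining this with $E^{-1}\Lambda(J^sEJ^{-s})=\kappa\mathrm{Id}$, the conjugated second equation should take the form
\begin{align}
\kappa(\tilde YU'-QD')=(Q\tilde YU'-QD')\,T^{-s}.
\end{align}
Here I also use that $\Lambda$ and $E$ are diagonal and that $[U'+D',J]=0$ gives the covariance \eqref{eq:covariance} for free. Multiplying on the right by $T^sD'^{-1}$ and using that $T$ commutes with $U'D'^{-1}=T^r$ (and with $D'$, which I would check as a polynomial identity $D'=T^{-r}U'$), this becomes $\kappa\tilde YT^{r+s}-\kappa QT^s=Q\tilde YT^r-Q\mathrm{Id}$, which is again \eqref{eq:augmentation-companion}.

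\emph{Conclusion.} Both augmentation equations are therefore individually equivalent to \eqref{eq:augmentation-companion}. Conversely, the discussion in Section~\ref{sec:diagonalbasis} shows that every augmentation is of this form for some $b_j$. Hence the reduction to the parameters $b_1,\dots,b_{r-1}$ loses no solutions.
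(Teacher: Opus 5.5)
Your treatment of the first augmentation equation matches the paper's. The argument for the second equation, however, rests on identities that are false: substituting $a'$ into $\Phi^R_B$ does not give $T^{-1}$, and $T$ commutes neither with $U'$ nor with $D'$. Already for $r=2$, writing $b=b_1$,
\begin{align*}
T=\begin{pmatrix} b&1\\ 1&0\end{pmatrix},\qquad
T'\coloneqq\Phi^R_B(a')=\begin{pmatrix} -b&1\\ 1&0\end{pmatrix},\qquad
T^{-1}=\begin{pmatrix} 0&1\\ 1&-b\end{pmatrix},\qquad
TU'-U'T=\begin{pmatrix} -b&b^2\\ 0&b\end{pmatrix}.
\end{align*}
What \eqref{eq:universal-triangular-identities} actually gives in the primed frame is $U'=TU'T'$ and $D'=TD'T'$. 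Hence $T'=D'^{-1}T^{-1}D'$. Write $A'\coloneqq\tilde YU'-QD'$ and $\widehat A'\coloneqq Q\tilde YU'-QD'$. The conjugated second equation is then $\kappa A'=\widehat A'\,T'^s=\widehat A'D'^{-1}T^{-s}D'$. This is not the equation $\kappa A'=\widehat A'T^{-s}$ that you display, and the two are genuinely different unless $D'$ commutes with $T^s$. Your next step, moving $D'^{-1}$ past $T^s$, fails for the same reason. So, as written, the second half of your proof does not go through.

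The repair is short, and there are two ways to make it.

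The first way uses the correct $T'$. Right-multiply $\kappa A'=\widehat A'D'^{-1}T^{-s}D'$ by $D'^{-1}$ and use $U'D'^{-1}=T^r$. This gives $\kappa(\tilde YT^r-Q)=(Q\tilde YT^r-Q)T^{-s}$, which is \eqref{eq:augmentation-companion} because $T^r$ and $T^s$ commute. So your claim that each equation is individually equivalent to \eqref{eq:augmentation-companion} is true, but it needs this corrected $T'$.

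The second way is the paper's, and it avoids $T'$ entirely. Your fallback identity $A'=T^sA'T'^s$ is correct; it is the primed form of \eqref{eq:universal-A-identity}, with the covariance built in because $\phi_B$ fixes $a'$. Combine it with the first equation $\widehat A'=\kappa T^sA'$ to get $\widehat A'T'^s=\kappa T^sA'T'^s=\kappa A'$ at once. In the unprimed frame this is the paper's chain $A\Lambda=\Lambda\phi_B^s(A)=\Lambda\Phi^L_{B^s}A\Phi^R_{B^s}=\widehat A\Phi^R_{B^s}$.

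Note also that the lemma only needs the implication from \eqref{eq:augmentation-companion} to the second equation, not its converse.
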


\begin{proof}
We have already used the first augmentation equation \eqref{eq:aug-two-matrix-equations} to derive~\eqref{eq:augmentation-companion}. 
Let us check that Equation~\eqref{eq:augmentation-companion} gives both augmentation equations after reversing the change of variables. Start with $b_1,\ldots,b_{r-1}$ satisfying
Equation~\eqref{eq:augmentation-companion}, and construct $U',D',E$
as above. Their form implies that $U'+D'$ commutes with $J$, hence
that $U+D$ commutes with $\Lambda J^s$ after reversing the diagonal
change. Thus Equation~\eqref{eq:covariance} holds.
Reversing the steps from Equation~\eqref{eq:augmentation-primed}
to Equation~\eqref{eq:augmentation-companion} gives
$\widehat A=\Lambda\Phi_{B^s}^L(a) A$. Finally,
\begin{align}
A\Lambda=\Lambda\phi_B^s(A)
=\Lambda\Phi_{B^s}^L(a) A\Phi_{B^s}^R(a)
=\widehat A\Phi_{B^s}^R(a).
\end{align}
This proves the equivalence of~\eqref{eq:augmentation-companion} with both augmentation equations.
\end{proof}

\subsubsection{Diagonal change of basis applied to \texorpdfstring{$Z$}{Z}} \label{sec:diagonalchangeZ}

Let $Z'\coloneqq E^{-1}ZE$. Equation~\eqref{eq:T-upper-lower} gives
\begin{align}\label{eq:Z-prime-T}
Z'=\frac{Q^{1/2}}{\tilde Y}D'U'^{-1}
=\frac{Q^{1/2}}{\tilde Y}T^{-r}.
\end{align}
We will express $T^s$ in terms of $Z'$ as well.
First, subtracting $\tilde Y U'-QD'$ from the two sides of
Equation~\eqref{eq:augmentation-primed} gives
\begin{align}
(\kappa T^s-\mathop{\mathrm{Id}})(\tilde Y U'-QD')
=(Q-1)\tilde Y U'.
\end{align}
The right-hand side is invertible, hence $\tilde Y U'-QD'$ is invertible.
Since
\begin{align}
\tilde Y U'-QD'
=\tilde Y(\mathop{\mathrm{Id}}-Q^{1/2}Z')U',
\end{align}
$\mathop{\mathrm{Id}}-Q^{1/2}Z'$ is invertible too.
Thus Equation~\eqref{eq:augmentation-primed} can be rewritten as
\begin{align}\label{eq:T-s-Z-prime}
\kappa T^s
=Q(\mathop{\mathrm{Id}}-Q^{-1/2}Z')
(\mathop{\mathrm{Id}}-Q^{1/2}Z')^{-1}.
\end{align}
The left-hand side is invertible, so
$\mathop{\mathrm{Id}}-Q^{-1/2}Z'$ is also invertible.

\begin{proof} [Proof of Lemma~\ref{lem:Pol-Z}] Taking the $r$-th power in Equation~\eqref{eq:T-s-Z-prime} and using
Equations~\eqref{eq:kappa} and~\eqref{eq:Z-prime-T}, we get
\begin{align}
\tilde X Z'^{-s}
=(\mathop{\mathrm{Id}}-Q^{-1/2}Z')^r
(\mathop{\mathrm{Id}}-Q^{1/2}Z')^{-r}.
\end{align}
All factors here commute, since they are rational expressions in $Z'$.
Consequently,
\begin{align}\label{eq:P-Z-prime}
Z'^s(\mathop{\mathrm{Id}}-Q^{-1/2}Z')^r
-\tilde X(\mathop{\mathrm{Id}}-Q^{1/2}Z')^r=0.
\end{align}
This is precisely $P(Z')=0$. Conjugation by $E$ gives $P(Z)=0$.

To prove the divisibility in Lemma~\ref{lem:Pol-Z}, we need a little more.
Notice that both $T^r$ and $T^s$ belong to the algebra of matrices
generated by $Z'$, by Equations~\eqref{eq:Z-prime-T}
and~\eqref{eq:T-s-Z-prime}.
Here the inverse of an invertible matrix is a polynomial in that matrix,
by the Cayley--Hamilton theorem.
Using the same integers $p,q$ with $rp-sq=1$ as in Section~\ref{sec:TrueCurve}, we obtain
\begin{align}
T=(T^r)^p(T^s)^{-q}.
\end{align}
Thus $T$ also belongs to the algebra generated by $Z'$.
Conversely, Equation~\eqref{eq:Z-prime-T} shows that $Z'$ belongs to
the algebra generated by $T$. The two algebras therefore coincide.

The vectors $e_r,Te_r,\ldots,T^{r-1}e_r$ are
$e_r,e_{r-1},\ldots,e_1$, so they form a basis.
Since the two algebras coincide, the vectors obtained by applying
polynomials in $Z'$ to $e_r$ span the same space.
It is enough to use powers less than $r$.
Hence $e_r,Z'e_r,\ldots,Z'^{r-1}e_r$ form a basis too.
This means that $Z'$ is cyclic.
In particular, no polynomial of degree less than $r$ annihilates $Z'$,
so its minimal polynomial is its characteristic polynomial.
Since $P(Z')=0$, we conclude that
\begin{align}\label{eq:characteristic-divides-P}
\det(z\mathop{\mathrm{Id}}-Z)
=\det(z\mathop{\mathrm{Id}}-Z')\quad\text{divides}\quad P(z).
\end{align}
This proves Lemma~\ref{lem:Pol-Z} on the open set under consideration,
including the cases when some roots coincide.
\end{proof}

\begin{proof}[Proof of Lemma~\ref{lem:properties-Z-eigenvalues}]
We now prove Lemma~\ref{lem:properties-Z-eigenvalues}.
Recall that all roots are counted with their multiplicities.
Since $U'$ and $D'$ have determinant one,
\begin{align}
\prod_{i\in S_0^c}z_i=\det Z'
=Q^{r/2}\tilde Y^{-r}.
\end{align}
Taking determinants in Equation~\eqref{eq:T-s-Z-prime} gives
\begin{align}
\kappa^r(-1)^{s(r-1)}
=Q^r\prod_{i\in S_0^c}
\frac{1-Q^{-1/2}z_i}{1-Q^{1/2}z_i}.
\end{align}
By Equation~\eqref{eq:kappa}, this is equivalent to
\begin{align}
\prod_{i\in S_0^c}\frac{1-Q^{-1/2}z_i}{1-Q^{1/2}z_i}
=(-1)^{s(r-1)}Q^{-s/2}\tilde X\tilde Y^s
=(-1)^{r-1}Q^{-s/2}\tilde X\tilde Y^s.
\end{align}
The last equality follows because $(r-1)(s-1)$ is even for coprime $r,s$.
This proves the lemma.
\end{proof}

These arguments also justify the use of the remaining $s$ roots in the
preceding proof. They are the roots of the polynomial quotient in
Equation~\eqref{eq:characteristic-divides-P}; their multiplicities are
those left after removing the characteristic polynomial of $Z$.
For $\tilde X\ne0$, none of the roots of $P$ is $0,Q^{1/2},Q^{-1/2}$.
Moreover, the leading coefficient of $P$ is the nonzero constant
$(-Q^{-1/2})^r$, so its roots remain finite at every finite value of
$\tilde X$. Hence the symmetric expression defining $F$ specializes
regularly at every nonzero $\tilde X$, even when roots coincide.
The argument giving $c_{S_0}=\tilde Y$ and $F(\tilde X,\tilde Y)=0$
therefore applies to all augmentation points with $\tilde X\tilde Y\ne0$.
	
\subsubsection{Final steps of the proof of Theorem~\ref{thm:FF}}\label{sec:final}

It remains to prove \hyperlink{F-to-augmentation}{(II)} by reconstructing suitable \(a_{ij}\) for a generic pair \((\tilde X,\tilde Y)\) satisfying \(F(\tilde X,\tilde Y)=0\).
We can assume that $\tilde X\tilde Y\ne0$ and that $P$ has distinct roots.
By Equation~\eqref{eq:F}, there is a subset $S_0$ of cardinality $s$
for which $c_{S_0}=\tilde Y$. Dividing the total products in
Equation~\eqref{eq:total-products} by the products over $S_0$ given by
Equation~\eqref{eq:cS-powers}, we obtain
\begin{align}\label{eq:complementary-root-products}
\prod_{i\in S_0^c}z_i&=Q^{r/2}\tilde Y^{-r},
&\prod_{i\in S_0^c}\frac{1-Q^{-1/2}z_i}{1-Q^{1/2}z_i}
&=(-1)^{r-1}Q^{-s/2}\tilde X\tilde Y^s.
\end{align}

Choose $\kappa$ satisfying Equation~\eqref{eq:kappa}, and, for each
$i\in S_0^c$, define
\begin{align}\label{eq:companion-roots}
t_i=
\left(\frac{Q^{1/2}}{\tilde Y z_i}\right)^p
\left(\frac Q\kappa
\frac{1-Q^{-1/2}z_i}{1-Q^{1/2}z_i}\right)^{-q}.
\end{align}
Here again $rp-sq=1$.
The root equation $P(z_i)=0$ and Equation~\eqref{eq:kappa} give
\begin{align}
\left(\frac{Q^{1/2}}{\tilde Y z_i}\right)^s
=\frac{Q^{s/2}}{\tilde Y^s\tilde X}
\left(\frac{1-Q^{-1/2}z_i}{1-Q^{1/2}z_i}\right)^r
=\left(\frac Q\kappa
\frac{1-Q^{-1/2}z_i}{1-Q^{1/2}z_i}\right)^r.
\end{align}
Thus the same calculation as for Equation~\eqref{eq:cS-powers} shows that
\begin{align}\label{eq:companion-root-powers}
t_i^r=\frac{Q^{1/2}}{\tilde Y z_i},\qquad
t_i^s=\frac Q\kappa
\frac{1-Q^{-1/2}z_i}{1-Q^{1/2}z_i}.
\end{align}
These two equations determine $t_i$ uniquely: the ratio of two solutions
would have both its $r$-th and $s$-th powers equal to one.
Also, the $t_i$ are distinct, since equality of two of them would imply
equality of the corresponding $z_i$.

Taking products in Equation~\eqref{eq:companion-root-powers}, and using
Equation~\eqref{eq:complementary-root-products}, gives
\begin{align}
\Big(\prod_{i\in S_0^c}t_i\Big)^r=1,\qquad
\Big(\prod_{i\in S_0^c}t_i\Big)^s=(-1)^{r-1}.
\end{align}
Their unique simultaneous solution is
$\prod_{i\in S_0^c}t_i=(-1)^{r-1}$.
Indeed, this sign has the required $r$-th power, and its $s$-th power is
$(-1)^{r-1}$ because $(r-1)(s-1)$ is even.
Consequently, the monic polynomial with roots $t_i$, $i\in S_0^c$, has
constant term $-1$, and can be written uniquely as
\begin{align}\label{eq:b-from-roots}
\prod_{i\in S_0^c}(t-t_i)
=t^r-\sum_{j=1}^{r-1}b_jt^j-1.
\end{align}
This defines $b_1,\ldots,b_{r-1}$, and hence the matrices $U',D',T$.

For every $i\in S_0^c$, Equation~\eqref{eq:companion-root-powers} implies
\begin{align}
\kappa t_i^s(\tilde Y t_i^r-Q)=Q\tilde Y t_i^r-Q.
\end{align}
Hence all the distinct roots in Equation~\eqref{eq:b-from-roots} are
roots of
\begin{align}
\kappa\tilde Y t^{r+s}-\kappa Q t^s-Q\tilde Y t^r+Q.
\end{align}
The characteristic polynomial of $T$ divides this polynomial, hence we obtain 
Equation~\eqref{eq:augmentation-companion}.

Finally, construct $E$ as in Equation~\eqref{eq:E} and set
\begin{align}\label{eq:reconstructed-aij}
a_{ij}=\begin{cases}
(E_{ii}/E_{jj})b_{j-i},&i<j,\\
-(E_{ii}/E_{jj})b_{r-i+j},&i>j.
\end{cases}
\end{align}
By the equivalence proved above, these entries satisfy both
augmentation equations~\eqref{eq:aug-two-matrix-equations}.
This gives the required reconstruction for every generic point of $F=0$.

\subsubsection{Concluding remarks} We have proved that every augmentation point in the torus $\tilde{X}\tilde{Y}\not=0$ projects to $F=0$, and that every generic point of $F=0$ is the projection of an augmentation point.
Since $F$ is irreducible and is divisible by neither $\tilde X$ nor $\tilde Y$, these generic points in the torus are dense in its affine zero set.
Thus the closure of the augmentation curve in the affine plane is exactly $F=0$.
This proves Proposition~\ref{prop:augmentation-projection}
and Theorem~\ref{thm:FF}.
Finally, to be pedantic, note that the coefficient of
$\tilde X^{\frac1{r+s}\binom{r+s}{s}}\tilde Y^{\binom{r+s}{s}}$
in $F$ is one, so $F$ has the same normalization as $\tilde F$.
Consequently, $\tilde F=F$, as asserted.

\section{Topological recursion} \label{sec: topological recursion}

\subsection{Meromorphy on the augmentation curve}
Fix $r$ and $s$ such that $\gcd(r,s)=1$, and let $C$ be the curve defined over $\mathbb C(Q^{1/2})$ by $F(\tilde X,\tilde Y)=0$, where $F$ is given in~\eqref{eq:F}. 

\begin{proposition}\label{prop:augmentation-correlators}
For $g\geq0$, $n\geq1$, such that $2g-2+n>0$, the $n$-differential
$\omega^{(g)}_n(\tilde X_1,\dots,\tilde X_n)$ given by \eqref{eq: averaging} extends to a symmetric meromorphic
$n$-differential on $C^n$.
\end{proposition}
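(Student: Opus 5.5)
The plan is to realize $\omega^{(g)}_n$ as a \emph{partial trace} of the globally defined rational $n$-differential $\omega^{\mathsf{ext},(g)}_n$ on $(\mathbb{C}\mathrm{P}^1)^n$. This trace is taken along the $s$-element subsets of fibers of $\tilde X=X^s$, and these subsets are exactly what points of $C$ parametrize. First, by property (1) of the topological recursion on the auxiliary curve, for $2g-2+n>0$ the differential $\omega^{\mathsf{ext},(g)}_n$ is a global rational symmetric $n$-differential in $z_1,\dots,z_n$. We write it as
\begin{align}
\omega^{\mathsf{ext},(g)}_n=R(z_1,\dots,z_n)\prod_{i=1}^n \frac{d\tilde X(z_i)}{\tilde X(z_i)},
\end{align}
with $R$ a rational function, using that $d\log\tilde X=s\,d\log X$ is a nonzero rational differential. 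The terms $\omega^{\mathsf{ext},(g)}_n(\xi^{\alpha_1}X_1,\dots)$ in \eqref{eq: averaging} are the expansions of this global differential at the points $z(\xi^{\alpha_i}X_i)$. As explained in the proof of Theorem~\ref{thm:corrected-algebraicity}, these points are exactly the roots $z_j$, $j\in S_0$, of $P$ that tend to $0$ as $\tilde X\to0$. Hence, as formal series,
\begin{align}
\omega^{(g)}_n=\Big(\sum_{j_1,\dots,j_n\in S_0} R(z_{j_1}(\tilde X_1),\dots,z_{j_n}(\tilde X_n))\Big)\prod_{i=1}^n\frac{d\tilde X_i}{\tilde X_i},
\end{align}
where each variable carries its own copy of the root system of $P$ over $\tilde X_i$.

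Next, I would show that for one variable the function $G_{S}(\tilde X)\coloneqq\sum_{j\in S}R(z_j,\dots)$, with the other arguments treated as parameters, lies in the function field of $C$. Work over $K$, the field generated over $\mathbb C(Q^{1/2})$ by the remaining variables, and over $K(\tilde X)$. Let $L=K(\tilde X)(z_1,\dots,z_{r+s})$ be the splitting field of $P$. By the monodromy computation in the proof of Proposition~\ref{prop:irreducible}, its Galois group is $S_{r+s}$. This group stays $S_{r+s}$ after adjoining the other variables, since they are algebraically independent of $\tilde X$ (linearly disjoint extensions). $G_{S_0}$ is invariant under the stabilizer $H$ of $S_0$, so $G_{S_0}\in L^H$. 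On the other hand, $c_{S_0}\in L^H$, and the $S_{r+s}$-orbit of $c_{S_0}$ consists of the $\binom{r+s}{s}$ distinct functions $c_S$; their distinctness is again shown in the proof of Proposition~\ref{prop:irreducible}. Hence $[K(\tilde X,c_{S_0}):K(\tilde X)]=[S_{r+s}:H]$, so $L^H=K(\tilde X,c_{S_0})$. Since $\tilde Y=c_{S_0}$ on the branch near $\tilde X=0$, this field is $K\otimes$ the function field $\mathbb C(Q^{1/2})(C)$. Thus $G_{S_0}$ is a rational function of $(\tilde X,\tilde Y)$ on $C$. Its Puiseux/Taylor expansion at the point of $C$ over $\tilde X=0$ where $\tilde Y=c_{S_0}$ is exactly the formal series above.

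Applying this argument successively in each of the $n$ variables (by induction, with the field $K$ enlarged by the function fields of the copies of $C$ already treated), the total coefficient becomes a rational function on $C^n$. Multiplying by $\prod_i d\tilde X_i/\tilde X_i$, which is a meromorphic differential on each factor, gives a meromorphic $n$-differential whose expansion at the base point $(o,\dots,o)$ is $\omega^{(g)}_n$. Here $o\in C$ is the point where $\tilde X=0$ with the branch $S_0$. Symmetry follows from the symmetry of $R$ and the fact that the same subset is used in each variable.

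I expect the main obstacle to be the bookkeeping that makes the Galois-theoretic identification $L^H=K(\tilde X,c_{S_0})$ rigorous in the multivariable setting. One must check that adjoining the other variables' root systems does not shrink the Galois group over $K(\tilde X)$, and one must identify the formal branch used in \eqref{eq: averaging} with a specific smooth point of $C$. For the latter, note that $\tilde X$ may be ramified at that point according to Proposition~\ref{prop:ramificationprofile}, so the expansion in $\tilde X$ should be understood as a branch rather than literally as a local coordinate. I would try first to avoid the group-theoretic subtleties by an elementary argument: any $H$-invariant rational expression in the $z_j$ is a polynomial in $c_{S_0}$ over $K(\tilde X)$ by Lagrange interpolation over the distinct conjugates $c_S$, namely
\begin{align}
G_{S_0}=\sum_{S}G_S\prod_{S'\neq S}\frac{\tilde Y-c_{S'}}{c_S-c_{S'}}\Big|_{\tilde Y=c_{S_0}},
\end{align}
whose coefficients are symmetric in all roots and hence lie in $K(\tilde X)$.
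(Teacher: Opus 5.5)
Your proposal is correct and is essentially the paper's proof. The paper establishes Lemma~\ref{lem:algebraic lift} by the same Lagrange interpolation over the pairwise distinct conjugates $c_S$ that you give as your elementary fallback, and then applies it in each variable after noting that $dX/X=s^{-1}d\tilde X/\tilde X$ extends to $C$. Your Galois-theoretic identification $L^H=K(\tilde X,c_{S_0})$ is just a repackaging of that interpolation formula, which only needs the $c_S$ to be distinct, so your concern about the Galois group over $K(\tilde X)$ can be dropped. Your worry about ramification at the base point is also unnecessary, since $\tilde Y=\exp\tilde y(\tilde X)$ is a power series in $\tilde X$ there.
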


In order to prove this proposition, we apply a general algebraic argument that allows us to pass from the auxiliary curve $\mathbb{C}\mathrm{P}^1$ with $X$ and $y$ given on it in terms of a global rational coordinate $z$ by \eqref{eq:X} and \eqref{eq:y}, respectively, to $C$ given by the vanishing locus of $\eqref{eq:F}$, using the averaging given by \eqref{eq: averaging} and the corresponding statement on $\omega^{\mathsf{ext},(g)}_n$. The unstable cases $(g,n)=(0,1)$ and $(0,2)$ are considered below separately. The general algebraic argument is captured by the following lemma:

\begin{lemma}\label{lem:algebraic lift}
Let $f(z)$ be a rational function on $\mathbb{C}\mathrm{P}^1$ with coefficients in
$\mathbb C(Q^{1/2})$, regular at $z=0$. Consider the following expression defined in terms of its local expansion at $z=0$ 
\begin{equation}\label{eq:average}
\frac1s\sum_{a=0}^{s-1}f(z(\xi^aX)).
\end{equation}
It is a formal power series in $\tilde{X} = X^s$. This formal power series 
extends to a meromorphic function on $C$.
\end{lemma}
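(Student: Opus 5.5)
The plan is to show that the averaged series is a rational function of the branch data that defines $C$, namely of $\tilde X$ and of the algebraic functions $c_S$. Write $z_1,\dots,z_{r+s}$ for the roots of $P(z)$ in \eqref{eq:z-root-polynomial}, viewed as algebraic functions of $\tilde X$. As in the proof of Theorem~\ref{thm:corrected-algebraicity}, the branches $z(\xi^aX)$, $a=0,\dots,s-1$, are exactly the roots $z_i$, $i\in S_0$, namely those tending to $0$ as $\tilde X\to 0$. The expression \eqref{eq:average} therefore equals
\begin{align}
G_{S_0}\coloneqq\frac1s\sum_{i\in S_0} f(z_i).
\end{align}
This is a power series in $\tilde X$ because it is invariant under $X\mapsto\xi X$. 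For an arbitrary $s$-subset $S$, the function $G_S=\frac1s\sum_{i\in S}f(z_i)$ is an algebraic function of $\tilde X$. The monodromy of $\tilde X\colon\mathbb{C}\mathrm{P}^1\to\mathbb{C}\mathrm{P}^1$ permutes the $G_S$ in the same way as it permutes the $c_S$, since both are indexed by $s$-subsets of the fiber.

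The key step is a Lagrange interpolation argument. By Proposition~\ref{prop:irreducible}, the $c_S$ are pairwise distinct on a generic fiber. Define
\begin{align}
H(\tilde X,\tilde Y)\coloneqq\sum_{|S|=s} G_S\prod_{S'\ne S}\frac{\tilde Y-c_{S'}}{c_S-c_{S'}} .
\end{align}
This is a polynomial in $\tilde Y$ of degree less than $\binom{r+s}{s}$. Its coefficients are symmetric under the full monodromy group $S_{r+s}$, which acts on subsets and hence simultaneously on the $G_S$ and the $c_S$. The coefficients are also rational symmetric expressions in $z_1,\dots,z_{r+s}$, with coefficients in $\mathbb C(Q^{1/2})$. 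Their invariance under the Galois/monodromy group places them in $\mathbb C(Q^{1/2})(\tilde X)$; equivalently, they are rational functions of the elementary symmetric polynomials of the roots of $P$, which are rational in $\tilde X$. Hence $H\in\mathbb C(Q^{1/2})(\tilde X)[\tilde Y]$, and $H$ defines a rational function on $C$, which is irreducible by Proposition~\ref{prop:irreducible}. Evaluating $H$ at $\tilde Y=c_S$ gives $G_S$. On the branch $\tilde Y=\tilde Y(\tilde X)$, which by the proof of Theorem~\ref{thm:corrected-algebraicity} equals $c_{S_0}$, we obtain $H=G_{S_0}$. This holds as an identity of power series in $\tilde X$ near $\tilde X=0$.

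The main obstacle is making this last identification valid at the level of formal expansions. The local branch of $C$ on which $\tilde Y$ is given by the series $\exp(\tilde y(\tilde X))$ is unramified over $\tilde X=0$, so $\tilde X$ is a local coordinate there. The Lagrange denominators $c_{S_0}-c_{S'}$ could vanish identically only if two $c_S$ coincided as algebraic functions, and this is excluded by Proposition~\ref{prop:irreducible}. At the specific point $\tilde X=0$ the denominators may vanish or the $G_{S'}$ may have poles, but this causes no difficulty: $H$ is a rational function on $C$, and its Puiseux expansion along the branch $c_{S_0}$ agrees with $G_{S_0}$ for generic $\tilde X$ in a punctured neighbourhood. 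Both sides are convergent, or at least formal, expansions in $\tilde X$ that agree, so they coincide. One also needs regularity of $f$ at $z=0$ so that $G_{S_0}$ is an honest power series. It remains to handle the fact that we work over $\mathbb C(Q^{1/2})$ rather than $\mathbb C$. The cleanest route is to run the argument in the function field: $\mathbb C(Q^{1/2})(\tilde X,c_{S_0})$ is the fixed field of the stabilizer of $S_0$ inside the Galois closure, and $G_{S_0}$ is fixed by that stabilizer. This last point requires that every permutation fixing $c_{S_0}$ also fixes $S_0$, which follows from the distinctness of the $c_S$.
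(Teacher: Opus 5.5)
Your proposal is correct and follows essentially the paper's proof. You use the same Lagrange interpolation polynomial over the $c_S$, note that its coefficients are symmetric in the roots of $P$ and hence lie in $\mathbb{C}(Q^{1/2})(\tilde X)$, and evaluate it on the distinguished branch $\tilde Y=c_{S_0}$. Your Galois-theoretic additions (distinctness of the $c_S$ in the splitting field, and $G_{S_0}$ lying in the fixed field $\mathbb{C}(Q^{1/2})(\tilde X,c_{S_0})$) make explicit the descent and the identification of expansions at $\tilde X=0$, which the paper leaves implicit.
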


\begin{proof} It is a version of the Lagrange interpolation formula. Let
\begin{equation}\label{eq:interpolation}
\widehat{f}(\tilde X,\tilde Y)\coloneqq 
\sum_{|S|=s}\left(\frac1s \sum_{a\in S}f(z_a)\right)
\prod_{\substack{|T|=s\\T\ne S}}
\frac{\tilde Y-c_T}{c_S-c_T}.
\end{equation}
It is a polynomial in $\tilde{Y}$ whose coefficients are symmetric
rational expressions in the roots $z_1,\ldots,z_{r+s}$ of $P(z)$ given in~\eqref{eq:z-root-polynomial}. Thus the coefficients of $\tilde{Y}$ in $\widehat{f}$ belong to $\mathbb{C}(Q^{1/2})(\tilde X)$. Hence $\widehat{f}$ restricts to a meromorphic function on $C$. 

Recall that on the curve $C$ given by $F(\tilde{X},\tilde{Y})=0$ we have $\tilde Y=c_S$ and on its distinguished branch near $\tilde{X}=0$, $\tilde{Y} = c_{S_0}$. Hence
\begin{align}
    \widehat{f}|_{\tilde Y=c_S} =\widehat{f}(\tilde X,c_S)= \frac 1s \sum_{a\in S} f(z_a)
\end{align}
which specializes to \eqref{eq:average} on the distinguished branch near $\tilde{X}=0$.

In the proof of Theorem~\ref{thm:corrected-algebraicity} we identified $z_a$, $a\in S_0$, with the branches $z(\xi^iX)$, $i=0,\dots,s-1$, at $X=0$. Therefore, the restriction of $\widehat f$ to the distinguished branch agrees with the formal power series \eqref{eq:average}. Since $\widehat f$ is globally meromorphic on $C$, this proves the claim.
\end{proof}

\begin{proof}[Proof of Proposition~\ref{prop:augmentation-correlators}] Notice that $dX/X$ on the auxiliary curve is equal to $s^{-1}d\tilde{X}/\tilde{X}$ and thus extends to a meromorphic differential on $C$. Then use that $\omega^{\mathsf{ext},(g)}_n $ is a meromorphic $n$-differential and apply Lemma~\ref{lem:algebraic lift} to $[\hbar^{2g-2+n}]W_n^{\mathsf{ext}} = \omega^{\mathsf{ext},(g)}_n \prod_{i=1}^n X_i/dX_i$ in each of its $n$ variables. 
\end{proof}

In particular, this implies that for $2g-2+n>0$, $\omega^{(g)}_n(\tilde X_1,\dots,\tilde X_n)$ extends on $C^n$ as an analytic continuation of 
\begin{align}
    \sum_{a_1,\dots,a_n=0}^{s-1} \omega^{\mathsf{ext},(g)}_n(\xi^{a_1}X_1,\dots,\xi^{a_n}X_n)
\end{align}
(the factor of $s^{-n}$ is now hidden in the lift of $\prod_{i=1}^ndX_i/X_i$). 
In other words, let the point $(q_1,\dots,q_n)\in C^n$ be given by the choice of branches $c_{S_1},\dots,c_{S_n}$ over $\tilde{X}_1,\dots,\tilde{X}_n$ in~\eqref{eq:F} (notation: $q_i\sim c_{S_i}$). Then \emph{locally} we have 
\begin{align} \label{eq: pull-back}
    \omega^{(g)}_n|_{(q_1,\dots,q_n)} = (\tilde{X}^*)^{\otimes n} \sum_{\substack{a_i\in S_i\\ i=1,\dots,n}} \omega^{\mathsf{ext},(g)}_n|_{(z_{a_1}(\tilde X_1),\dots,z_{a_n}(\tilde X_n))}.
\end{align}
In this formula, we consider differentials in $z_a$ as $dz_a = \frac{dz_a(\tilde{X})}{d\tilde{X}} d\tilde{X}$. By $\tilde{X}^*$ we denote the pullback of a differential under the map $\tilde{X}\colon C \to \mathbb{C}\mathrm{P}^1$.

Now let us discuss the unstable cases. The case $(g,n)=(0,1)$ follows directly from the construction by analytic continuation. We indeed have directly from the definition
\begin{align}
\omega^{(0)}_1|_q = \Big(\tilde y\,\frac{d\tilde X}{\tilde X}\Big)|_q = \log(c_S)\,\frac{d\tilde X}{\tilde X} = \frac1s\sum_{a\in S}y(z_a(\tilde X))\,\frac{d\tilde X}{\tilde X} = \tilde X^*\sum_{a\in S}\Big(y\,\frac{dX}{X}\Big)|_{z_a(\tilde X)},
\end{align}
where the point $q$ is given over $\tilde{X}$ by the choice of branch $c_S$. However, these are locally defined logarithmic differentials, rather than global meromorphic differentials.

In the case $(g,n)=(0,2)$ we have (cf.~\cite[Theorem 9.3]{DPSS-HOMFLY}):
\begin{align}
    \left(\omega^{(0)}_2 - \frac{d\tilde{X}_1 d\tilde{X}_2}{(\tilde{X}_1-\tilde{X}_2)^2}\right) \bigg|_{q_1,q_2} & = \tilde{X_1}^* \tilde{X_2}^* \sum_{\substack{ a_1\in S_1 \\ a_2\in S_2}} \Big(\frac{dz(\tilde{X}_1) dz(\tilde X_2)}{(z(\tilde{X}_1)-z(\tilde{X}_2))^2} - \frac{dX_1dX_2}{(X_1-X_2)^2} \Big)|_{z_{a_1}(\tilde X_1),z_{a_2}(\tilde X_2)}
    \\ \notag 
    & = \sum_{\substack{ a_1\in S_1 \\ a_2\in S_2}} \frac{dz_{a_1}(\tilde X_1) dz_{a_2}(\tilde X_2)}{(z_{a_1}(\tilde X_1)-z_{a_2}(\tilde X_2))^2} - s \frac{d\tilde{X}_1 d\tilde{X}_2}{(\tilde{X}_1-\tilde{X}_2)^2}.
\end{align}
Here $q_i\sim c_{S_i}$, $i=1,2$, with the same branchwise interpretation as above.

\subsection{Poles on the augmentation curve} First, we discuss the stable range of $(g,n)$.

\begin{proposition} Let $g\geq 0$, $n\geq 1$, $2g-2+n>0$. The only poles of $\omega^{(g)}_n$, considered as a meromorphic $n$-differential on $C^n$, are at the branching points of $\tilde{X}$ over $V_1$ and $V_2$. 
\end{proposition}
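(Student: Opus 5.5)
We use the local pull-back formula \eqref{eq: pull-back}. It expresses $\omega^{(g)}_n$ near a point $(q_1,\dots,q_n)\in C^n$, with $q_i\sim c_{S_i}$, as $(\tilde X^*)^{\otimes n}$ of a finite sum of $\omega^{\mathsf{ext},(g)}_n$ evaluated at the branches $z_{a_i}(\tilde X_i)$, $a_i\in S_i$. The plan is to read off the poles of the left-hand side from those of the right-hand side. Point (2) of the auxiliary topological recursion says these are only at $z=p_1,p_2$. Since the claim is local and symmetric, it suffices to show the following: if $q_1$ is not a branching point of $\tilde X$ over $V_1$ or $V_2$, then $\omega^{(g)}_n$ is holomorphic in the first variable near $q_1$, uniformly for the other variables in compact sets away from their own poles. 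We split into three cases according to $\tilde X(q_1)$.

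\emph{Case $\tilde X(q_1)\notin\{0,\infty,V_1,V_2\}$.} Here the roots $z_a(\tilde X)$ are distinct, holomorphic in $\tilde X$, and avoid $p_1,p_2$. Recall that $X(p_i)^s=V_i$ and that $\tilde X$ is unramified there on the auxiliary curve. Also, $\tilde X$ is a local coordinate on $C$ at $q_1$. So each summand is holomorphic, and the pull-back of $dz_a=\frac{dz_a}{d\tilde X}d\tilde X$ is holomorphic.

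\emph{Case $\tilde X(q_1)\in\{V_1,V_2\}$ with $q_1$ unramified.} Over $V_i$ exactly two roots, say $z_k,z_l$, collide at $p_i$, and the monodromy there is the transposition $(k,l)$. The point $q_1$ is unramified exactly when $S_1$ is fixed by $(k,l)$. If $S_1$ contains neither root, every summand is regular as in the first case. If $S_1$ contains both, the sum over $a_1\in S_1$ contains $\omega^{\mathsf{ext}}(z_k)+\omega^{\mathsf{ext}}(z_l)$. In the local coordinate $t$ with $X=X(p_i)+t^2$, these two branches are $t$ and $-t$. Since $\tilde X-V_i$ is proportional to $t^2$ to leading order, the linear loop equation \eqref{eq:linear-loop-ext} states exactly that this sum is holomorphic in $t$ with at least a simple zero at $t=0$. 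The sum is even in $t$, so it is a holomorphic differential in $t^2$, i.e.\ in the coordinate $\tilde X$ on $C$ near $q_1$. (More precisely, the odd part $f(t)-f(-t)$ cancels and $f(t)\,dt+f(-t)\,d(-t)$ descends to a form in $t^2$. The simple zero is what ensures regularity rather than a $1/t$ term.)

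\emph{Case $\tilde X(q_1)\in\{0,\infty\}$.} This is the step I expect to be the main obstacle. Over $0$ the roots tend to $0$ (the $s$ roots) and to $Q^{1/2}$ (the $r$ roots). Over $\infty$ they tend to $\infty$ and $Q^{-1/2}$. None of these points is $p_1$ or $p_2$, since $dx$ vanishes only at $p_{1,2}$ and these four points are zeros or poles of $X$. So $\omega^{\mathsf{ext},(g)}_n$ is holomorphic at them in the $z$-coordinate. The branches $z_a(\tilde X)$ are Puiseux series in $\tilde X^{1/s}$ or $\tilde X^{1/r}$ (respectively in $\tilde X^{-1/s}$, $\tilde X^{-1/r}$). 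The expression on $C$ is the pull-back to a local uniformizer $u$ of $C$ at $q_1$, where $\tilde X=u^{e}$ with $e=\ell_r\ell_s$ as in Proposition~\ref{prop:ramificationprofile}. Each $z_a$ is a convergent power series in $u$, hence holomorphic. The summand is therefore the pull-back of a holomorphic differential in $z$ along a holomorphic map $u\mapsto z_a(u)$, which is holomorphic. The infinity case is handled identically, using the coordinate $1/z$ near $z=\infty$. Here one must check that $\omega^{\mathsf{ext}}$ is regular at $z=\infty$, which holds since $\infty\ne p_{1,2}$.

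Combining the three cases gives the claim: away from the ramification points over $V_1,V_2$ (and, by (2), in the other variables likewise), $\omega^{(g)}_n$ has no poles. Finally, Proposition~\ref{prop:augmentation-correlators} already gives global meromorphy on $C^n$, which lets us conclude on all of $C^n$.
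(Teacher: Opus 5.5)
Your route is the paper's. It uses the local pull-back formula \eqref{eq: pull-back}, regularity of $\omega^{\mathsf{ext},(g)}_n$ away from $p_1,p_2$, and the linear loop equation \eqref{eq:linear-loop-ext} at the unramified points over $V_i$ whose subset contains both colliding roots. The paper disposes of the fibres over $0$ and $\infty$ in one sentence.

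Your attempt to make that step explicit rests on a false claim. You assert that every $z_a$, $a\in S_1$, is a convergent power series in a uniformizer $u$ of $C$ at $q_1$, where $\tilde X=u^e$. But $e$ is the length of the orbit of $S_1$ under the local monodromy ($\sigma_0$ or $\sigma_\infty$, an $s$-cycle times an $r$-cycle). The power $\sigma_0^e$ fixes $S_1$ only as a set and may permute its elements. For example, take the distinguished point $q\sim c_{S_0}$ over $\tilde X=0$, where $\tilde Y=c_{S_0}$ is the disk series. There $e=1$ and $u=\tilde X$, while $z_a(\tilde X)\sim\xi^a\tilde X^{1/s}$ is not single-valued in $u$ once $s\geq 2$. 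So the individual summands are not functions of $u$ at all, and ``pull-back along the holomorphic map $u\mapsto z_a(u)$'' is meaningless for them.

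The fix is to use only the sum over $a\in S_1$, which is single-valued near $q_1$. Introduce $w$ with $\tilde X=w^{rs}$ (respectively $\tilde X^{-1}=w^{rs}$ over $\infty$). Since $e=\ell_r\ell_s$ divides $rs$ by Proposition~\ref{prop:ramificationprofile}, we have $u=w^m$ with $m=rs/e$. Every branch $z_a$ (respectively $1/z_a$) is a convergent power series in $w$: the roots near $0$ or $\infty$ are series in $w^r$, and the roots near $Q^{\pm1/2}$ are series in $w^s$. Hence the pull-back of the sum to the $w$-disc is holomorphic. By Proposition~\ref{prop:augmentation-correlators} the sum is a meromorphic differential in $u$. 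If it had a pole with leading term $c\,u^{-k}du$, $c\neq0$, $k\geq1$, its pull-back would have leading term $c\,m\,w^{m(1-k)-1}dw$, again a pole. So it is holomorphic at $u=0$, and with this repair your argument is complete.

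There is also a minor slip in the $V_i$ case. Note that $f(t)\,dt+f(-t)\,d(-t)=(f(t)-f(-t))\,dt$, so the even part of $f$ cancels and the odd part survives. Holomorphy of this odd coefficient in $t$ is what makes the form holomorphic in $t^2$; the simple zero is then automatic.
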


\begin{proof} Indeed, by Equation~\eqref{eq: pull-back}, $\omega^{(g)}_{n}$ is regular in one of its arguments at each point $q$ if $\omega^{\mathsf{ext},(g)}_{n}$ is regular at every point $z_a(\tilde X)$, $a=1,\dots,r+s$. Thus, the possible poles are indeed only over $\tilde{X}_i = V_1$ or $V_2$. 

For simplicity of exposition, let us focus on the case $\omega^{(g)}_1$ (the general case is exactly the same, one just has to suppress all variables but one in the notation). The ramification profile of $\tilde X:C\to \mathbb{CP}^1$ over $V_1,V_2$ was already established in Prop.~\ref{prop:ramificationprofile}.
Consider the branches $z_1(\tilde X),\dots,z_{r+s}(\tilde X)$ near $\tilde X = V_1$ and assume that $z_1(\tilde X)=z_2(\tilde X)$ at $\tilde X = V_1$. Let $q\in C$ correspond to the choice of branch $c_S$ over $\tilde X$. There are three possible cases:
\begin{itemize}
    \item $1,2\notin S$. Then the multiplicity of the corresponding point $q$ in the fiber over $\tilde X = V_1$ is equal to $1$. Notice that the sum over $a\in S$ in  Equation~\eqref{eq: pull-back} contains only regular terms. Hence, $\omega^{(g)}_1$ is regular at $q$ as well. This corresponds to the $\binom{r+s-2}{s}$ points over $V_1$ with multiplicity $1$ in Prop.~\ref{prop:ramificationprofile}.
    \item $1,2\in S$. Then the multiplicity of the corresponding point $q$ in the fiber over $\tilde X = V_1$ is equal to $1$. Notice that the sum over $a\in S$ in  Equation~\eqref{eq: pull-back} contains two singular terms. However, the sum of these two singular terms is regular by the linear loop equations for $\omega^{\mathsf{ext},(g)}_1$. This corresponds to the $\binom{r+s-2}{s-2}$ points over $V_1$ with multiplicity $1$ in Prop.~\ref{prop:ramificationprofile}.
    \item $1\in S$, $2\notin S$, or vice versa. The multiplicity of the corresponding point $q$ in the fiber over $\tilde X = V_1$ is equal to $2$. Notice that there is exactly one summand in Equation~\eqref{eq: pull-back} that contains a singular term, and it generates the Laurent part of $\omega^{(g)}_1$ at $q$. This corresponds to the $\binom{r+s-2}{s-1}$ branch points over $V_1$ with multiplicity $2$ in Prop.~\ref{prop:ramificationprofile}.
\end{itemize}
\end{proof}

\begin{remark}
This argument also implies that the principal parts at different
ramification points over $V_1$ (resp., $V_2$) coincide in suitable
local coordinates, with the other arguments fixed.
Indeed, these principal parts are obtained by pulling back the same
auxiliary principal part along the local isomorphisms to the germ
at $p_1$ (resp., $p_2$), given by the selected colliding root.
\end{remark}

The pole structure of $\omega^{(0)}_2$ is more complicated. Note that it is not the standard Bergman kernel associated to the curve $C$. From its formula at $q_1\sim c_{S_1}$, $q_2\sim c_{S_2}$
\begin{align}\label{omega02calibrated}
        \omega^{(0)}_2|_{q_1,q_2} = \sum_{\substack{ a_1\in S_1 \\ a_2\in S_2}} \frac{dz_{a_1}(\tilde X_1) dz_{a_2}(\tilde X_2)}{(z_{a_1}(\tilde X_1)-z_{a_2}(\tilde X_2))^2} - (s-1) \frac{d\tilde{X}_1 d\tilde{X}_2}{(\tilde{X}_1-\tilde{X}_2)^2}
\end{align}
we see that its possible poles occur along the shifted diagonals $\tilde{X}(q_1) = \tilde X(q_2)$. At a generic point of such a diagonal, the coefficient of the double pole is $|S_1\cap S_2|+1-s$. In particular if $|S_1\cap S_2|=s-1$,  the double pole cancels  and $\omega^{(0)}_2$ is regular at a generic point of the corresponding shifted diagonal. On the main diagonal, where 
$S_1=S_2=S$, the expected bi-residue $1$ occurs. The bi-differential \eqref{omega02calibrated} appeared in \cite{GJKS}, and was called the \emph{calibrated annulus kernel}. 

The pole structure of $\omega^{(0)}_1$ is irrelevant for the local
recursion; what matters here is that $d\tilde y$ does not vanish
at the ramification points over $V_1$ and $V_2$.

\subsection{Loop equations and projection formula} The arguments of~\cite{BEO} applied to the extended differentials $\{\omega^{\mathsf{ext},(g)}_n\}$, imply the following

\begin{proposition} At all branching points over $V_1$ and $V_2$ (equivalently, at all poles of $\omega^{(g)}_n$ for $2g-2+n>0$) the system of differentials $\{\omega^{(g)}_n\}$ satisfies the linear and quadratic loop equations. 
\end{proposition}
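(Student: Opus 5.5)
The plan is to reduce both loop equations on $C$ to the corresponding loop equations for $\{\omega^{\mathsf{ext},(g)}_n\}$ at $p_1,p_2$, using the local pull-back formula \eqref{eq: pull-back}. This is the averaging argument of \cite{BEO}, adapted to the branch structure from Proposition~\ref{prop:ramificationprofile}.

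First I fix the local picture. Take $V_1$ and label the roots so that $z_1(\tilde X)$ and $z_2(\tilde X)$ collide at $V_1$. By the analysis in the proof of the pole proposition, a ramification point $q$ over $V_1$ corresponds to a branch $c_S$ with $1\in S$ and $2\notin S$. Let $t$ be the local coordinate at $q$ with $\tilde X=V_1+t^2$. The local deck involution $t\mapsto -t$ then sends the branch $c_S$ to $c_{S'}$, where $S'=(S\setminus\{1\})\cup\{2\}$. Since $X^s=\tilde X$ is unramified at $p_1$ apart from the simple critical point, $z_1(t)$ is a local coordinate near $p_1$ on the auxiliary curve, and $z_2$ evaluated at $\tilde X(t)$ equals $z_1(-t)$, its partner under the local involution of $X$ at $p_1$.

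With this in hand, \eqref{eq: pull-back} in the first variable splits as
\begin{align}
\omega^{(g)}_{n+1}(\pm t,\dots)=R(t,\dots)+E(\pm t,\dots),
\end{align}
where $E(t,\dots)$ is the pull-back of $\omega^{\mathsf{ext},(g)}_{n+1}(z_1(t),\dots)$ and $R$ is the sum over the common sheets $a\in S\cap S'$. The key point is that $R$ is the same on both branches and is a function of $\tilde X$ only. Hence $R$ equals $f(t^2)\,2t\,dt$ with $f$ holomorphic, so $R$ is even in $t$ and has at least a simple zero at $t=0$. The other arguments $z_1,\dots,z_n$ are held at generic points, where everything is regular.

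The linear loop equation then follows immediately. We have
\begin{align}
\omega(t)+\omega(-t)=2R+E(t)+E(-t),
\end{align}
and $E(t)+E(-t)$ has a simple zero at $t=0$ by the auxiliary linear loop equation \eqref{eq:linear-loop-ext}.

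For the quadratic loop equation I expand every factor in the same way. In the product terms, the $E\cdot E$ part reproduces exactly the product part of the auxiliary quadratic expression \eqref{eq:quadratic-loop-ext} at $p_1$. The cross terms have the form $R\,(E(t)+E(-t))$, after grouping the two orderings, and so vanish to order two, one order from each factor. The $R\cdot R$ terms vanish to order two as well.

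For $\omega^{(g-1)}_{n+2}(t,-t,\dots)$ I expand over pairs $(a,b)\in S\times S'$ and treat three kinds of pairs separately:
\begin{itemize}
\item the pair $(1,2)$ gives $\omega^{\mathsf{ext},(g-1)}_{n+2}(z_1(t),z_1(-t),\dots)$, completing the auxiliary quadratic expression;
\item the pairs $(1,c)$ and $(c,2)$ with $c$ common combine into a linear-loop combination in one slot, times a factor $t\,dt$ from the other slot;
\item the pairs $(c,c')$ with both sheets common are of order $t^2$.
\end{itemize}
So everything reduces to the auxiliary quadratic loop equation, which holds by \cite{DKPSS-HOMFLY,BDKS-TR}.

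The main obstacle is the unstable terms. The terms involving $\omega^{(0)}_1$ are logarithmic, so I will check that the splitting above still holds at the level of $\tilde y\,d\tilde X/\tilde X$. This follows from the branchwise formula for $\omega^{(0)}_1$, since only the holomorphic part near $q$ matters. The terms involving $\omega^{(0)}_2$ are the calibrated kernel \eqref{omega02calibrated}. Here I must verify that its correction term $-(s-1)\,d\tilde X_1d\tilde X_2/(\tilde X_1-\tilde X_2)^2$ is even in $t$ and of order $t$ in each slot, so that it behaves like an $R$-term. I must also check that in the case $(g,n)=(1,0)$, $\omega^{(0)}_2(t,-t)$ is regular, because $|S\cap S'|=s-1$ and the double-pole coefficient $|S\cap S'|+1-s$ vanishes. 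I expect the careful bookkeeping of these $(0,2)$ diagonal contributions to be the delicate step. The case of $V_2$ is identical.
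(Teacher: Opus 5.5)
Your proposal is correct and follows the paper's proof. You decompose the pull-back formula at a ramification point into the colliding pair $z_1,z_2$ (auxiliary linear and quadratic loop equations), the mixed terms (auxiliary linear loop equation times the zero of $d\tilde X$), and the common-sheet terms (pull-backs under $\tilde X$ of regular differentials); you are also more explicit than the paper about the unstable terms.

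One slip to fix: $\omega^{(0)}_2(t,-t)$ is not itself regular at $t=0$. The pair $(1,2)$ contributes $B(z_1(t),z_1(-t))\sim -dt^2/(4t^2)$, which is absorbed into the auxiliary quadratic loop equation. There the correction $-(s-1)\,d\tilde X_1d\tilde X_2/(\tilde X_1-\tilde X_2)^2$ is not an $R$-type term either, since it is singular on $\tilde X_1=\tilde X_2$. What holds is that it cancels the double poles of the $s-1$ diagonal pairs $(c,c)$, $c\in S\cap S'$, leaving a pull-back of a regular bidifferential, hence $O(t^2)$.
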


\begin{proof} Consider a point $o\in C$ over, say, $V_1$, where the branches $c_{S'\sqcup \{1\}}$ and $c_{S'\sqcup \{2\}}$ collide, $S'\subseteq \{3,\dots,r+s\}$, $|S'|=s-1$. Note that the deck transformation descends from the deck transformation on the auxiliary curve that acts on the roots of $P$ by interchanging $z_1$ and $z_2$ near the only branching point $u$ over $V_1$ there.

Use Equation~\eqref{eq: pull-back} and its respective versions for the non-stable correlators. Let $q_i\sim c_{S'\sqcup \{i\}}$, $i=1,2$. Then for the linear loop equations, we have the following (again, for simplicity, we present the argument for $\omega^{(g)}_1$, otherwise one has to suppress the non-relevant variables in the notation):
\begin{align} \label{eq:linearloop}
    \omega^{(g)}_1|_{q_1} + \omega^{(g)}_1|_{q_2} = \tilde X^* \left(\omega^{\mathsf{ext},(g)}_1|_{z_{1}(\tilde X)}+\omega^{\mathsf{ext},(g)}_1|_{z_{2}(\tilde X)} +2\sum_{a\in S'} \omega^{\mathsf{ext},(g)}_1|_{z_{a}(\tilde X)} \right).
\end{align}
Here $\omega^{\mathsf{ext},(g)}_1|_{z_{1}(\tilde X)}+\omega^{\mathsf{ext},(g)}_1|_{z_{2}(\tilde X)}$ is regular at $u$ by the linear loop equations for the extended differentials~\eqref{eq:linear-loop-ext}, the last summand is regular at $u$ by the structure of the poles of the extended differentials. Thus, the right hand side of~\eqref{eq:linearloop} is a pull-back of a regular differential with respect to $\tilde{X}^*$, thus it is a regular differential vanishing at $o\in C$. 

The computation for the quadratic loop equations is very similar. We have:
\begin{align}
    & \omega^{(g-1)}_2|_{q_1,q_2} + \sum_{g_1+g_2=g} \omega^{(g_1)}_{1}|_{q_1} \omega^{(g_2)}_1|_{q_2} = 
    \\ \notag &
    \tilde X_1^* \tilde X_2^* \bigg(\omega^{\mathsf{ext},(g-1)}_2|_{z_{1}(\tilde X),z_{2}(\tilde X)} + \sum_{g_1+g_2=g} 
    \omega^{\mathsf{ext},(g_1)}_1|_{z_{1}(\tilde X)}\, \omega^{\mathsf{ext},(g_2)}_1|_{z_{2}(\tilde X)}
    \\ \notag & \qquad \qquad 
    + \sum_{a\in S'} \Big(\omega^{\mathsf{ext},(g-1)}_2|_{z_{1}(\tilde X),z_{a}(\tilde X)} +\omega^{\mathsf{ext},(g-1)}_2|_{z_{a}(\tilde X),z_{2}(\tilde X)} \Big) 
    \\ \notag & \qquad \qquad 
    + \sum_{a\in S'} \sum_{g_1+g_2=g}
    \Big(\omega^{\mathsf{ext},(g_1)}_1|_{z_{1}(\tilde X)}\, \omega^{\mathsf{ext},(g_2)}_1|_{z_{a}(\tilde X)} + 
    \omega^{\mathsf{ext},(g_1)}_1|_{z_{a}(\tilde X)}\, \omega^{\mathsf{ext},(g_2)}_1|_{z_{2}(\tilde X)} \Big)
    \\ \notag & \qquad \qquad 
    + \sum_{a,b\in S'} \Big(\omega^{\mathsf{ext},(g-1)}_2|_{z_{a}(\tilde X),z_{b}(\tilde X)} + \sum_{g_1+g_2=g} 
    \omega^{\mathsf{ext},(g_1)}_1|_{z_{a}(\tilde X)}\, \omega^{\mathsf{ext},(g_2)}_1|_{z_{b}(\tilde X)} \Big) \bigg).
\end{align}
Here in the first line on the right-hand side the expression is regular at $u$ by the quadratic loop equations~\eqref{eq:quadratic-loop-ext} for the extended differentials, the second and the third line by the linear loop equations~\eqref{eq:linear-loop-ext}, and the fourth line by the structure of poles of $\{\omega^{\mathsf{ext},(g)}_n\}$. Thus, the expression on the right hand side is a pull-back of a regular differential, and $\tilde X_1^*\tilde X_2^*$ generates a double zero at $o\in C$.
\end{proof}

\subsection{Projection formula and multiplicities}

The projection property for $\omega^{(g)}_n$, $g\geq 0$, $n\geq 1$, $2g-2+n>0$, is different from the standard one. Typically, one uses the vanishing of the $A$-periods in combination with the given germs of the Laurent parts of the differentials at their poles to uniquely extend the differentials globally via Cauchy formula. 

In this case, we use a different kind of uniqueness:
\begin{lemma}\label{lem:uniqueness} There exist at most one meromorphic differential $\omega$ on $C$ with the only poles at the branching point over $V_1$ and $V_2$, with the prescribed Laurent parts there, and locally at $q\sim c_S$ compatible with the pull-back formula
$\omega|_q=\tilde X^*\sum_{a\in S}\omega^{\mathsf{ext}}|_{z_a(\tilde X)}$
for some meromorphic differential $\omega^{\mathsf{ext}}$ on the auxiliary curve.
\end{lemma}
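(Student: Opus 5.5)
Since the conditions are linear in $\omega$, I would take two differentials $\omega,\omega'$ satisfying them and study the difference $\eta=\omega-\omega'$. Then $\eta$ has zero Laurent parts at the branching points over $V_1,V_2$, and has no other poles, so it is a holomorphic differential on $C$. It is also compatible with the pull-back formula for the auxiliary differential $\eta^{\mathsf{ext}}=\omega^{\mathsf{ext}}-\omega'^{\mathsf{ext}}$, which is meromorphic on $\mathbb{C}\mathrm{P}^1$:
\begin{align}
\eta|_q=\tilde X^*\sum_{a\in S}\eta^{\mathsf{ext}}|_{z_a(\tilde X)},\qquad q\sim c_S.
\end{align}
Since $C$ may have positive genus, holomorphy alone does not force $\eta=0$. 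The pull-back structure has to supply the missing rigidity, replacing the usual $A$-period normalization.

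The first step is to show that $\eta^{\mathsf{ext}}$ is holomorphic on $\mathbb{C}\mathrm{P}^1$, hence zero. Take a point $z_0$ of the auxiliary curve with $\tilde X(z_0)$ generic, and a pole of $\eta^{\mathsf{ext}}$ there, say in the branch $z_1$. Choose $S\ni 1$ and $S'=(S\setminus\{1\})\cup\{k\}$ for some $k\notin S$; this is possible unless $r=0$, and if $r=s=1$ the roots are single-valued anyway. The points $q\sim c_S$ and $q'\sim c_{S'}$ over the same $\tilde X$ are regular points of $C$, since generic points are unramified. The difference $\eta|_q-\eta|_{q'}$ is regular, because $\eta$ is holomorphic on $C$, and it equals $\tilde X^*(\eta^{\mathsf{ext}}|_{z_1}-\eta^{\mathsf{ext}}|_{z_k})$.

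Next I would use the monodromy. By Proposition~\ref{prop:irreducible}, the monodromy group is $S_{r+s}$, so the poles of $\eta^{\mathsf{ext}}$ over generic $\tilde X$ cannot be confined to one branch without being seen in a single combination. Concretely, regularity of every difference $\eta^{\mathsf{ext}}|_{z_a}-\eta^{\mathsf{ext}}|_{z_b}$ makes all branches have the same principal part, so $\eta^{\mathsf{ext}}|_{z_a}=\theta+h_a$ with $h_a$ regular. Summing over a single $S$ then gives $\eta|_q = s\,\theta+(\text{regular})$, which is regular, so $\theta$ is regular.

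The argument above only works over generic $\tilde X$. Over the special points $0$, $\infty$, $V_1$, $V_2$, I would run the same argument with the local uniformizers and the ramification data of Proposition~\ref{prop:ramificationprofile}. At $V_{1,2}$, poles of $\eta^{\mathsf{ext}}$ are a priori allowed. Still, the cases used in the pole proposition show that for a point with multiplicity $2$ (where $1\in S$ and $2\notin S$), the principal part of $\eta$ equals the pull-back of the principal part of $\eta^{\mathsf{ext}}$ at $p_1$, and that part vanishes. This forces $\eta^{\mathsf{ext}}$ to be regular at $p_1$; the same holds at $p_2$. Hence $\eta^{\mathsf{ext}}$ is a holomorphic differential on $\mathbb{C}\mathrm{P}^1$, so $\eta^{\mathsf{ext}}=0$, and therefore $\eta=0$ by the pull-back formula.

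The main obstacle is the points over $0$ and $\infty$. There the roots $z_a$ collide in cycles of lengths $r$ and $s$, so the branch-difference argument needs Puiseux expansions. My first attempt would avoid these fibers entirely. Once $\eta^{\mathsf{ext}}$ is shown to be regular away from finitely many points, it is a rational differential whose only possible poles lie among the $0,Q^{\pm1/2},\infty$ of the auxiliary curve. I would then use the $\mathbb{Z}/s$-averaged expression $\sum_{a\in S_0}\eta^{\mathsf{ext}}|_{z_a}$ near $z=0$. This is the analytic continuation of $\eta$ on the distinguished branch, so it is regular, and combining it with holomorphy of $\eta$ over $\infty$ I would check the finitely many residues and pole orders directly.
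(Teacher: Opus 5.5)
Your reduction is the paper's. Its proof consists only of two observations: the difference of two such differentials is holomorphic on $C$ and compatible with the pull-back formula, and $\mathbb{C}\mathrm{P}^1$ carries no nonzero holomorphic differentials. You rightly see that the real content is showing that $\eta^{\mathsf{ext}}$ has no poles. Your swap argument $S\mapsto(S\setminus\{a\})\cup\{b\}$ is correct over $\tilde X\notin\{0,\infty,V_1,V_2\}$.

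The gap is in the special fibres, and your plan over $\tilde X=0$ cannot work.
\begin{itemize}
\item The distinguished point $q_0\sim c_{S_0}$ is unramified, and $\eta|_{q_0}$ is the local trace of $\eta^{\mathsf{ext}}$ under $\tilde X=u^s$. Writing $\eta^{\mathsf{ext}}=\sum_k c_k u^k\,du$ near $z=0$ gives $\eta|_{q_0}=\sum_m c_{ms-1}\tilde X^{m-1}d\tilde X$. So regularity at $q_0$ only kills $c_{-1},c_{-s-1},\dots$, and says nothing about $z=Q^{1/2}$.
\item Concretely, take $s\ge2$ and $\eta^{\mathsf{ext}}=dz/z^2$. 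It is regular at $z=\infty,Q^{\pm1/2}$, so the corresponding $\eta$ is holomorphic at every point of $C$ off the fibre over $\tilde X=0$, including everything over $\tilde X=\infty$. It has zero residue, and its trace at $q_0$ is holomorphic. Yet it is nonzero. No bookkeeping of residues and pole orders with the data you list can conclude.
\item The poles of this $\eta$ sit at the non-distinguished points over $\tilde X=0$. For example, take $S=(S_0\setminus\{a\})\cup\{b\}$ with $z_b(0)=Q^{1/2}$. There $\eta=\sum_{c\in S_0}\eta^{\mathsf{ext}}|_{z_c}-\eta^{\mathsf{ext}}|_{z_a}+\eta^{\mathsf{ext}}|_{z_b}$ retains a nonzero multiple of $\tilde X^{-1-1/s}d\tilde X$. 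These mixed points cannot be avoided.
\item A smaller gap of the same kind occurs at $V_{1,2}$. You identify the principal part at a multiplicity-$2$ point with that of $\eta^{\mathsf{ext}}$ at $p_1$. This presupposes that $\eta^{\mathsf{ext}}$ is regular at the unramified points $z_a(V_1)$, $a\ge3$, which your generic argument does not cover. In the pole proposition this was a hypothesis on $\omega^{\mathsf{ext}}$; here it must be proved.
\end{itemize}

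The repair is to run your swap argument at every point, using a ramification-independent notion of polar part.
\begin{itemize}
\item On a slit disc around any $\tilde X_0$, use the coordinate $w=\tilde X-\tilde X_0$ (or $w=1/\tilde X$ at $\infty$). By analytic continuation of the pull-back formula, every sheet $c_S$ and every branch $z_a$ is a Puiseux branch there.
\item A Puiseux differential $\sum_\alpha c_\alpha w^\alpha\,dw$ extends holomorphically to the corresponding point of $C$ (resp.\ of $\mathbb{C}\mathrm{P}^1$) if and only if $c_\alpha=0$ for all $\alpha\le-1$. This holds whatever the ramification index $e$, because $v^k\,dv$ with $v^e=w$ equals $\frac1e w^{(k+1)/e-1}dw$.
\item Truncation $\mathrm{PP}$ to the exponents $\alpha\le -1$ is linear. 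Hence $\sum_{a\in S}\mathrm{PP}(\eta^{\mathsf{ext}}|_{z_a})=0$ for all $S$ gives, by swapping, that all $\mathrm{PP}(\eta^{\mathsf{ext}}|_{z_a})$ coincide. Then $s\,\mathrm{PP}(\eta^{\mathsf{ext}}|_{z_a})=0$.
\item Therefore $\eta^{\mathsf{ext}}$ is holomorphic on $\mathbb{C}\mathrm{P}^1$, so $\eta^{\mathsf{ext}}=0$ and $\eta=0$. This treats $0,\infty,V_1,V_2$ and the generic fibres uniformly.
\end{itemize}
Equivalently, you can work on the Galois closure $\hat C$ of $\tilde X\colon\mathbb{C}\mathrm{P}^1\to\mathbb{C}\mathrm{P}^1$. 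There $\pi_S^*\eta=\sum_{a\in S}z_a^*\eta^{\mathsf{ext}}$ holds globally, and pull-backs under finite maps preserve holomorphy and detect poles.
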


\begin{proof} Indeed, the difference of two meromorphic differentials satisfying these properties is a holomorphic differential compatible with the pull-back formula. But any holomorphic differential locally compatible with the pull-back formula vanishes since there are no non-vanishing holomorphic differentials on the auxiliary curve.
\end{proof}
\begin{remark}
The stable differentials $\omega_n^{(g)}$ are globally \emph{exact}
in each variable. More precisely, there exists a meromorphic
function $H_n^{(g)}$ on $C^n$ such that
\begin{align}
\omega_n^{(g)}=d_1\cdots d_n H_n^{(g)},\qquad 2g-2+n>0. 
\end{align}
Indeed, the stable auxiliary differentials $\omega_n^{\mathsf{ext},(g)}$ are rational and
have vanishing residues in each variable, hence admit rational
multiprimitives on $(\mathbb{CP}^1)^n$. Applying the algebraic
lift of Lemma~\ref{lem:algebraic lift} in each variable gives
global meromorphic multiprimitives on $C^n$, and differentiation
commutes with the sheet-sum formula \eqref{eq: pull-back}.
Consequently, all periods of the stable differentials vanish
in each variable. This is very different from ordinary topological recursion on higher genus spectral curves, where the generated differentials are in general not globally exact in each variable.
\end{remark}
Now, consider the projection formula 
		\begin{align} \label{eq:projection-omega}
			\omega^{(g)}_{n}(\xi_1,\dots,\xi_n) = \frac{1}{\binom{s+r-2}{s-1}^n} \sum_{r_1,\dots,r_n} \Big( \prod_{i=1}^n \Res_{\xi'_i\to r_i}	\int_{r_i}^{\xi'_i} \omega^{(0)}_2(\bullet,\xi_i) \Big) \omega^{(g)}_{n}(\xi'_1,\dots,\xi'_n),
		\end{align}
where $\xi_i$ are some local coordinates and the sum is taken over $r_1,\dots,r_n$ running through the branch points of $\tilde X$ over $V_1$ and $V_2$,
with $\omega^{(0)}_2$ used instead of the standard Bergman kernel. A simple observation that the right hand side of this projection formula indeed satisfies the conditions of Lemma~\ref{lem:uniqueness} implies that Equation~\eqref{eq:projection-omega} holds for all $g\geq 0$, $n\geq 1$, $2g-2+n>0$, and thus we conclude with the following 

\begin{theorem} \label{thm:TRonAugmentation}
The system of differentials $\{\omega^{(g)}_n\}$ encoding the colored HOMFLY-PT polynomials of torus $(r,s)$ knot satisfies topological recursion on the corresponding augmentation curve.
\end{theorem}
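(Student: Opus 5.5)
The plan is to assemble Theorem~\ref{thm:TRonAugmentation} from the ingredients already established, using the standard equivalence between topological recursion and three properties: pole structure, linear and quadratic loop equations, and a projection property. Here the projection kernel is the calibrated annulus kernel $\omega^{(0)}_2$, not the standard Bergman kernel. Concretely, the claim is that for $2g-2+n>0$ the differentials $\omega^{(g)}_n$ are computed recursively as residues at the ramification points over $V_1,V_2$. The kernel is
\begin{align}
K(\xi_1,\xi) = \frac{1}{\binom{r+s-2}{s-1}}\,\frac{\tfrac12\int_{\sigma(\xi)}^{\xi}\omega^{(0)}_2(\xi_1,\bullet)}{\omega^{(0)}_1(\xi)-\omega^{(0)}_1(\sigma(\xi))},
\end{align}
where $\sigma$ is the local deck involution at each simple ramification point. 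Although $\omega^{(0)}_1$ is only locally defined, its difference across the two colliding sheets is well defined locally, and $d\tilde y\neq 0$ there.

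First, I will invoke Proposition~\ref{prop:augmentation-correlators} and the subsequent proposition on poles. Together they show that the stable $\omega^{(g)}_n$ are global meromorphic on $C^n$, with poles only at the simple branch points over $V_1,V_2$; by Proposition~\ref{prop:ramificationprofile} these points are simple. Second, I will use the proposition on loop equations at all such points. Third, I will establish the projection formula~\eqref{eq:projection-omega} via Lemma~\ref{lem:uniqueness}. This requires checking two things about the right-hand side of~\eqref{eq:projection-omega}.

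\begin{enumerate}
\item It has the same principal parts as $\omega^{(g)}_n$. This holds because, near a point $q$ over $\tilde X=V_1$, $\omega^{(0)}_2(\bullet,\xi)$ has a double pole with coefficient $1$ on the main diagonal, while its other local sheets contribute only regular terms. Each branch point appears, and the normalizing factor $\binom{r+s-2}{s-1}^{-1}$ must compensate the redundancy.
\item It is compatible with the pull-back formula~\eqref{eq: pull-back}. This holds because $\omega^{(0)}_2$ itself is a sheet sum~\eqref{omega02calibrated} of the auxiliary Bergman kernel, minus the $\tilde X^*$-pullback term.
\end{enumerate}

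The prefactor is the delicate point. Via the sheet-sum formula, the residue at $r_i$ reduces to residues at $p_1,p_2$ on the auxiliary curve, and each auxiliary pole appears in $\binom{r+s-2}{s-1}$ branch points over $V_a$. The factor $\binom{r+s-2}{s-1}^{-1}$ exactly compensates this, so the right-hand side reduces to the auxiliary projection property~(4) averaged via~\eqref{eq: averaging}. The term $-(s-1)\,d\tilde X_1 d\tilde X_2/(\tilde X_1-\tilde X_2)^2$ contributes nothing, since it is holomorphic near ramification points after residues against principal parts that satisfy the linear loop equations.

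Finally, I will run the usual derivation from the projection formula to the residue formula. The integrand of~\eqref{eq:projection-omega} in one variable is rewritten using $\omega^{(g)}_{n}(\xi')=\tfrac12(\omega(\xi')-\omega(\sigma\xi'))+\tfrac12(\omega(\xi')+\omega(\sigma\xi'))$. The symmetric part is killed by the linear loop equation. The antisymmetric part is expressed through the quadratic loop equation: multiplying by $(\omega^{(0)}_1(\xi')-\omega^{(0)}_1(\sigma\xi'))^{-1}$, whose denominator has a simple zero, leaves only the holomorphic remainder. That remainder, having a double zero, contributes no residue. This is the standard argument of~\cite{BEO}, applied locally at each simple branch point.

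I expect the main obstacle to be the projection-formula step, in particular justifying the multiplicity factor and the use of the non-standard kernel $\omega^{(0)}_2$, which has additional poles on shifted diagonals. I would first verify that those shifted-diagonal poles do not touch the residue computation. For fixed $\xi_i$ generic they are away from the branch points, and the integrated kernel $\int_{r_i}^{\xi'}\omega^{(0)}_2(\bullet,\xi_i)$ is regular near $r_i$ unless $\xi_i$ is also near $r_i$. After that I would check the counting claim on the simplest case $r=2,s=3$ explicitly.
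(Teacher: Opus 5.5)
Your architecture matches the paper's. You get meromorphy and pole location from the sheet sums, use the loop equations at every simple branch point over $V_1,V_2$, and prove the projection formula \eqref{eq:projection-omega} with the calibrated kernel $\omega^{(0)}_2$ and prefactor $\binom{r+s-2}{s-1}^{-1}$ via Lemma~\ref{lem:uniqueness}. The passage to the residue formula is standard. However, the local justification in your item (1) is wrong, and it is wrong exactly at the point you flag as delicate.

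It is not true that only the main diagonal produces a singularity while the other sheets are regular. By \eqref{eq:02-shifted}, take \emph{any} two ramification points over the same $V_j$, indexed by $S'_1,S'_2$. In compatible coordinates,
\[
\omega^{(0)}_2(u_1,u_2)=\frac{du_1du_2}{(u_1-u_2)^2}+\bigl(|S'_1\cap S'_2|-s+1\bigr)\frac{4u_1u_2\,du_1du_2}{(u_1^2-u_2^2)^2}+O(1).
\]
The second term has a primitive even in $u_1$. It therefore drops out against the principal part, which is odd under $u_1\mapsto -u_1$ by the linear loop equation. The first term, however, means that every one of the $\binom{r+s-2}{s-1}$ branch points over $V_j$ reproduces the full common principal part at a given $q$. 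That redundancy is what the prefactor removes. With your stated mechanism, the right-hand side would carry only $\binom{r+s-2}{s-1}^{-1}$ times the correct principal part.

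The same misconception appears in your last paragraph. The integral $\int_{r_i}^{\xi'}\omega^{(0)}_2(\bullet,\xi_i)$ becomes singular as soon as $\xi_i$ approaches \emph{any} point of the fiber over the same $V_j$, ramified or not. So on the uniqueness route you must also check that the right-hand side is regular at the unramified preimages of $V_1,V_2$, since Lemma~\ref{lem:uniqueness} allows poles only at branch points. Item (1) omits this check.

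Your ``prefactor'' paragraph, by contrast, is correct and self-sufficient, and I would make it the proof. Fix generic $\xi$ and generic spectator variables. Near a branch point indexed by $S'$, expand both $\omega^{(0)}_2(\xi',\xi)$ and $\omega^{(g)}_n(\xi',\dots)$ via the sheet sums. All terms involving a root from $S'$, together with the $-(s-1)\,d\tilde X_1d\tilde X_2/(\tilde X_1-\tilde X_2)^2$ term, are of one of two kinds. Either they are regular at $u=0$, or their primitives are holomorphic functions of $u^2$ vanishing at $u=0$. In both cases they give no residue. What remains is $\sum_{a\in S}\Res_{z\to p_j}\bigl(\int_{p_j}^{z}B(\bullet,z_a(\xi))\bigr)\,\omega^{\mathsf{ext},(g)}_n(z,\dots)$, which is independent of $S'$.

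Sum over the $\binom{r+s-2}{s-1}$ branch points over each $V_j$, divide by that number, and apply the auxiliary projection property. This gives \eqref{eq:projection-omega} at generic points, hence globally by meromorphy. You then need neither the uniqueness lemma nor the principal-part bookkeeping. This is precisely the mechanism the paper uses in the proof of Proposition~\ref{prop:dilaton}.

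A minor point: at a simple branch point, $\omega^{(0)}_1(\xi')-\omega^{(0)}_1(\sigma\xi')$ has a double zero, not a simple one, because $d\tilde X$ also vanishes there. The simple zero of $\int_{\sigma\xi'}^{\xi'}\omega^{(0)}_2$ still leaves the kernel with only a simple pole, so your conclusion stands.
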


\begin{remark}
    In this form, with exactly the same caveat on non-standard replacement for the Bergman kernel, this theorem is conjectured in~\cite{GJKS} and its statement is supported in \emph{op.~cit.} both by computations and physical reasoning. 
\end{remark}

\begin{remark} One can compute directly the multiplicities of polar parts in the projection formula~\eqref{eq:projection-omega}. At two ramification points over $V_1$ (the argument for $V_2$ is literally the same),
indexed by subsets $S'_1,S'_2\subseteq\{3,\dots,r+s\}$, $|S'_1|=|S'_2|=s-1$, choose compatible
coordinates $\tilde X(u_1)=V_1+u_1^2$, $\tilde X(u_2)=V_1+u_2^2$. Then along the corresponding shifted diagonal
\begin{align} \label{eq:02-shifted}
\omega^{(0)}_2(u_1,u_2)
=
\frac{du_1du_2}{(u_1-u_2)^2}
+\bigl(|S'_1\cap S'_2|-s+1\bigr)
\frac{4u_1u_2\,du_1\,du_2}{(u_1^2-u_2^2)^2}
+O(1).
\end{align}
The first term comes from the $\frac{dz_1(\tilde X_1)dz_2(\tilde X_2)} {(z_1(\tilde X_1)-z_2(\tilde X_2))^2}$ summand in~\eqref{omega02calibrated}; the second term comes from the remaining terms representing $\frac{dz_a(\tilde X_1)dz_a(\tilde X_2)} {(z_a(\tilde X_1)-z_a(\tilde X_2))^2}$ for common $a\in S_1'\cap S_2'$ and the $\frac{d\tilde X_1 d\tilde X_2}{(\tilde X_1-\tilde X_2)^2}$ term. All other
pairs of roots in~\eqref{omega02calibrated} give holomorphic local contributions. 

By the linear loop equation, the principal part of $\omega^{(g)}_1$ (again, for clarity of exposition we restrict ourselves to this case) written in the local coordinate $u_1$ is odd. The primitive of the second summand in~\eqref{eq:02-shifted} is even in $u_1$, thus its product with an odd principal part doesn't contribute to the residue in~\eqref{eq:projection-omega}. The residue with the primitive of the first summand in~\eqref{eq:02-shifted}
reproduces the common principal part at every ramification point. Since we take the sum over $\binom{s+r-2}{s-1}$ ramification points over $V_1$, we get at each such point the required principal part with the multiplicity $\binom{s+r-2}{s-1}$, and we divide by this multiplicity in the resulting projection formula~\eqref{eq:projection-omega}.

Similar local computation shows that the right hand side of Equation~\eqref{eq:projection-omega} is regular at the non-ramified preimages of $V_1$ and $V_2$. 
\end{remark}

\subsection{Symmetry under the exchange of \texorpdfstring{$r$ and $s$}{r and s}}

An important symmetry of torus-knot invariants is their
invariance under the exchange of $r$ and $s$. In particular,
with the same framing convention for both presentations,
the colored HOMFLY-PT polynomials satisfy
\begin{align}
H_\lambda^{r,s}(Q^{-1/2},e^{\hbar/2})
=
H_\lambda^{s,r}(Q^{-1/2},e^{\hbar/2}),
\end{align}
although this symmetry is not manifest in the Rosso--Jones
formula \eqref{eq:Rosso-Jones}.

Because of the presentation-dependent prefactor in
\eqref{eq:OV}, the corresponding Ooguri--Vafa partition
functions are related by a rescaling of the power-sum
variables. More precisely, writing $Z^{(r,s)}$ for the
partition function defined in \eqref{eq:OV}, we have
\begin{equation}\label{eq:OV-rs-symmetry}
Z^{(r,s)}(p_1,p_2,\ldots)
=
Z^{(s,r)}\left(
Q^{(r-s)/2}p_1,\,
Q^{r-s}p_2,\,
Q^{3(r-s)/2}p_3,\ldots
\right).
\end{equation}
Indeed, this follows from the homogeneity identity
\begin{align}
s_\lambda(Q^{(r-s)/2}p_1,Q^{r-s}p_2,\ldots)
=
Q^{|\lambda|(r-s)/2}s_\lambda(p).
\end{align}

From this, we can conclude how the correlation differentials
are related by the swap of $r$ and $s$. Namely,
\begin{align}
    \frac{\partial^n \log Z^{(r,s)}}
    {\partial p_{m_1}\cdots\partial p_{m_n}}
    \bigg\vert_{p=0}
    =
    Q^{(m_1+\cdots+m_n)(r-s)/2}
    \frac{\partial^n \log Z^{(s,r)}}
    {\partial p_{m_1}\cdots\partial p_{m_n}}
    \bigg\vert_{p=0}.
\end{align}
Hence the local germs already satisfy
\begin{align}
    \omega_{n}^{(g),(r,s)}
    (\tilde{X}_1,\ldots,\tilde{X}_n)
    =
    \omega_{n}^{(g),(s,r)}
    (Q^{(r-s)/2}\tilde{X}_1,\ldots,
     Q^{(r-s)/2}\tilde{X}_n),
\end{align}
where the right-hand side includes the differential pullbacks.

From the construction of Sec.~\ref{sec:TrueCurve}, one can
introduce a Möbius transformation in \eqref{eq:tildeX}
of the form
\begin{align}
    M(z)=\frac{Q^{1/2}-z}{1-Q^{1/2}z},
    \qquad
    \tilde X_{s,r}(M(z))
    =Q^{(r-s)/2}\tilde X_{r,s}(z),
\end{align}
now with $r$ roots approaching $0$. The same construction
of the augmentation curve holds with $r,s$ interchanged,
where the selected subset $S$ is replaced by the complement
$S^c$ under the Möbius transformation. The differentials
$\omega_n^{(g)}$ are then constructed from the $r$-roots
averaging.

The augmentation curve is invariant under the $r$-$s$
swap up to an overall factor. We have
\begin{align}
    F_{r,s}(\tilde X,\tilde Y)
    =
    Q^{-\frac{r-s}{2(r+s)}\binom{r+s}{s}}
    F_{s,r}(Q^{(r-s)/2}\tilde X,\tilde Y).
\end{align}
This transforms the leading term by
$\tilde X\to Q^{(r-s)/2}\tilde X$ and relates the
constant terms according to
\begin{align}
    (Q^{1/2})^{\frac{s}{r+s}\binom{r+s}{s}}
    =
    Q^{-\frac{r-s}{2(r+s)}\binom{r+s}{s}}
    (Q^{1/2})^{\frac{r}{r+s}\binom{r+s}{s}},
\end{align}
as desired. See \cite{Ng-Introduction} for the
corresponding invariance of knot contact homology.

These properties extend globally on $C^n$ to the
correlation differentials. In particular, we show that
the bidifferentials $\omega^{(0),(r,s)}_2$
and $\omega^{(0),(s,r)}_2$ can be identified globally.

\begin{lemma}
    The bidifferentials $\omega^{(0),(r,s)}_2$ and
    $\omega^{(0),(s,r)}_2$ agree globally under the
    identification $\tilde X\mapsto
    Q^{(r-s)/2}\tilde X$, $\tilde Y\mapsto\tilde Y$.
\end{lemma}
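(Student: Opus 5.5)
The plan is to compare the two explicit formulas \eqref{omega02calibrated} for $\omega^{(0),(r,s)}_2$ and $\omega^{(0),(s,r)}_2$ branch by branch, using the Möbius transformation $M$ to match the roots. First, for fixed $\tilde X$, $M$ gives a bijection between the $r+s$ roots $z_a$ of $P_{r,s}$ at $\tilde X$ and the $r+s$ roots $w_a=M(z_a)$ of $P_{s,r}$ at $Q^{(r-s)/2}\tilde X$. It sends the $s$ roots near $0$ to the $s$ roots near $Q^{1/2}$, and the $r$ roots near $Q^{1/2}$ to the $r$ roots near $0$. Next we check that under this bijection the branch function $c^{(r,s)}_S$ equals $c^{(s,r)}_{S^c}$. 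Both satisfy the defining power relations \eqref{eq:cS-powers} with $r,s$ swapped. To verify this, substitute $w=M(z)$, using $1-Q^{\mp 1/2}M(z)$ expressed through $z$ and $1-Q^{1/2}z$, together with the total products \eqref{eq:total-products}. Coprimality then pins down the common value. The outcome is that the point $q\sim c_S$ of $C_{r,s}$ corresponds to $q\sim c_{S^c}$ of $C_{s,r}$ under $(\tilde X,\tilde Y)\mapsto(Q^{(r-s)/2}\tilde X,\tilde Y)$. This correspondence is consistent with the stated relation between $F_{r,s}$ and $F_{s,r}$.

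With this in place, at $q_i\sim c_{S_i}$ the $(s,r)$-formula reads
\[
\sum_{a\in S_1^c,\,b\in S_2^c}\frac{dw_a\,dw_b}{(w_a-w_b)^2}-(r-1)\frac{d\tilde X_1d\tilde X_2}{(\tilde X_1-\tilde X_2)^2}.
\]
The rescaling of $\tilde X$ leaves the last kernel invariant. Möbius invariance of $\frac{dz\,dz'}{(z-z')^2}$ turns the sum into the sum over $a\in S_1^c$, $b\in S_2^c$ of the same expression in the $z$'s. So it remains to prove the identity
\[
\sum_{a\in S_1,b\in S_2}K_{ab}-(s-1)K=\sum_{a\in S_1^c,b\in S_2^c}K_{ab}-(r-1)K,
\]
where $K_{ab}=\frac{dz_a(\tilde X_1)dz_b(\tilde X_2)}{(z_a(\tilde X_1)-z_b(\tilde X_2))^2}$ and $K$ is the $\tilde X$-kernel.

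To establish this identity, split the full double sum $\Sigma=\sum_{a,b=1}^{r+s}K_{ab}$ over all pairs of roots. The difference of the two sides equals $\sum_{a\in S_1}\Sigma_{a,\cdot}-\sum_{b\in S_2^c}\Sigma_{\cdot,b}-(s-r)K$, after rearranging $S_1\times S_2=S_1\times\{\text{all}\}-S_1\times S_2^c$ and similarly for the complement. The key fact is the fiber-sum identity $\sum_{b=1}^{r+s}K_{ab}=\tilde X_2^{*}\frac{dz_a(\tilde X_1)\,dz}{(z_a(\tilde X_1)-z)^2}$ summed over the fiber. Its total sum over a full fiber is the pullback-pushforward of the Bergman kernel of $\mathbb{CP}^1$ along the degree-$(r+s)$ map $\tilde X$. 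This equals $\frac{d\tilde X_1d\tilde X_2}{(\tilde X_1-\tilde X_2)^2}$ in the second variable, because the trace of the $\mathbb{CP}^1$ Bergman kernel along a rational map is the Bergman kernel of the base (genus $0$, no holomorphic differentials, double pole with unit coefficient on the diagonal). Hence $\sum_{b}K_{ab}=K$ for each $a$, and symmetrically $\sum_a K_{ab}=K$. The difference therefore becomes $sK-rK-(s-r)K=0$.

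The main obstacle I expect is the careful verification that $c_S^{(r,s)}$ matches $c^{(s,r)}_{S^c}$ exactly rather than only up to a root of unity, with the signs $(-1)^{s-1}$ and the $Q$-powers checked. I would handle this by comparing $r$-th and $s$-th powers explicitly via \eqref{eq:total-products} and invoking coprimality. The cases $S_1=S_2$ and the diagonal terms need no separate treatment, since the identity holds as an identity of local germs and both sides are global meromorphic bidifferentials, so agreement on an open set gives global agreement.
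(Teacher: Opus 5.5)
Your proposal is correct and follows essentially the same route as the paper. Both arguments match sheets via $S\leftrightarrow S^c$ under $M$, use Möbius invariance of the Cauchy kernel, and split the full double sum with the fiber-sum identities $\sum_b K_{ab}=K$ and $\sum_a K_{ab}=K$, reducing the claim to $sK-rK-(s-r)K=0$. The differences are minor. You obtain the fiber-sum identity from a genus-zero trace argument, whereas the paper uses the factorization $\prod_i\bigl(z_i(\tilde X_1)-z_j(\tilde X_2)\bigr)\propto(\tilde X_2-\tilde X_1)(1-Q^{1/2}z_j(\tilde X_2))^r$ followed by $d_1d_2\log$. You also explicitly verify $c^{(r,s)}_S=c^{(s,r)}_{M(S^c)}$, which the paper only asserts; it does hold, since both the $r$-th and $s$-th powers agree by \eqref{eq:total-products}.
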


\begin{proof}
    In the following, we compare the bidifferentials
    sheetwise. Since the argument is independent of the
    chosen sheets, they coincide globally.

    Let us first observe that a Möbius transformation $M$
    leaves the Cauchy bidifferential invariant:
    \begin{align}
        \frac{dM(z)dM(w)}{(M(z)-M(w))^2}
        =
        \frac{dz\,dw}{(z-w)^2}.
    \end{align}
    The Möbius transformation used to identify the
    auxiliary coordinates therefore does not affect
    the bidifferentials.

    Let $S_1,S_2\subseteq\{1,\ldots,r+s\}$ be the
    subsets corresponding to the sheets
    $q_1\sim c_{S_1}$ and $q_2\sim c_{S_2}$, with
    $|S_1|=|S_2|=s$. On these sheets, we have
    \begin{align}\label{om02rs}
        \omega^{(0),(r,s)}_2\big|_{q_1,q_2}
        =
        \sum_{\substack{i\in S_1\\j\in S_2}}
        \frac{dz_i(\tilde X_1)dz_j(\tilde X_2)}
        {(z_i(\tilde X_1)-z_j(\tilde X_2))^2}
        \quad -(s-1)
        \frac{d\tilde X_1d\tilde X_2}
        {(\tilde X_1-\tilde X_2)^2}.
    \end{align}
    This will be identified with the $r$-$s$ swapped
    bidifferential on the $r$-element complementary
    subsets $S_1^c,S_2^c$:
    \begin{align}\label{om02sr}
        \omega^{(0),(s,r)}_2\big|_{q_1,q_2}
        =
        \sum_{\substack{i\in S_1^c\\j\in S_2^c}}
        \frac{dz_i(\tilde X_1)dz_j(\tilde X_2)}
        {(z_i(\tilde X_1)-z_j(\tilde X_2))^2} -(r-1)
        \frac{d\tilde X_1d\tilde X_2}
        {(\tilde X_1-\tilde X_2)^2}.
    \end{align}
    Here \eqref{om02sr} is understood after the coordinate
    rescaling and Möbius transformation. Note that the
    sum over $i,j$ runs over the complementary subsets
    of $\{1,\ldots,r+s\}$ compared to \eqref{om02rs},
    which identifies the corresponding sheets.

    Let us now compute the full sum over all roots $z_i$.
    For each $j$, we have
    \begin{align}\label{eq:full-Cauchy-rs}
        \sum_{i=1}^{r+s}
        \frac{dz_i(\tilde X_1)dz_j(\tilde X_2)}
        {(z_i(\tilde X_1)-z_j(\tilde X_2))^2}
        =
        \frac{d\tilde X_1d\tilde X_2}
        {(\tilde X_1-\tilde X_2)^2}.
    \end{align}
    Indeed, by the definition of $P(z;\tilde X)$ in \eqref{eq:z-root-polynomial},
    \begin{align}
        \prod_{i=1}^{r+s}
        (z_i(\tilde X_1)-z_j(\tilde X_2))
        =
        \mathrm{const}\cdot
        P(z_j(\tilde X_2);\tilde X_1)
        =
        \mathrm{const}\cdot
        (\tilde X_2-\tilde X_1)
        (1-Q^{1/2}z_j(\tilde X_2))^r.
    \end{align}
    Taking $d_{\tilde X_1}d_{\tilde X_2}\log$
    gives \eqref{eq:full-Cauchy-rs}. The analogous
    identity holds for the sum over $j$.

    Consequently,
    \begin{align}
        \sum_{i,j=1}^{r+s}
        \frac{dz_i(\tilde X_1)dz_j(\tilde X_2)}
        {(z_i(\tilde X_1)-z_j(\tilde X_2))^2}
        =
        (r+s)
        \frac{d\tilde X_1d\tilde X_2}
        {(\tilde X_1-\tilde X_2)^2}.
    \end{align}
    On the other hand, the full sum can be split as
    \begin{align}
        \sum_{i,j=1}^{r+s}
        \frac{dz_i(\tilde X_1)dz_j(\tilde X_2)}
        {(z_i(\tilde X_1)-z_j(\tilde X_2))^2}
        &=
        \left(
        \sum_{\substack{1\leq i\leq r+s\\j\in S_2}}
        +
        \sum_{\substack{i\in S_1\\1\leq j\leq r+s}}
        +
        \sum_{\substack{i\in S_1^c\\j\in S_2^c}}
        -
        \sum_{\substack{i\in S_1\\j\in S_2}}
        \right)
        \frac{dz_i(\tilde X_1)dz_j(\tilde X_2)}
        {(z_i(\tilde X_1)-z_j(\tilde X_2))^2}\\ \notag
        &=
        2s\frac{d\tilde X_1d\tilde X_2}
        {(\tilde X_1-\tilde X_2)^2}
        +
        \left(
        \sum_{\substack{i\in S_1^c\\j\in S_2^c}}
        -
        \sum_{\substack{i\in S_1\\j\in S_2}}
        \right)
        \frac{dz_i(\tilde X_1)dz_j(\tilde X_2)}
        {(z_i(\tilde X_1)-z_j(\tilde X_2))^2}.
    \end{align}
    Comparing with the previous sum, we find
    \begin{align}
        \sum_{\substack{i\in S_1^c\\j\in S_2^c}}
        \frac{dz_i(\tilde X_1)dz_j(\tilde X_2)}
        {(z_i(\tilde X_1)-z_j(\tilde X_2))^2}
         =
        (r-s)
        \frac{d\tilde X_1d\tilde X_2}
        {(\tilde X_1-\tilde X_2)^2}
        +
        \sum_{\substack{i\in S_1\\j\in S_2}}
        \frac{dz_i(\tilde X_1)dz_j(\tilde X_2)}
        {(z_i(\tilde X_1)-z_j(\tilde X_2))^2}.
    \end{align}
    Substituting into \eqref{om02sr} gives
    \begin{align}
        \omega^{(0),(r,s)}_2\big|_{q_1,q_2}
        =
        \omega^{(0),(s,r)}_2\big|_{q_1,q_2},
    \end{align}
    where the identification of coordinates is understood.
    The equality holds on every generic pair of sheets
    and hence extends globally.
\end{proof}

We conclude:
\begin{corollary}
    Under the $r$-$s$ swap, the correlation differentials
    are identical on the augmentation curve after the
    coordinate rescaling. More precisely,
    \begin{align}
        \omega_n^{(g),(r,s)}
        (\tilde X_1,\ldots,\tilde X_n)
        =
        \omega_n^{(g),(s,r)}
        (Q^{(r-s)/2}\tilde X_1,\ldots,
         Q^{(r-s)/2}\tilde X_n)
    \end{align}
    holds globally on $C^n$.
\end{corollary}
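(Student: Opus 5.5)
The idea is to get the global identity from the local one by a uniqueness argument. The local identity is already proved: as formal germs at the distinguished point over $\tilde X=0$ (branch $c_{S_0}$ for $(r,s)$, which corresponds to the complementary branch for $(s,r)$ after the Möbius map $M$), we have $\omega_n^{(g),(r,s)}(\tilde X_1,\dots,\tilde X_n)=\omega_n^{(g),(s,r)}(Q^{(r-s)/2}\tilde X_1,\dots)$, with differential pullbacks included. The curves themselves are identified by $F_{r,s}(\tilde X,\tilde Y)=\mathrm{const}\cdot F_{s,r}(Q^{(r-s)/2}\tilde X,\tilde Y)$. This gives a birational isomorphism $\iota\colon C_{r,s}\to C_{s,r}$, $(\tilde X,\tilde Y)\mapsto(Q^{(r-s)/2}\tilde X,\tilde Y)$, which on sheets sends $q\sim c_S$ to $\iota(q)\sim c^{(s,r)}_{S^c}$.

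The distinguished branches must match under $\iota$. By the proof of Theorem~\ref{thm:corrected-algebraicity}, $\tilde Y=c_{S_0}$ on the distinguished branch, and $c^{(s,r)}_{S_0^c}$ computed from the $M$-transformed roots equals the same $\tilde Y$. This is a short check using \eqref{eq:total-products} and \eqref{eq:cS-powers}. It is essentially the content of the statement that $F_{r,s}$ and $F_{s,r}$ agree up to rescaling.

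Once the branches match, both $\omega_n^{(g),(r,s)}$ and $\iota^*\omega_n^{(g),(s,r)}$ are global meromorphic $n$-differentials on $C_{r,s}^n$ by Proposition~\ref{prop:augmentation-correlators}, and $C_{r,s}$ is irreducible by Proposition~\ref{prop:irreducible}. Their germs agree at one point of $C^n$, namely the product of the distinguished points. By the identity theorem on the connected curve $C^n$ (taking the normalization so that it is a connected complex manifold), they agree globally. This covers all $(g,n)$ in the stable range.

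For $(0,2)$ I will use the preceding Lemma. For $(0,1)$ the statement holds as an identity of locally defined (multivalued) differentials, by analytic continuation of the local identity along paths in $C$. Equivalently, $\tilde y$ is the same function $\log\tilde Y$ on both curves under $\iota$.

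The main obstacle is making the branch matching rigorous. I will check directly that the image under $M$ of the $s$ roots tending to $0$ as $\tilde X\to0$ are exactly the roots that are not among the $r$ roots tending to $0$ for the $(s,r)$ polynomial. Since $M(0)=Q^{1/2}$ and $M(Q^{1/2})=0$, the $s$ roots near $0$ map to roots near $Q^{1/2}$ and the $r$ roots near $Q^{1/2}$ map to roots near $0$, so the complement is selected. I also need to check that the sheetwise formula \eqref{eq: pull-back} transforms compatibly, which follows from Möbius invariance of $\omega^{\mathsf{ext}}$ only through the analytic continuation argument above, so I do not need it separately.
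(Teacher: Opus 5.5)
Your proposal is correct and follows the paper's proof. You obtain the local germ identity from the rescaling symmetry of $Z$, extend it globally using the meromorphy of the stable differentials on the irreducible curve together with the identity theorem, use the preceding lemma for $\omega^{(0)}_2$, and handle the disk differential via $\iota^*\tilde Y=\tilde Y$. Your explicit check that $\iota$ sends the branch $c_S$ to $c^{(s,r)}_{S^c}$, and hence the distinguished point to the distinguished point, fills in what the paper leaves implicit. Drop the closing appeal to Möbius invariance of $\omega^{\mathsf{ext}}$: the auxiliary differentials do not transform simply under the swap, and your identity-theorem argument does not need it.
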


\begin{proof}
    The equality of the local germs follows from
    \eqref{eq:OV-rs-symmetry}. Since the stable
    differentials extend meromorphically to the
    augmentation curves, this identity extends globally.
    The disk differentials are identified under the same
    coordinate rescaling, and the preceding lemma
    establishes the global identification of the bidifferentials.
\end{proof}

Importantly, the differentials of the auxiliary curve
$\omega_n^{\mathsf{ext},(g)}$ do not transform by a
simple rescaling under the $r$-$s$ swap, whereas the
differentials on the true physical$=$augmentation curve
transform naturally under this exchange. The remaining
rescaling can be absorbed into the choice of coordinates
used in the recursion. It is therefore a matter
of convention.

\subsection{A topological recursion proof of the Aganagic--Vafa
conjecture for torus knots}

A very important application of topological recursion is the
computation of Gromov--Witten invariants of toric Calabi--Yau
threefolds from their mirror curves. This is known as the
Remodeling Conjecture \cite{Bouchard:2007ys}, which states
that ordinary topological recursion (with the standard
Bergman kernel) computes the topological string amplitudes
of toric Calabi--Yau threefolds. This conjecture was proved
in \cite{EynardOrantinBKMP,FangLiuZongBKMP}.

An important example is the resolved conifold, which acts
as a fundamental building block in the topological vertex
construction. The mirror curve of the resolved conifold
is also the starting point for the auxiliary curve,
which was obtained by a fractional $SL_2$-transformation
by Brini, Eynard, and Mari\~no \cite{BEM11}.

The augmentation curve considered here is not a standard
toric mirror curve. However, Aganagic and Vafa \cite{AV12}
proposed a relation as a knot-dependent mirror curve
from the resolved conifold through the large-$N$ transition.
In particular, they conjectured that the closed-string
amplitudes associated with these knot-dependent curves
coincide with those of the resolved conifold.
This conjecture concerns arbitrary knots in $S^3$.

It remained an open question whether topological recursion
on the augmentation curve (thus the knot dependent mirror curve understood as the physical curve) computes the corresponding
topological string amplitudes. In \cite{GJKS}, a modified
topological recursion for torus knots was proposed,
using a calibrated annulus kernel. The corresponding
topological recursion statement is proved above.

We now study the associated free energies, denoted by
$\omega_0^{(g)}$. To derive them, we first show that
the dilaton equation can be transferred from the auxiliary
curve to the augmentation curve.

\begin{proposition}\label{prop:dilaton}
    For $2g-2+n>0$, the following dilaton equation holds
    on the augmentation curve:
    \begin{align}
        &(2-2g-n)\,
        \omega_n^{(g)}(\xi_1,\ldots,\xi_n)
        =
        \frac{1}{\binom{r+s-2}{s-1}}
        \sum_{\rho}
        \operatorname*{Res}_{\xi'\to\rho}
        \left(
            \int_{\rho}^{\xi'}\omega_1^{(0)}
        \right)
        \omega_{n+1}^{(g)}
        (\xi',\xi_1,\ldots,\xi_n),
    \end{align}
    where $\xi_i$ are some local coordinates and the sum runs over all ramification points
    $\rho$ of $\tilde X$ over $V_1$ and $V_2$.

    In particular, for $g>1$, we define the free energy
    by
    \begin{align}
        \omega_0^{(g)}
        =
        \frac{1}{(2-2g)\binom{r+s-2}{s-1}}
        \sum_{\rho}
        \operatorname*{Res}_{\xi'\to\rho}
        \left(
            \int_{\rho}^{\xi'}\omega_1^{(0)}
        \right)
        \omega_1^{(g)}(\xi').
    \end{align}
    With the same normalization of the free energies,
    the free energy on the augmentation curve coincides
    with the free energy on the auxiliary curve:
    \begin{align}
        \omega_0^{(g)}
        =
        \omega_0^{\mathsf{ext},(g)}.
    \end{align}
\end{proposition}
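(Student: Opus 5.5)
The plan is to transfer the dilaton equation for the extended differentials on the auxiliary curve (known from~\cite{EynOra}, since by~\cite{DPSS-HOMFLY,DKPSS-HOMFLY} they satisfy topological recursion there) to the augmentation curve. We reuse the sheetwise pull-back formula~\eqref{eq: pull-back} and the local analysis at ramification points from the proof of the loop equations. On the auxiliary curve the dilaton equation reads
\begin{align*}
(2-2g-n)\,\omega_n^{\mathsf{ext},(g)}
=\sum_{a=1,2}\Res_{z'\to p_a}\Big(\int_{p_a}^{z'}y\,\frac{dX}{X}\Big)\,\omega_{n+1}^{\mathsf{ext},(g)}(z',z_1,\dots,z_n).
\end{align*}
For $g\ge2$ it is the definition (up to normalization) of $\omega_0^{\mathsf{ext},(g)}$.

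\textbf{Step 1: localize the residue.} Fix a ramification point $\rho$ over $V_1$, where the branches $c_{S'\sqcup\{1\}}$ and $c_{S'\sqcup\{2\}}$ collide. Near $\rho$, write $\omega_1^{(0)}$ and $\omega_{n+1}^{(g)}(\xi',\dots)$ in the first variable as $\tilde X^*$ of sums over $S'\sqcup\{1\}$ (or $\{2\}$, according to the local sheet). Then $\int_\rho^{\xi'}\omega_1^{(0)}$ is the sum of $\int y\,dX/X$ over the roots in $S'$ (regular at $\rho$) and the primitive along the colliding root $z_1$ (or $z_2$), which becomes $\int_{p_1}^{z'}y\,dX/X$ after the local isomorphism to the germ at $p_1$. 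Similarly, $\omega^{(g)}_{n+1}$ splits as the colliding-root term plus terms over $S'$ that are regular at $\rho$. In the product, the residue at $\rho$ can only come from the polar colliding term. It is multiplied either by the colliding primitive or by the regular $S'$-primitives; note the base point cancels the constant ambiguity.

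\textbf{Step 2: dispose of the cross terms.} The cross terms pair a regular function $f$ (the $S'$-primitives, which are even in the local coordinate $u$ with $\tilde X=V_1+u^2$, being functions of $\tilde X$) with the polar part of $\omega^{(g)}_{n+1}$ at $\rho$. By the linear loop equation this polar part is odd in $u$, so $\operatorname{Res} f\cdot\omega$ vanishes. This is the same parity argument as in the multiplicity remark after~\eqref{eq:projection-omega}. The remaining term reproduces exactly the auxiliary residue at $p_1$, pulled back through the colliding root. Summing over the $\binom{r+s-2}{s-1}$ ramification points over $V_1$, and likewise over $V_2$, gives $\binom{r+s-2}{s-1}$ copies of the auxiliary right-hand side. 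I expect the main obstacle to be the bookkeeping of how the remaining arguments $\xi_1,\dots,\xi_n$ are distributed over their sheets, and checking that the copies assemble into $\sum_{a_i\in S_i}$ of the auxiliary right-hand side. I would handle this by working with fixed branches $c_{S_i}$ for $\xi_i$, applying the auxiliary dilaton equation to each summand of~\eqref{eq: pull-back}, and using linearity. Dividing by the multiplicity and invoking the auxiliary dilaton equation together with~\eqref{eq: pull-back} then yields $(2-2g-n)\,\omega_n^{(g)}$.

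\textbf{Step 3: free energies.} For $n=0$, $g\ge2$, the same local computation applies with no remaining variables. The normalized sum of residues over the $\rho$ equals the auxiliary sum $\sum_a\Res_{p_a}\big(\int y\,dX/X\big)\,\omega_1^{\mathsf{ext},(g)}$, after matching the normalization of $dX/X$ versus $s^{-1}d\tilde X/\tilde X$ already absorbed in~\eqref{eq: pull-back}. Dividing by $2-2g$ gives $\omega_0^{(g)}=\omega_0^{\mathsf{ext},(g)}$.
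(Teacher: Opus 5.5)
Your proposal is correct and takes essentially the same route as the paper. You localize at each ramification point $\rho$ over $V_1$ (or $V_2$) via the pull-back formula~\eqref{eq: pull-back}, kill every term involving the spectator roots $a\in S'$ by the parity coming from the linear loop equation, identify the surviving term with the auxiliary residue at $p_1$ (independently of $S'$), and divide by the multiplicity $\binom{r+s-2}{s-1}$ before invoking the auxiliary dilaton equation, with the free-energy identity obtained by the same comparison applied to $\omega_1^{(g)}$.
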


\begin{proof}
    Consider an auxiliary ramification point $u$ over $V_1$,
    where the roots $z_1(\tilde X)$ and $z_2(\tilde X)$
    collide. As in the proof of the loop equations, the
    corresponding ramification points $\rho$ on $C$ are
    indexed by subsets
    $S'\subseteq\{3,\ldots,r+s\}$, $|S'|=s-1$,
    with local sheets $S'\sqcup\{1\}$ and $S'\sqcup\{2\}$.

    Let $q\sim c_{S'\sqcup\{1\}}$ and
    $\xi_i\sim c_{S_i}$. By the pullback formula
    \eqref{eq: pull-back}, we have
    \begin{align}
        \omega_1^{(0)}\big|_q
        &=
        \tilde X^*
        \left(
            \omega_1^{\mathsf{ext},(0)}
            \big|_{z_1(\tilde X)}
            +
            \sum_{b\in S'}
            \omega_1^{\mathsf{ext},(0)}
            \big|_{z_b(\tilde X)}
        \right),\\ \notag 
        \omega_{n+1}^{(g)}
        \big|_{q,\xi_1,\ldots,\xi_n}
        &=
        \tilde X^*\tilde X^*_1...\tilde X^*_n
        \sum_{\substack{a_i\in S_i\\i=1,\ldots,n}}
        \left(
            \omega_{n+1}^{\mathsf{ext},(g)}
            \big|_{z_1(\tilde X),z_{a_1},\ldots,z_{a_n}}
            +
            \sum_{b\in S'}
            \omega_{n+1}^{\mathsf{ext},(g)}
            \big|_{z_b(\tilde X),z_{a_1},\ldots,z_{a_n}}
        \right).
    \end{align}

    Choose a local coordinate $t$ at $\rho$ such that
    $\tilde X=V_1+t^2$. All spectator differentials are
    regular in $\tilde X$, so their primitives are
    holomorphic functions of $t^2$ vanishing at $t=0$.
    By the linear loop equations for the extended
    differentials, the polar part of the active stable
    differential has an even Laurent coefficient in $t$.
    Consequently, every term containing a spectator
    contribution has zero residue. Thus
    \begin{align}
        &\Res_{q\to\rho}
        \left(\int_\rho^q\omega_1^{(0)}\right)
        \omega_{n+1}^{(g)}(q,\xi_1,\ldots,\xi_n)
        =
        \tilde X^*_1...\tilde X^*_n
        \sum_{\substack{a_i\in S_i\\i=1,\ldots,n}}
        \Res_{z\to u}
        \left(\int_u^z\omega_1^{\mathsf{ext},(0)}\right)
        \omega_{n+1}^{\mathsf{ext},(g)}
        (z,z_{a_1},\ldots,z_{a_n}).
    \end{align}
    The right-hand side is independent of $S'$.
    Since there are exactly
    $\binom{r+s-2}{s-1}$ such ramification points over
    each $V_j$, summing over both $V_1$ and $V_2$
    and applying the auxiliary dilaton equation gives
    \begin{align}
        \frac{1}{\binom{r+s-2}{s-1}}
        \sum_{\rho}\Res_{q\to\rho}
        \left(\int_\rho^q\omega_1^{(0)}\right)
        \omega_{n+1}^{(g)}(q,\xi_1,\ldots,\xi_n)
        &=
        (2-2g-n)
        \tilde X^*_1...\tilde X^*_n
        \sum_{\substack{a_i\in S_i\\i=1,\ldots,n}}
        \omega_n^{\mathsf{ext},(g)}
        (z_{a_1},\ldots,z_{a_n})\\ \notag 
        & =
        (2-2g-n)\omega_n^{(g)}
        (\xi_1,\ldots,\xi_n).
    \end{align}
    The argument over $V_2$ is identical.

    Finally, applying the same residue comparison to
    $\omega_1^{(g)}$ for $g>1$ yields
    \begin{align}
        \omega_0^{(g)}
        =
        \frac{1}{2-2g}
        \sum_{u\in\{p_1,p_2\}}
        \Res_{z\to u}
        \left(\int_u^z\omega_1^{\mathsf{ext},(0)}\right)
        \omega_1^{\mathsf{ext},(g)}(z)
        =
        \omega_0^{\mathsf{ext},(g)}.
    \end{align}
\end{proof}

The closed-string free energies of the resolved conifold
are well known. They were computed in physics
\cite{GV98Conifold} and mathematically using
Gromov--Witten theory \cite{FaberPandharipande2000}.
Related results have been rederived and generalized to
further spectral curves via $x$--$y$ duality in
topological recursion
\cite{Hock:2022wer,Hock:2022pbw,Alexandrov2025XYSwap}
and its extension to logarithmic topological recursion
\cite{Hock-qDiLog,ABDKS-LogTR} in
\cite{Hock:2025wlm,Banerjee:2025qgx,Hock:2026qrp}. In particular, those articles prove independence of the framing, thus independence of $r,s$ under the $SL_2$ transformation used in \cite{BEM11}.


Thus, using the known expression for the auxiliary free
energy, we obtain for $g>1$
\begin{align}\label{freeenergy}
    \omega_{0}^{\mathsf{ext},(g)}
    =
    \frac{B_{2g}B_{2g-2}}
    {2g(2g-2)(2g-2)!}
    +
    \frac{B_{2g}\mathrm{Li}_{3-2g}(Q)}
    {2g(2g-2)!}.
\end{align}
Here $B_k$ are the Bernoulli numbers, defined by $t/(e^t-1)=\sum_{k\geq0}B_k t^k/k!$, and $\mathrm{Li}_{3-2g}(Q)$ denotes the rational function whose
Taylor expansion at $Q=0$ is $\sum_{d\geq1}d^{2g-3}Q^d$. Combining this with Proposition~\ref{prop:dilaton},
we conclude:

\begin{corollary}\label{cor:freeenergy}
    The free energy $\omega_0^{(g)}$ on the torus-knot
    augmentation curve is given by \eqref{freeenergy}.
    In particular, it coincides with the free energy
    of the resolved conifold.
\end{corollary}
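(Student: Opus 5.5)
The corollary is a direct combination of Proposition~\ref{prop:dilaton} with the known closed formula for the auxiliary free energies. The plan has three steps.

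\emph{Step 1: reduce to the auxiliary curve.} For $g>1$, take the definition of $\omega_0^{(g)}$ given in Proposition~\ref{prop:dilaton}, namely the normalized sum of residues over the ramification points $\rho$ of $\tilde X$ over $V_1,V_2$. The last assertion of that proposition identifies it with $\omega_0^{\mathsf{ext},(g)}$. The ingredients behind this are already in the paper: the pull-back formula~\eqref{eq: pull-back}, the vanishing of spectator residues by parity of the polar parts (linear loop equations), and the count of $\binom{r+s-2}{s-1}$ ramification points over each of $V_1$ and $V_2$. So it remains only to evaluate $\omega_0^{\mathsf{ext},(g)}$.

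\emph{Step 2: evaluate the auxiliary free energy.} Here $\omega_0^{\mathsf{ext},(g)}$ is the free energy defined through the dilaton equation on the auxiliary curve $(X,y)$ of~\eqref{eq:X}--\eqref{eq:y}. This curve is obtained from the mirror curve of the resolved conifold by the fractional $SL_2$ (framing-type) transformation of~\cite{BEM11}.
\begin{itemize}
\item I would invoke the framing independence of the topological recursion free energies under such transformations, established via $x$--$y$ duality and logarithmic topological recursion in the cited works~\cite{Hock:2025wlm,Banerjee:2025qgx,Hock:2026qrp}. This shows that $\omega_0^{\mathsf{ext},(g)}$ does not depend on $r,s$.
\item One may therefore specialize to the framed conifold mirror curve. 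There the Remodeling Conjecture~\cite{EynardOrantinBKMP,FangLiuZongBKMP} identifies the free energies with the Gromov--Witten free energies of the resolved conifold.
\item These are given by the Faber--Pandharipande / Gopakumar--Vafa formula~\cite{FaberPandharipande2000,GV98Conifold}. Its constant-map part is $\frac{B_{2g}B_{2g-2}}{2g(2g-2)(2g-2)!}$ and its degree-$d$ part sums to $\frac{B_{2g}}{2g(2g-2)!}\mathrm{Li}_{3-2g}(Q)$, which reproduces~\eqref{freeenergy}.
\end{itemize}

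\emph{Main obstacle: normalization.} The delicate point is making sure that the normalization of $\omega_0^{\mathsf{ext},(g)}$ in Proposition~\ref{prop:dilaton} agrees with the one used in the cited results. This concerns the factor $\frac{1}{2-2g}$, the choice of $\omega_1^{\mathsf{ext},(0)}=y\,dx$ with $x=\log X$, and the base point of the primitive at the ramification point. In particular, the constant-map term depends on conventions for the logarithmic parts of $y$. I would check this agreement by comparing the genus-two value on the auxiliary curve in the simplest case $r=s=1$ against the known value $\frac{B_4B_2}{4\cdot2\cdot2!}+\frac{B_4}{4\cdot 2!}\mathrm{Li}_{-1}(Q)$, and rely on the cited framing-independence results for the general statement. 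With the normalization fixed, the corollary follows immediately: $\omega_0^{(g)}=\omega_0^{\mathsf{ext},(g)}$, and the latter equals~\eqref{freeenergy}.
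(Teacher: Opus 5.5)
Your proposal is correct and follows the paper's own argument. Proposition~\ref{prop:dilaton} gives $\omega_0^{(g)}=\omega_0^{\mathsf{ext},(g)}$, and the paper then simply quotes \eqref{freeenergy} for the auxiliary free energy from the cited framing-independence works ($x$--$y$ duality and logarithmic topological recursion). Your detour through the Remodeling theorem at $r=s=1$ is therefore an optional extra justification of the same citation.

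One minor point for your normalization check: in the standard Gromov--Witten/Gopakumar--Vafa normalization, the conifold free energy is $(-1)^{g-1}$ times \eqref{freeenergy}, which is the usual $g_s=i\hbar$ convention. This is already visible at $g=2$, where the Gopakumar--Vafa degree term is $+\frac{1}{240}\mathrm{Li}_{-1}(Q)$ while \eqref{freeenergy} gives $-\frac{1}{240}\mathrm{Li}_{-1}(Q)$, so ``coincides'' should be read modulo this convention.
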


\begin{remark}
    Corollary~\ref{cor:freeenergy} proves the
    topological recursion version of the Aganagic--Vafa
    conjecture for all torus knots stated in \cite{AV12}.
\end{remark}

\printbibliography

\end{document}